\documentclass[11pt,papersize={8.5in,11in}]{article}

\usepackage{amssymb,amsmath,amsthm}
\usepackage[left=2.54cm, right=2.54cm]{geometry}

\usepackage[algosection,algoruled,vlined,nofillcomment,linesnumbered]{algorithm2e}
\SetEndCharOfAlgoLine{}
\usepackage[shortlabels]{enumitem}
\usepackage[colorlinks=true,citecolor=blue,urlcolor=blue,linkcolor=blue,bookmarksopen=true,unicode=true,pdffitwindow=true]{hyperref}
\usepackage{todonotes}
\renewcommand{\O}[1]{\ensuremath{\mathcal{O}{\left( #1 \right)}}}

{{\bfseries #1} {\itshape (#2)}.}{}

\newcommand{\set}[1]{\ensuremath{\left\{#1 \right\}}}

\newtheorem{theorem}{Theorem}[section]

\newtheorem{lemma}[theorem]{Lemma}
\newtheorem{proposition}[theorem]{Proposition}
\newtheorem{definition}[theorem]{Definition}
\newtheorem{remark}[theorem]{Remark}
\newtheorem{example}[theorem]{Example}

\SetKw{KwGoTo}{go to}
\SetKwFunction{Enumerate}{Enumerate}
\SetKwFunction{Visit}{Visit}
\SetKwFunction{Ready}{MakeReady}
\SetKwFunction{Trim}{Trim}
\SetKwFunction{Main}{Main}
\SetKwProg{myproc}{Procedure}{}{}

\newcommand{\dcup}{\operatorname{\dot{\cup}}}
\newcommand{\alg}{\mathcal{A}}
\newcommand{\halg}{\hat{\alg}}
\newcommand{\flip}{\mathcal{F}}
\newcommand{\id}{\mathcal{I}}

\newcommand{\dittotikz}{%
    \tikz{
        \draw [line width=0.12ex] (-0.2ex,0) -- +(0,0.8ex)
            (0.2ex,0) -- +(0,0.8ex);
        \draw [line width=0.08ex] (-0.6ex,0.4ex) -- +(-1.5em,0)
            (0.6ex,0.4ex) -- +(1.5em,0);
    }%
}

\title{Constant delay Gray code enumeration of ideals and antichains in posets}
\author{Sofia Brenner\footnote{Department of Mathematics and Natural Sciences, University of Kassel, Germany.\\ Present address: Department of Applied Mathematics, Charles University, Prague, Czech Republic.\\ E-mail address: \texttt{sbrenner@mathematik.uni-kassel.de}} \ and Ji\v{r}í Fink\footnote{Department of Theoretical Computer Science and Mathematical Logic, Charles University, Prague, Czech Republic. E-mail address: \texttt{fink@ktiml.mff.cuni.cz}}}
\date{}

\begin{document}
	\maketitle
	\begin{abstract}
		We present an algorithm that enumerates all ideals of an input poset with constant delay in Gray code order, i.e., such that consecutively visited ideals differ in at most three elements.
		This answers a long-standing open problem posed by Pruesse and Ruskey, and improves upon previous algorithms by Pruesse and Ruskey, Squire, Habib, Medina, Nourine and Steiner, as well as Abdo. 
		Using the same techniques, we also obtain an algorithm that enumerates all antichains of an input poset with constant delay such that successively visited antichains differ in at most three elements. 
		
	As a key technical ingredient, we introduce a new potential-based analysis framework for recursive algorithms, which we call the \emph{Pyramid method}. 
	We show that this method subsumes the Push-out method of Uno. 
	Beyond the present application, the Pyramid method is a general framework to analyze recursive algorithms and may thus be of independent interest.
	\end{abstract}
	
	\section{Introduction}
		
	\subsection{Combinatorial enumeration}
	
	The aim of an \emph{enumeration algorithm} is to list all objects in a (finite) set $X$ exactly once.
	The set $X$ is usually given by an implicit description and the task of the algorithm is to compute its elements.  
	Clearly, the goal are algorithms that enumerate $X$ efficiently while requiring small amounts of memory.
	Our focus lies on enumerating combinatorial objects, such as substructures of graphs, posets, or subsets of permutations. 
	
	The enumeration of combinatorial structures has a long history. 
	The most basic algorithm to enumerate a set $X$ identifies~$X$ with the set of all strings of length $n$ over an alphabet $\Sigma$. It non-deterministically chooses a letter of $\Sigma$ for all $n$ positions and accepts the resulting string if it belongs to $X$.
	An early paper by Floyd \cite{floyd1967nondeterministic} converts this idea into a deterministic backtracking algorithm, using pruning techniques and a strategy for choosing the right string to improve the performance. 
	Read and Tarjan~\cite{read1975bounds} enumerated graph substructures, such as spanning trees, cycles, and various kinds of paths. 
	Their method was later referred to as the \emph{binary partition method} or \emph{Flashlight search} (for instance, see~\cite{merino2024traversing}).
		Fukuda and Matsui \cite{fukuda1994finding} enumerated all perfect matchings in bipartite graphs in time $\O{m|X|}$ and space $\O{nm}$.
		Avis and Fukuda~\cite{avis1996reverse} introduced the concept of \emph{reversed search}. 
		They apply it to enumerate all triangulations of a set of~$n$ points in the plane, spanning trees of a graph, and topological orderings of an acyclic graph.
		Recent work includes the \emph{proximity search} framework by Conte, Grossi, Marino, Uno, and Versari~\cite{CON19,DBLP:journals/siamcomp/ConteGMUV22}, which they use to enumerate, for instance, maximal subgraphs or maximal trees in a given graph.
		Merino and M\"utze~\cite{merino2024traversing} introduced a framework to list all vertices of $0/1$-polytopes via combinatorial optimization. 
		This yields enumeration algorithms for numerous combinatorial objects, including bases and independent sets in a matroid, spanning trees, forests,
		matchings, or vertex covers.
		Fink~\cite{fink2025constant} derived a constant delay algorithm for listing perfect matchings.
	
	For many enumeration algorithms, especially in combinatorial enumeration, the size of the enumerated set $X$ is exponential in the input size.
One thus rather measures the computation time spent between two successively visited elements. 
		An \emph{instruction} is an elementary operation executed in a constant time in a computation model, such as a deterministic random access machine or pointer machine.
	Let $t(i)$ be the number of instructions executed before the $i$-th element is visited.
	The \emph{(worst-case) delay} of an enumeration algorithm is the maximal time spent between two successively visited elements, i.e., $\max_{i \in \{1, \dots, |X|-1\}}\{t(i+1) - t(i)\}$. 
	Clearly, the optimal outcome are algorithms with constant delay. 
	These algorithms are often called \emph{loopless}.
	As in many applications, the time between two outputs differs drastically, one often considers an average. 
	The \emph{amortized delay} of an enumeration algorithm is the average computation time spent between two successive outputs, i.e., the total computation time of the algorithm divided by the number of its outputs. 
	Note that some authors call this the \emph{average delay}, and define the amortized delay of an algorithm as $\max_{i \in \{1, \dots, |X|\}} t(i)/i$. 
    In this paper, we will also state bounds for this cardinality.

	In many situations, one requires that successive outputs of an enumeration algorithm differ only by a small change, a so-called \emph{flip}.
	The definition of a flip depends on the problem.  
	A listing of the enumerated elements respecting this condition is called a \emph{combinatorial Gray code}. 
	A combinatorial generation problem on a set $X$ can be described via its \emph{flip graph}. 
	The vertices of the flip graph are the elements of $X$, and two elements are connected whenever they differ by a flip. 
	Finding a combinatorial Gray code thus amounts to finding a Hamiltonian path in the flip graph.
	The classical \emph{binary reflected Gray code}~\cite{gray_1953} lists all binary strings of a fixed length~$n$ such that each two consecutive strings differ in a single bit. 
	The \emph{Steinhaus-Johnson-Trotter algorithm}~\cite{DBLP:journals/cacm/Trotter62,MR0159764,MR0157881} lists permutations of length~$n$ such that two successively listed permutations differ in a transposition of adjacent values. 
	In recent years, the area of combinatorial Gray codes has seen a flurry of results. 
	A main line of research is the \emph{permutation language framework} introduced by M\"utze and collaborators (see~\cite{HAR20}).
	They proved the existence of combinatorial Gray codes for numerous classes of combinatorial objects, such as certain classes of pattern-avoiding permutations~\cite{HAR20}, quotients of the weak order~\cite{HOA21}, rectangulations~\cite{MER23}, elimination trees~\cite{cardinal2024combinatorial}, acyclic orientations~\cite{DBLP:conf/soda/CardinalHMM23}, binary trees~\cite{DBLP:journals/ejc/GregorMN24}, or graphs of regions of hyperplane arrangements~\cite{brenner2026}. 
	In many cases, the construction of the Gray codes is explicit and can be turned into an algorithm, but not in all cases a loopless algorithm is known.
	We refer to M\"utze's survey~\cite{mutze2022combinatorial} for a comprehensive overview on combinatorial Gray codes.

	\subsection{Enumeration algorithms for posets}

	A \emph{partially ordered set (poset)} is a pair $(P, \preceq)$ consisting of a set $P$ and a \emph{partial order} $\preceq$, that is, a reflexive, antisymmetric, transitive relation, on $P$.
	In this paper, we only consider finite posets. 
	If the relation $\preceq$ is understood from the context, we simply write $P$ for the poset $(P, \preceq)$.
	For $u,v \in P$ with $u \preceq v$ and $u \neq v$, we write $u \prec v$ and say that $u$ is \emph{smaller} than~$v$, or that $v$ is \emph{larger} than~$u$. 
	The elements $u$ and $v$ are \emph{incomparable} if neither $u \preceq v$ nor $v \preceq u$ holds. 
	We then call the set $\{u,v\}$ an \emph{incomparable pair}.
	We say that $v$ \emph{covers} $u$, or that $u$ and~$v$ are in a \emph{cover relation}, if $u \prec v$ and there is no $z \in P$ with $u \prec z \prec v$. 
	A useful tool to represent a poset is its \emph{Hasse diagram}. 
	This is a graph drawn in the plane whose vertices are the elements of~$P$. 
	If $v$ covers $u$, we connect $u$ and $v$ by an edge, and draw $v$ higher than $u$. 
	See Figure~\ref{fig:piexample}\,(a) for an illustration.
	The \emph{upset} and \emph{downset} of $x \in P$ consist of all elements $u \in P$ with $x \preceq u$ or $u \preceq x$, respectively.
	A subset $I$ of $P$ is an \emph{ideal} if if it is \emph{downward-closed}, i.e., for all $u \in I$, the downset of $u$ is contained in $I$.
	The set of ideals of $P$ will be denoted by $\id(P)$.
	A \emph{chain} in~$P$ is a subset consisting of pairwise comparable elements of $P$,
	an \emph{antichain} is a subset consisting of pairwise incomparable elements.
	An antichain with precisely~$k$ elements is called a \emph{$k$-antichain}.
	
Posets play a prominent role both in combinatorics and algorithms, and hence their algorithmic aspects have been widely studied. 
For instance, it was shown that various counting problems in posets are \#P-complete, such as counting the number of antichains~\cite{provan_complexity_1983} (and thus ideals, since every poset contains the same number of ideals and antichains), or linear extensions of a given poset~\cite{brightwell1991counting}. 
Pruesse and Ruskey~\cite{pruesse1994generating} derived an algorithm to enumerate linear extensions of an input poset such that successively enumerated extensions differ by one or two adjacent transpositions. 
This was turned into a loopless algorithm by Canfield and Williamson~\cite{CAN95}. 

In contrast, the loopless enumeration of ideals has been a long-standing open problem. 
We refer to the discussions of this question in \cite{pruesse1993gray,pruesse1994generating, habib2001,mutze2022combinatorial,powers2025} for an overview. 
The fastest enumeration algorithms for ideals in posets, either listing in Gray code order or in arbitrary order, have logarithmic delay.
For $k \in \mathbb{N}$, we say that a listing is a \emph{$k$-Gray code} if two consecutively visited ideals differ in at most $k$ elements (i.e, their symmetric difference has size at most $k$).
In the literature, the usual definition of a Gray code order is a 2-Gray code, i.e, successively listed ideals differ in at most two elements. 
In other words, one ideal arises from the other by removing, adding, or exchanging a single element. 
In our algorithm, we derive a 3-Gray code, i.e, successively visited ideals may differ in at most three elements.
Note that this does not change the asymptotic time complexity of the algorithms.

Early algorithms~\cite{lawler1979,schrage1987,provan_complexity_1983} list the ideals of a poset of size $n$ with delay $\O{n^2}$. 
Probably the first algorithm with average delay $\O{n}$ was given by Steiner~\cite{steiner1986}.
Pruesse and Ruskey~\cite{pruesse1993gray} obtain an algorithm for a (2-)Gray code listing that runs in time $\O{|\id(P)| n}$ on general posets, using a more general framework for antimatroids.
However, their algorithm lists every element twice.
Medina and Nourine~\cite{medina1994} gave an algorithm with amortized delay $\O{d^+(P)}$, where $d^+(P)$ denotes the maximal out-degree in the Hasse diagram of $P$.
Squire~\cite{squire1995} (see also~\cite{powers2025}) gives an algorithm with average delay $\O{\log n}$, however not listing the elements in Gray code order. 
Habib, Medina, Nourine and Steiner~\cite{habib2001} present a different algorithm as part of their framework to enumerate distributive lattices, which lists the ideals with amortized delay $\O{n}$ in Gray code order.
Further algorithms to enumerate ideals in Gray code order with amortized delay $\O{n}$ were given by Abdo~\cite{abdo2009,abdo2010,abdo2013}.
Merino and M\"utze~\cite{DBLP:conf/focs/MerinoM23,merino2024traversing} mention that their much more general framework to enumerate the vertices of $0/1$-polytopes yields an enumeration algorithm for poset ideals as a special case. 
However, this algorithm has delay $\O{n^4 \log n}$.
For antichains, their framework also yields an algorithm, albeit with delay $\O{n^{2.5} \log n}$.
Constant average delay algorithms are known for special poset classes, such as forest posets~\cite{koda1993} or interval posets~\cite{habib1997}.

Many of the above-mentioned algorithms, for instance those by Squire and Abdo, rely on a recursive strategy that splits the ideals of the input poset $P$ into two subsets. 
One consists of the ideals which contain a fixed element $x \in P$ (and thus also its downset), and the other of the ideals not containing $x$ (and thus none of the elements of its upset). 
The main difference lies in the choice of the element $x$. 
If the input poset is a chain, it is easy to see that this strategy yields at least logarithmic delay. 
In order to obtain a constant delay algorithm, a different recursive strategy is therefore needed.

We conclude this part by mentioning a variant of the above problem, the generation of ideals of a fixed size in a poset. 
Here, sucessively visited ideals differ in a swap of an element. 
This variant has attracted considerable attention over the years. 
In~\cite{koda1993}, it was remarked without proof that their algorithm generating all ideals in forest posets can also be used to generate ideals of a fixed size. 
In~\cite{pruesse1993gray}, this was then studied explicitly. 
Wild~\cite{wild2014} gave an algorithm with $\O{n^3}$ delay.
This was recently improved by Coumes, Bouadi, Nourine, and Termier~\cite{coumes2021}, who used this to enumerate so-called \emph{skyline groups} used in decision-making. 
They showed that these correspond to ideals of fixed size in certain posets. 
In this paper, they gave an algorithm that visits all ideals of size $k$ in a poset with delay $\O{\omega^2}$, where $\omega$ denotes the \emph{width} of the input poset.
For further applications of the enumeration of ideals of fixed size, such as for codes and covering arrays, we refer to Powers' PhD thesis~\cite{powers2025}.

	\subsection{Our results}

	The aim of this paper is to provide an algorithm for the enumeration of poset ideals as well as antichains with constant amortized delay.
	This paper has two parts: in the first part, we develop a method for potential analysis, the \emph{Pyramid condition}, to analyze the amortized time complexity of enumeration algorithms.
	In the second part, we describe an enumeration algorithm for ideals in a poset in $3$-Gray code order. 
	We use the Pyramid condition to show that it has constant delay. 
	Using the same technique, we also obtain a constant delay algorithm for enumerating antichains. 
		
	\subsubsection{Complexity analysis and the Pyramid condition}
	
	We first develop a new variant of the potential analysis, which we call the \emph{Pyramid method}. 
	Our main interest is to analyze combinatorial enumeration algorithms. 
	However, the concept is not specifically tailored to this situation. 
	The method exploits the idea that in a recursion tree of an enumeration algorithm, the leaves correspond to iterations generating output in quick succession while requiring only small computation time whereas the opposite occurs for the inner nodes.
	The idea of the amortization is thus to use the excess computation time of the leaves for the computations in nodes higher in the tree. 
	To this end, we use a potential measuring the computation time that a child node can spare for its parent.  
	An illustration for a simple algorithm enumerating perfect matchings in an input graph is given in Figure~\ref{fig:recursiontree}.
		
	For an iteration $X$, let $T(X)$ denote the number of instructions needed for the execution of iteration $X$, $C(X)$ denotes the set of iterations recursively called by $X$ (i.e., the \emph{children} of $X$), and $X'$ denotes the set of elements directly visited by $X$.
	We refer to Section~\ref{sec:potentialanalysis} for full definitions. 
	The above idea is captured by the following definition:
	
	\begin{definition}[Pyramid condition]\label{def:pyramidconditioninto}
			We say that a recursive algorithm $\alg$ satisfies the \emph{Pyramid condition} with respect to a potential function $\Phi$ and time $T^\star$ if there exists a constant $\mu > 0$ such that for every iteration $X$ of~$\alg$, it holds that \[\sum_{Y \in C(X)} \Phi(Y) + \mu |X'| - \Phi(X) \ge \frac{T(X)}{T^\star}.\]
	\end{definition}

Let $\Delta$ be an upper bound for the number of instructions on a path from the root to a leaf in the recursion tree. 
Our main result in this part is the following:
	
	\begin{theorem} \label{thm:pyramid_intro}
	Let $\mathcal{A}$ be a recursive algorithm enumerating a set $X$ and assume that $\mathcal{A}$ satisfies the Pyramid condition for a potential function $\Phi$ and time $T^\star$. 
	The total computation time to enumerate $X$ is $\O{|X| T^\star - \Phi(X)}$, so the amortized delay of $\alg$ is $\O{T^\star}$.
	The first~$k$ elements of~$X$ are visited in time $\O{k T^\star + \Delta}$. 
	\end{theorem}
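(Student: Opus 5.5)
We prove the first claim by summing the Pyramid condition over all iterations of the recursion tree of $\alg$ and telescoping the potentials. Write $\mathcal{T}$ for the set of all iterations and $R$ for the root, so that $\Phi(X)$ in the statement means $\Phi(R)$. Every non-root iteration $Y$ is the child of exactly one iteration. Hence summing $\sum_{Y\in C(X)}\Phi(Y) - \Phi(X)$ over all $X \in \mathcal{T}$ leaves only $-\Phi(R)$. Every element of $X$ is visited exactly once, directly by exactly one iteration, so $\sum_{X\in\mathcal{T}} |X'| = |X|$. Summing the inequality of Definition~\ref{def:pyramidconditioninto} therefore gives
\[\frac{1}{T^\star}\sum_{X\in\mathcal{T}} T(X) \le \mu|X| - \Phi(R).\]
The total running time is $\sum_X T(X)$, up to the constant overhead of calls, which is absorbed since each call costs at least one instruction. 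It is thus at most $T^\star(\mu|X| - \Phi(R)) = \O{|X|T^\star - \Phi(R)}$, where $\Phi(R)$ is understood scaled by $T^\star$ as in the statement. This requires $\Phi(R)\le 0$, or more generally $-\Phi(R) = \O{|X|}$; I expect the paper's setup in Section~\ref{sec:potentialanalysis} to fix the sign convention so that this holds, and I will invoke it there. Dividing by $|X|$ gives amortized delay $\O{T^\star}$.

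For the prefix bound, fix $k$ and stop the execution right after the $k$-th element is visited. Split the iterations started so far into \emph{completed} ones, whose subtree has been fully executed, and \emph{active} ones. The active iterations form a single root-to-node path in the recursion tree, since the algorithm is a depth-first recursion. Their work done so far is at most the total number of instructions along that path, hence at most $\Delta$.

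It remains to bound the completed work, and I will do this by applying the telescoping argument separately to each maximal completed subtree. These are the subtrees rooted at completed children of active nodes. For such a subtree rooted at $Z$, summing the Pyramid condition over its iterations gives work at most $T^\star(\mu k_Z - \Phi(Z))$, where $k_Z$ is the number of elements visited inside it. The numbers $k_Z$ sum to at most $k$.

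The main obstacle is controlling the boundary terms $-\Phi(Z)$: there may be many such $Z$, and a positive potential on children does not obviously help. I would handle this by assuming the standard convention that $\Phi \ge 0$ on non-root iterations, or more precisely $\Phi(Y)\ge 0$ whenever $Y$ is a child. Then $-\Phi(Z)\le 0$, and the completed work is at most $\mu k T^\star$. If instead $\Phi$ may be negative but bounded below by $-c$, each completed subtree visits at least one element whenever $\Phi(Z)<0$ would matter. I would argue, using that every leaf visits something, that the number of such $Z$ is at most $k$ plus the path length, giving an extra $\O{kT^\star+\Delta}$. Combining the completed and active parts gives the bound $\O{kT^\star + \Delta}$.
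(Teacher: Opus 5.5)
Your proposal is correct and follows the paper's argument. The telescoping sum is what the paper's induction in Lemma~\ref{lem:subtree_time} proves subtree by subtree, and your prefix bound (the active root-to-node path costing at most $\Delta$, plus the completed subtrees hanging off it, each costing at most $\mu T^\star k_Z$) is the second part of Lemma~\ref{lem:gap}. The sign question you hedge on is settled by the paper's convention that $\Phi(X) > 0$ for every iteration, so all $-\Phi$ terms can simply be dropped; note that your aside ``this requires $\Phi(R)\le 0$'' has the sign backwards, since it is $\Phi(R)\ge 0$ that makes $-\Phi(R) = \O{|X|}$ trivial.
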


In the time complexity to enumerate the first $k$ elements in Theorem~\ref{thm:pyramid_intro}, the additive constant~$\Delta$ can be viewed as a preprocessing time needed at the beginning, whereas each visited element requires~$T^\star$ additional time.

We now turn from amortized delay to worst-case delay analysis.
It is well-known that an algorithm with amortized delay~$T^\star$ can be turned into an algorithm with worst-case delay~$T^\star$, but at the possible price of exponential memory consumption.
Using an idea that was originally presented by Uno in an unpublished technical report~\cite{uno2003}, we obtain the following de-amortization result for algorithms satisfying the Pyramid condition:

\begin{theorem}[Worst-case delay] \label{thm:loopless_intro}
Let $\alg$ be a recursive algorithm enumerating a set $X$ that satisfies the Pyramid condition with respect to a potential $\Phi$ and time $T^\star$. 
Assume that the change between two successively visited elements of $\alg$ can be stored using $\O{1}$ memory and that these changes can be applied in time $\O{T^\star}$. 
Then there exists an Algorithm $\halg$ which enumerates $X$ in the same order as $\alg$ with worst-case delay $\O{T^\star}$, while requiring $\O{\Delta/T^\star}$ additional space and $\O{\Delta}$ additional time for the initialization.
\end{theorem}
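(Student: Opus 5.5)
We prove Theorem~\ref{thm:loopless_intro} with Uno's buffering idea. The algorithm $\halg$ runs $\alg$ as a coroutine. Whenever $\alg$ visits an element, $\halg$ does not output it. Instead it appends the $\O{1}$-size change to a FIFO queue $Q$. $\halg$ then proceeds in rounds. In each round it executes $cT^\star$ instructions of $\alg$ for a suitable constant $c$, then dequeues one change from $Q$, applies it in time $\O{T^\star}$ to its own copy of the current element, and outputs that element. The order of output is therefore exactly the order of $\alg$. What remains is to show two things: $Q$ is never empty when a round needs to output, except after $\alg$ has finished, and $Q$ never holds more than $\O{\Delta/T^\star}$ changes. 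Before the first round, an initialization phase runs $\alg$ for $\O{\Delta}$ instructions, that is, long enough to fill $Q$ with about $\Delta/T^\star$ changes.

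\textbf{Step 1: $Q$ never runs empty.} We use the prefix bound of Theorem~\ref{thm:pyramid_intro}: the first $k$ elements are visited within $a(kT^\star+\Delta)$ instructions for some constant $a$. Suppose $\halg$ is about to output its $k$-th element. At that point $\alg$ has run $t_0 + (k-1)cT^\star$ instructions, where $t_0$ is the length of the initialization. Choose $t_0 \ge a\Delta$ and $c \ge a$ (slightly adjusted so the $+T^\star$ slack is covered). Then $\alg$ has already produced its $k$-th element, so the queue is nonempty. Once $\alg$ terminates, $\halg$ simply drains $Q$ at one output per $\O{T^\star}$ time. Hence the worst-case delay is $\O{T^\star}$.

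\textbf{Step 2: $Q$ stays small.} This is the main obstacle. Theorem~\ref{thm:pyramid_intro} only bounds visits from above in time. It does not prevent $\alg$ from emitting many elements in a burst, for example a stretch of leaves visited in quick succession, and such a burst would inflate $Q$. We resolve this by a modification: when $|Q|$ reaches a threshold $\Theta(\Delta/T^\star)$, $\halg$ pauses $\alg$ for the rest of that round and only outputs. This caps $|Q|$ by the threshold.

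The risk is that a pause breaks Step~1, since $\alg$ falls behind the schedule $t_0+(k-1)cT^\star$. To handle this, we redo the Step~1 argument starting from the last moment $\alg$ was paused. At that moment $Q$ holds $\Theta(\Delta/T^\star)$ elements. We then need the prefix bound again, but starting from an arbitrary point of the execution rather than from the root.

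For this we prove a generalized version of Theorem~\ref{thm:pyramid_intro} by the same potential argument. Starting from any state, the next $k$ visits occur within $\O{kT^\star+\Delta}$ instructions. In the Pyramid inequality, the unfinished iterations on the current root-to-leaf path contribute at most $\Delta$ instructions in total. All completed or future subtrees are charged via $\Phi$ and $\mu|X'|$, exactly as in the original proof. I expect establishing this "restart" form of the amortization to be the delicate point, together with bounding how potential stored on the active path is accounted for. Once it holds, choosing the threshold as a sufficiently large multiple of $\Delta/T^\star$ makes the buffered elements cover the $\O{\Delta}$ catch-up time after every pause.

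\textbf{Space and initialization.} The only additional memory is $Q$, which holds $\O{\Delta/T^\star}$ changes of $\O{1}$ size each, plus the copy of the current element. Coroutine execution reuses $\alg$'s own stack. The initialization costs $\O{\Delta}$ time, which gives exactly the claimed bounds.
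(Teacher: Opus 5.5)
Your proposal is correct and is essentially the paper's argument: a FIFO buffer of $\Theta(\Delta/T^\star)$ changes, a $\Theta(\Delta)$ warm-up, one output per $\Theta(T^\star)$ instructions of $\alg$, and a case split on whether, and when last, the buffer was full, using the prefix bound in the first case and the between-visits bound in the second. The differences are minor: on overflow you pause $\alg$, whereas the paper emits an extra output; and the ``restart'' bound you plan to prove is already the first part of Lemma~\ref{lem:gap}, which applies directly because a pause always happens right at a visit.
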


Whereas there is no general upper bound for $\Delta$, in many applications, the depth of the recursion tree as well as the number of instructions along a longest path from a root to the leaf is polynomially bounded in the input size.
In this case, both the additional space requirement as well as the initialization time for the algorithm $\alg'$ from Theorem~\ref{thm:loopless_intro} are polynomial in the input size.
As a typical example, consider an easy enumeration algorithm, for instance for spanning trees, that in each iteration chooses an object of the input graph, such as an edge to be included or excluded. 
Clearly, both the depth of this recursion tree and the number of iterations on a path from the root to a leaf are at most linear in the input size.

\paragraph{Comparison with the Push-out method} 
Uno~\cite{uno2015constant} devised a method to analyze the amortized time complexity of recursive algorithms, the so-called \emph{Push-out method}. 
He employs it to show that various combinatorial enumeration algorithms, for instance, for enumerating elimination orderings, matchings forests, connected induced subgraphs, or spanning trees in a given graph, have constant amortized delay. 
We prove that the Pyramid method subsumes the Push-out method: 

\begin{theorem}\label{thm:pushout_intro}
	If a recursive enumeration algorithm $\alg$ satisfies Uno's Push-out condition with time $T^\star$ (i.e., the method yields an amortized time complexity $T^\star$), then there exists a potential function $\Phi$ such that $\alg$ satisfies the Pyramid condition with respect to $\Phi$ and time~$T^\star$.
\end{theorem}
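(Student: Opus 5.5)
The plan is to take as the potential an affine function of the iteration's own running time, with the constants read off from Uno's inequality. Recall Uno's Push-out condition in the form I will use. Here $T(X)$ denotes the time spent in iteration $X$ itself, excluding its recursive calls. The condition asks for constants $\alpha > 1$ and $\beta \ge 0$ such that for every iteration $X$,
\[
\alpha\, T(X) - \beta\,(|C(X)|+1)\,T^\star \;\le\; \sum_{Y \in C(X)} T(Y).
\]
Uno's setting also assumes that every iteration outputs at least one solution, so $|X'| \ge 1$ for every iteration $X$. I will use this assumption for the additive constants. If the paper's version only has outputs at leaves, I would instead charge each iteration to a visited element; this is possible whenever internal iterations have at least two children or output something. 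This is the point I would check most carefully against the precise statement of Uno's condition.

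Since the Push-out inequality bounds the children's time from below, the potential should increase with $T$. I set
\[
\Phi(X) = \frac{c\,T(X)}{T^\star} + d, \qquad c = \frac{1}{\alpha-1},\quad d = c\beta .
\]
The additive constant $d$ is there to absorb the $\beta|C(X)|$ term: each child contributes $+d$, while $X$ itself contributes only $-d$.

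Next I substitute this $\Phi$ into the Pyramid condition (Definition~\ref{def:pyramidconditioninto}) and apply the Push-out inequality to the sum over children:
\begin{align*}
\sum_{Y\in C(X)}\Phi(Y) + \mu|X'| - \Phi(X)
&= \frac{c}{T^\star}\Bigl(\sum_{Y\in C(X)} T(Y) - T(X)\Bigr) + d\,|C(X)| - d + \mu|X'| \\
&\ge \frac{c(\alpha-1)T(X)}{T^\star} - c\beta\,(|C(X)|+1) + d\,|C(X)| - d + \mu|X'| \\
&= \frac{T(X)}{T^\star} - 2c\beta + \mu|X'| .
\end{align*}
The last line follows from the choices $c(\alpha-1)=1$ and $d=c\beta$. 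With $\mu = 2c\beta = 2\beta/(\alpha-1)$ (or any positive $\mu$ when $\beta=0$) and $|X'| \ge 1$, the right-hand side is at least $T(X)/T^\star$. This is exactly the Pyramid condition with respect to $\Phi$ and $T^\star$.

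Leaves need no separate treatment, since the computation above holds with $|C(X)|=0$. Finally, I would remark that the sign of $\Phi$ is harmless. Telescoping the Pyramid condition over the recursion tree gives $\sum_X T(X)/T^\star \le \mu|X| - \Phi(\text{root})$, and $\Phi(\text{root}) \ge 0$ here. Hence the bound of Theorem~\ref{thm:pyramid_intro} recovers Uno's amortized bound $\O{T^\star}$. The main obstacle is therefore not this calculation but matching the exact form of Uno's condition and its output assumption to the definitions of $T$, $C$ and $X'$ in Section~\ref{sec:potentialanalysis}.
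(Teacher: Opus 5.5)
Your potential has the same form as the paper's (Proposition~\ref{obs:push_out}): it is affine in $T(X)$ with slope $\frac{1}{(\alpha-1)T^\star}$, and you substitute the Push-out inequality directly. The gap is in the additive constant.

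With $d=\frac{\beta}{\alpha-1}$, your computation leaves only $\mu|X'|-\frac{2\beta}{\alpha-1}$ on top of $\frac{T(X)}{T^\star}$. So it needs $|X'|\ge 1$ at every iteration. The framework of Section~\ref{sec:recursive_setup} does not assume this. It explicitly allows $X'=\emptyset$, and the typical case described there is that only leaves visit elements.

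At an internal iteration with $X'=\emptyset$ and $\beta>0$, your estimate falls short by $\frac{2\beta}{\alpha-1}$ whatever $\mu$ is. If the Push-out inequality is tight there, the Pyramid condition genuinely fails.

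Your fallback of charging each iteration to a visited element names the right hypothesis, but you do not carry it out. The Pyramid inequality is local, so credit from elements visited lower in the tree has to be routed through $\Phi$ itself.

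The missing step is to take a larger constant, which is the paper's choice: $\Phi(X)=\frac{T(X)}{(\alpha-1)T^\star}+\frac{3\beta}{\alpha-1}$. The same computation then gives
\[
\sum_{Y\in C(X)}\Phi(Y)+\mu|X'|-\Phi(X)\;\ge\;\frac{T(X)}{T^\star}+\frac{\beta}{\alpha-1}\bigl(2|C(X)|-4\bigr)+\mu|X'| .
\]
The extra term is nonnegative at every internal iteration with at least two children, which is the hypothesis the paper invokes. At a leaf we have $|X'|=|X|\ge 1$, so it is nonnegative once $\mu\ge\frac{4\beta}{\alpha-1}$.

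This choice also covers your mixed hypothesis, where each iteration either has at least two children or visits an element. Your smaller constant is valid only under the stronger assumption that every iteration visits something.

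Some hypothesis of this kind is unavoidable. Take $\alpha=\beta=2$, and a chain of one-child, non-visiting iterations of time $T^\star$ each, ending in a leaf of time $T^\star$. This satisfies the Push-out condition, yet the time per visited element is unbounded as the chain grows.
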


Theorem~\ref{thm:pushout_intro} states that every algorithm that has amortized delay $T^\star$ by the Push-out method can be analyzed by the Pyramid method, yielding the same bound $T^\star$ for the amortized delay.
At the same time, our algorithms for enumerating ideals and antichains in posets are examples of algorithms for which the Push-out method cannot be used to prove constant amortized delay, which in turn can be proved using the Pyramid method. 
We give an example poset in Example~\ref{ex:uno}.
This shows that the Pyramid method is strictly more powerful than the Push-out method.

The intuition is that in the Push-out method, every child $Y$ sends $\O{T(Y)}$ to its parent.
In contrast, the Pyramid method is a more versatile framework as it allows to transfere arbitrary amounts of computation time between children and parent iterations.
More precisely, it follows that the Push-out condition is satisfied if and only if Pyramid condition holds for a potential of the form $\Phi(X) = \O{T(X)}$. 
In this case, both methods can be applied to prove constant amortized delay.

	\subsubsection{Constant delay enumeration of poset ideals}
	The main result of this paper is a recursive enumeration algorithm for ideals in a poset that has constant delay and lists the ideals such that consecutively visited ideals differ in at most three elements. 
	We first describe a version that lists the ideals in arbitrary order with constant amortized delay. 
		A description in pseudocode is given in Algorithm~\ref{alg:ideal_intro}.
	Subsequently, we demonstrate how to refine this algorithm as to visit the ideals in Gray code order and invoke Theorem~\ref{thm:loopless_intro} to pass to constant delay.

	\begin{algorithm}[h]
		\caption{Enumeration of all ideals in a poset (basic version).}
		\label{alg:ideal_intro}
		\label{alg:ideal}
		\myproc{\Enumerate{Poset $P$, array $I$ of elements outside~$P$}}{
			\eIf{$P = \emptyset$}{
				\Visit{$I$}
			}{
				Find a longest chain $C$ in $P$\; \label{line:longestchain} 
				Construct auxiliary sets $L_2, \ldots, L_{k}$ and $S_0, \ldots, S_{k-1}$\;
				$P' \gets S_0$ \tcp{$P' = P_0$}\label{line:lisi}
				\Enumerate{$P'$, $I$} \tcp{Enumerate ideals in $P_0$}
				\For{$i \gets 1, \ldots, k$}{
					$P' \gets (P' \setminus L_i) \cup S_i$  \tcp{$P' = P_i$}\label{line:updatep}
					$I \gets I \cup L_i \cup \set{c_i}$\;\label{line:updatei}
					\Enumerate{$P'$, $I$} \tcp{Enumerate ideals in $P_i$}\label{line:rec}
				}
				$I \gets I \setminus P$ \tcp{Remove all elements added to $I$ by this iteration.}
			}
		}
		\myproc{\Main{Poset $P$}}{
			Relabel $P$ such that $u \preceq v$ implies $u \leq v$\;
			\Enumerate($P$, $\emptyset$)
		}
	\end{algorithm}
	
\paragraph{Recursive structure}
One main difference of our algorithm compared to the existing ones is the recursive structure.
In contrast to the previous algorithms, which split the set of ideals into two parts, depending on whether or not they contain a fixed element $x \in P$, we partition the ideals using a longest chain. 
This enables us to obtain constant delay if the input poset is a chain, which is not possible with the previously employed recursive techniques (see, for instance,~\cite{squire1995,abdo2013}).	
	
Let $C $ be a longest chain of $P$ and write $C = \{c_1, \dots, c_k\}$ with $c_1 \preceq \dots \preceq c_k$. 
The recursion of Algorithm~\ref{alg:ideal_intro} uses a partition of the ideals of $P$ into $k+1$ sets $\id_0(P), \dots, \id_k(P)$, depending on the largest element of $C$ contained in the ideal (or the fact that none exists). 
Formally, for $i > 0$, the set $\id_i(P)$ consists of the ideals $I$ of $P$ such that $c_i$ is the maximal element in $I \cap C$. 
The set $\id_0(P)$ contains those ideals of $P$ that do not contain an element of $C$.

\begin{figure}[htbp]
	\centering
\includegraphics[page=2]{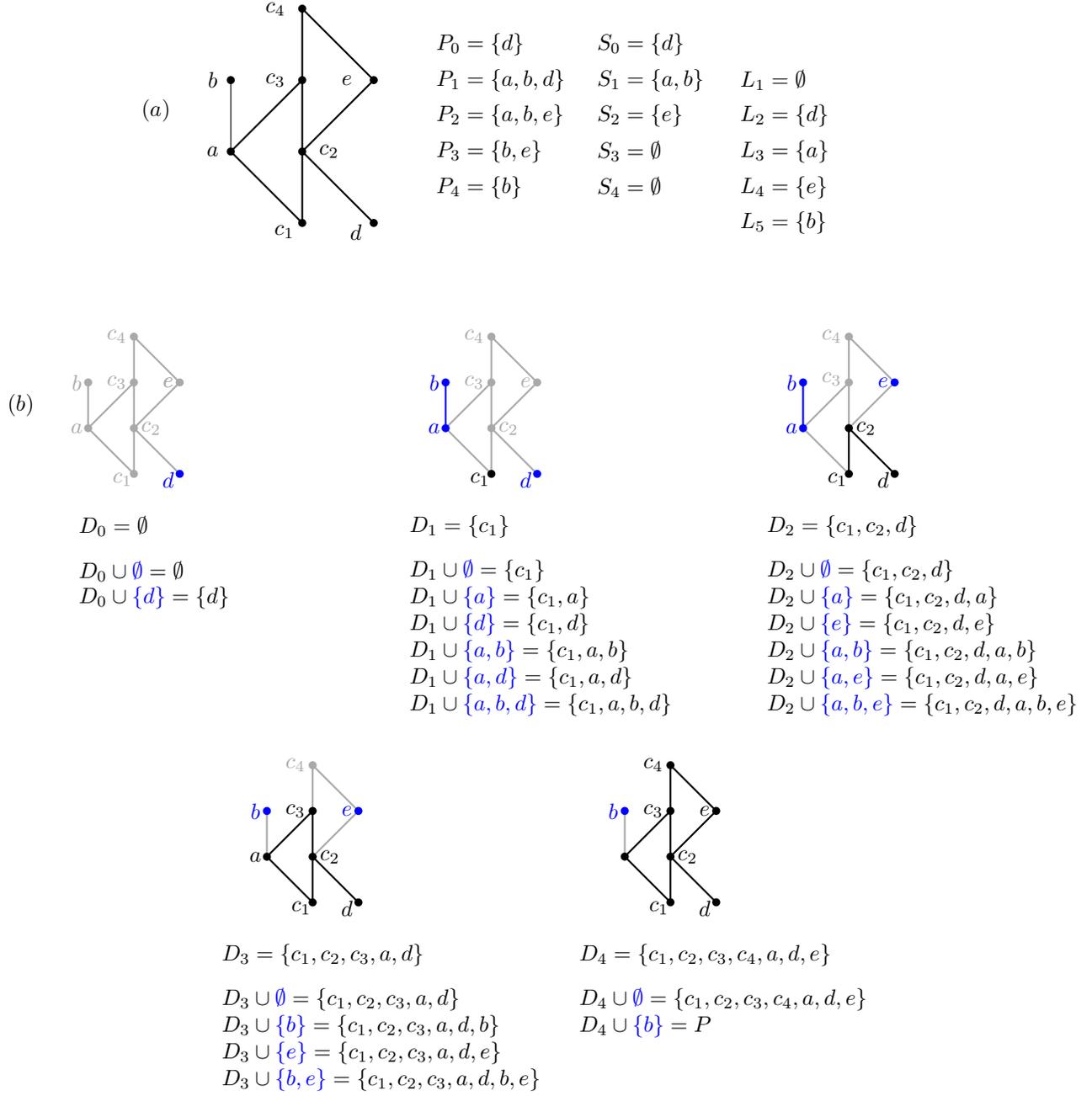}
\caption{(a) Example of a poset $P$, together with the subposets $P_0, \dots, P_4$ as well as the sets $S_0, \dots, S_4$ and $L_1, \dots, L_5$, visualized by its Hasse diagram. (b) The sets $D_i$ and the ideals in $\id_i(P)$ for $i = 0, \dots, 4$. 
Every ideal in $\id_i(P)$ is of the form $I' \cup D_i$, where $I'$ is an ideal of $P_i$. 
The elements in $P_i$ and $D_i$ are marked in blue and black, respectively.}
\label{fig:piexample}
\end{figure}

We now define the subposets $P_0, \dots, P_k$ of $P$ on which we call the recursion. 
For $i = 0, \dots, k$, the poset $P_i$ is the subposet of $P$ consisting of all elements $u \in P$ with $u \not \preceq c_i$ and $u \not \succeq c_{i+1}$ (using the convention that $c_0 \preceq u  \preceq c_{|C|+1}$ for all $u \in P$).
The crucial insight used for the recursion is that there is a bijection between the ideals in $\id_i(P)$ and the ideals of $P_i$ (see Lemma~\ref{lemma:idealcorrespondence}). 
It maps an ideal $I \in \id_i(P)$ to $I \setminus D_i$, where $D_i$ denotes the downset of $c_i$.
To obtain the ideals in $\id_i(P)$, we thus recursively compute the ideals in $P_i$ and add the set $D_i$ to each. 
Figure~\ref{fig:piexample} illustrates the definition of $P_i$ and the partition of the ideals into the sets $\id_i(P)$.

To efficiently update the currently processed poset when transitioning from $P_i$ to $P_{i+1}$ and compute the downset $D_i$ of $c_i$, we use auxiliary subsets $S_0, \dots, S_k$ and $L_1, \dots, L_{k+1}$ defined below.
These are computed simultaneously with the chain $C$.
We will now give details on these computations and their time complexity. 
	
\paragraph{Computing chains and updates} 
For computing a longest chain, we derive a special algorithm using an antichain decomposition in time $\O{n+q}$, where $q$ is the number of incomparable pairs, as the time complexity $\O{n^2}$ of the classical topological sorting algorithm is not sufficient for this task. 
Then, we compute sets $S_0, \dots, S_k$ and $L_1, \dots, L_{k+1}$ in time $\O{n+q}$. 
Here $S_i$ consists of all elements $u \in P \setminus C$ with $u \succ c_i$ and $u \not \succ c_{i+´1}$ (in other words, $c_i$ is the largest element on~$C$ that is smaller than $u$). 
Symmetrically, we define $L_i$ as the set of elements $u \in P \setminus C$ with $u \prec c_i$ and $u \not \prec c_{i-1}$.  
As $C$ is a longest chain, we have $S_k = L_1 = \emptyset$.
Note that with this notation, we have $D_i = D_{i-1} \cup L_i \cup \{c_i\}$ and $P_i = (P_{i-1} \setminus L_i) \cup S_i$.
We use these rules to update the current poset $P'$ on which the recursion is called, and the current set $I$ of elements that were already added to the ideal under construction.

Let $n_i = |P_i|$ for $i = 0, \dots, k$.
When processing the posets in the ordering $P_0, \dots, P_{k}$, we can show that every element of~$P_i$ is added and deleted a constant number of times, so for the updating, we require a total number of changes that is in $\O{\sum n_i} \leq \O{n+q}$. 
The key observation used for the complexity analysis of the construction of $C$ is that there is at most one comparison for every incomparable pair and every element is compared to at most one smaller element. 
A similar strategy is applied for computing all sets $S_i$, $L_i$, $D_i$ and $P_i$.
In order to analyze the complexity of whole algorithm, we apply the Pyramid method with a potential function of the form $\Phi(P) = \alpha + \beta n + \gamma q + \delta t$ with $\alpha, \beta, \gamma, \delta \in \mathbb{N}$. 
Here, $t$ denotes the number of incomparable triples (i.e., 3-antichains) in~$P$.

	
\paragraph{Gray code listings}	
The basic version of Algorithm~\ref{alg:ideal_intro} lists the ideals in arbitrary order. 
	We present a refined version that lists the ideals of the input poset $P$ such that every pair of consecutively visited ideals differs in at most three elements. 
	This is achieved by choosing a special ordering as well as prescribed start and end ideals for each of the recursive calls on $P_0, \dots, P_k$. 
	Using the technique from Theorem~\ref{thm:loopless_intro}, we can convert this into a loopless algorithm. 
	
	By a result of Pruesse and Ruskey~\cite{pruesse1993gray}, the flip graph on ideals of $P$ with flips between ideals differing in at most two elements has a Hamiltonian path.
	It is, however, not known whether such a path can be computed with constant amortized delay.
	We believe that it might be possible to refine our algorithm such that consecutively visited ideals differ in at most two elements, 
	while preserving constant delay.
	Note, however, that this does not influence the asymptotic runtime of the algorithm.
	
We summarize our results for enumerating poset ideals:
	
	\begin{theorem}
		Algorithm \ref{alg:ideal_intro} enumerates all ideals of a poset $P$ on $n$ elements with constant amortized delay while requiring $\O{n^2}$ space and $\O{n^2}$ time for the initialization.
		For $k \in \mathbb{N}$, the first $k$ ideals are visited in time $\O{k + n(n+q)}$, where $q \in \O{n^2}$ denotes the number of $2$-antichains in $P$. 
		
We can refine Algorithm~\ref{alg:ideal_intro} to a loopless algorithm that visits the ideals of $P$ in Gray code order, i.e, such that two consecutively visited ideals differ in at most three elements.
	\end{theorem}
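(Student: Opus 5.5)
The plan is to verify the Pyramid condition of Definition~\ref{def:pyramidconditioninto} for Algorithm~\ref{alg:ideal_intro} with $T^\star=\O{1}$ and then apply Theorems~\ref{thm:pyramid_intro} and~\ref{thm:loopless_intro}. The argument proceeds in four steps.

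\textbf{Step 1 (correctness).} I will first establish the bijection $\id_i(P)\to\id(P_i)$, $I\mapsto I\setminus D_i$. If $I\in\id_i(P)$, then $I$ contains $D_i$ and avoids the upset of $c_{i+1}$, so $I\setminus D_i\subseteq P_i$. Downward closure inside $P_i$ is inherited from $P$. Conversely, for an ideal $I'$ of $P_i$, the set $I'\cup D_i$ is downward closed: an element below some $u\in I'$ is either below $c_i$, or lies in $P_i$ because it cannot be above $c_{i+1}$. The identities $D_i=D_{i-1}\cup L_i\cup\{c_i\}$ and $P_i=(P_{i-1}\setminus L_i)\cup S_i$ then show that the loop in Algorithm~\ref{alg:ideal_intro} maintains exactly $P'=P_i$ and $I=D_i$ (plus the elements inherited from the caller). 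Together with the partition $\id(P)=\dcup_i \id_i(P)$, this gives that every ideal is visited exactly once.

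\textbf{Step 2 (cost of one iteration).} One iteration on $P$ with $n$ elements and $q$ incomparable pairs costs $T(P)=\O{n+q}$. This covers the longest-chain computation via the antichain decomposition, the computation of all $S_i,L_i$, and the updates, which total $\O{\sum_i n_i}=\O{n+q}$. The key combinatorial facts are the following. Since $C$ is a longest chain, every $u\notin C$ lies in at most a bounded number of consecutive "windows", and each $u\in P_i$ is incomparable to $c_i$ or to $c_{i+1}$. Hence $\sum_i n_i\le n+2q$, because each appearance of $u$ in some $P_i$ beyond the first can be charged to a distinct incomparable pair $\{u,c_j\}$.

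\textbf{Step 3 (potential and the Pyramid inequality).} I will use the potential $\Phi(P)=\alpha+\beta n+\gamma q+\delta t$, where $t$ counts $3$-antichains, with $\Phi(\emptyset)$ constant and the leaves having $|X'|=1$, so $\mu$ pays for them. The inequality to show is
\[
\sum_{i=0}^{k}\Phi(P_i)-\Phi(P)\ \ge\ c\,(n+q).
\]
The term $(k+1)\alpha$ dominates $\alpha$ whenever $k\ge1$. For the $n$ term, every non-chain element reappears in some $P_i$, but the chain elements disappear, a loss of $\beta k$ that is paid by $k\alpha$ once $\alpha\ge\beta$. The pairs are the delicate part: a pair $\{u,v\}$ survives in $P_i$ only if both lie in the same window, so pairs can be lost. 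The pairs $\{u,c_j\}$ are lost but are compensated by the duplication of $u$ across several $P_i$, contributing $\beta\cdot(\text{extra appearances})\ge\beta\cdot\Omega(\#\text{pairs with } C)$. A pair $\{u,v\}$ of non-chain elements that lands in no common $P_i$ should be witnessed by a triple $\{u,v,c_j\}$ or by duplication. This is why $\delta t$ is needed: incomparable triples involving a chain element are lost in the children and pay for the lost pairs, while triples among non-chain elements either survive or are charged similarly.

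\textbf{Main obstacle.} I expect the main obstacle to be this charging argument: showing that the loss $\gamma(q-\sum_i q_i)+\delta(t-\sum_i t_i)$ is dominated by the gains $\beta(\sum_i n_i-n)+k\alpha$. I would first try to prove $q-\sum_i q_i\le \sum_i n_i-n+\O{k}$ together with a bound on lost triples in terms of lost pairs, and then choose constants with $\alpha\gg\beta\gg\gamma\gg\delta$. If that ordering fails, I would try the reverse hierarchy with $\delta$ largest.

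\textbf{Step 4 (conclusions).} Theorem~\ref{thm:pyramid_intro} then yields constant amortized delay. Here $\Delta=\O{n(n+q)}$, since the depth is at most $n$ and each level costs $\O{n+q}$, which gives the bound $\O{k+n(n+q)}$ for the first $k$ ideals. The space bound is $\O{n^2}$ for storing the relation and the per-level sets.

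\textbf{Gray code version.} I would order the children $P_0,\dots,P_k$ and alternate the direction of each recursive listing, prescribing the start and end ideals (empty or full) of each $P_i$ in the style of a reflected code. The transition from $P_{i-1}$ to $P_i$ then swaps in $c_i$ plus a bounded set, which yields at most three changes; the exact bookkeeping is tedious. Theorem~\ref{thm:loopless_intro} then converts the algorithm to constant worst-case delay.
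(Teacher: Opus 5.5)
Steps 1, 2 and 4 agree with the paper. The two substantive parts, the charging argument in Step 3 and the Gray-code construction, each have a genuine gap.

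\textbf{The charging argument in Step 3.} Your accounting of what is lost is off in two places, and it misses the key lemma.
\begin{enumerate}
\item No incomparable pair of non-chain elements is ever lost. By Lemma~\ref{lem:once}, every antichain of $P\setminus C$ lies in some $P_i$. So the only lost pairs are the pairs $\{u,c_j\}$, and duplication pays for them exactly, since $\sum_i n_i=(n-k)+q'$.
\item Lost $3$-antichains (those containing an element of $C$) are a cost, not a source of payment. They are also not bounded by the lost pairs. Take $m$ pairwise incomparable elements $u$ with $c_1\prec u\prec c_3$, each incomparable to $c_2$. You lose $m$ pairs but $\binom m2$ triples. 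The lost triples must be paid by the duplicated pairs in $Q''$, using $t'\le\sum_i q''_i$.
\item This is the decisive gap. The Pyramid condition needs a surplus of order $n+q$ on top of covering losses, and you never say where the part of it for $Q'''$ comes from. A pair $\{u,v\}$ with $u\in S_i$, $v\in L_{i+1}$ and no common incomparable element on $C$ lies in exactly one child. So neither the $n$-term nor the $q$-term gains anything from it, while $q'''$ can be of order $|S_i|\,|L_{i+1}|$. In the poset of Example~\ref{ex:uno}, the $\alpha$-, $n$- and $q$-terms together gain only $\O{\ell^2\log\ell}$, against $q=\Theta(\ell^2\log^2\ell)$. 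Hence the $t$-term must act as a gain.
\end{enumerate}
The missing idea is Lemma~\ref{lem:pair_ideal}, which states $q'''\le 96(\sum_i t''_i+\sum_i n_i)$. It is proved by splitting $S_i$ and $L_{i+1}$ into levels $A_{i,j}$ and $B_{i,j}$. Because $C$ is a longest chain, every element of $A_{i,j}$ is incomparable to $c_{i+1},\dots,c_{i+j}$. This yields $j\binom{a_{i,j}}{3}$ $4$-antichains through $C$, and the convexity inequality of Lemma~\ref{lem:aux} then bounds $|S_i|\,|L_{i+1}|$. The $3$-antichains of $P\setminus C$ that share an incomparable chain element lie in at least two children. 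They supply the gain $\frac{\delta}{2}\sum_i t''_i$ that pays for $Q'''$.

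\textbf{The Gray-code construction.} Suppose every child listing starts and ends at $\emptyset$ or $P_i$. Then the boundary ideals of the block $\id_i(P)$ are $D_i$ or $U_i$. Consecutive blocks differ by $L_i\cup\{c_i\}$, by $S_i\cup\{c_i\}$, or by more, and these are unbounded in general. The one-element step $U_{i-1}\to U_{i-1}\cup\{c_i\}$ does exist. However, it enters $P_i$ at $\varphi_i^P(U_{i-1}\cup\{c_i\})$, which is the set of elements incomparable to $c_i$ and is generally neither empty nor all of $P_i$.

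This forces the stronger induction hypothesis of Lemma~\ref{lem:ideal_gray}: a Hamiltonian path from an arbitrary ideal $I\neq P$ to $P$, and from an arbitrary $I\neq\emptyset$ to $\emptyset$. With an arbitrary start, the blocks below $\zeta(I)$ must be swept in the zig-zag order $P_\zeta,P_{\zeta-2},\dots,P_0,P_1,P_3,\dots$. The transitions use the double swaps $\pm\{c_{i+1},c_{i+2}\}$ of Lemma~\ref{lemma:propertiesdi}, plus a one-element adjustment $x$ when a start and end ideal would coincide; this is where the bound $3$ comes from. The case $\zeta(I)=k$ additionally requires splitting $P_k$ along some $y\notin I$. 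None of this is routine bookkeeping.

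A minor point in Step~2: an element $u\notin C$ lies in $l_u-s_u$ children, which is not bounded. For example, it lies in all $k+1$ children if it is incomparable to all of $C$. Your charging to the pairs $\{u,c_j\}$ still gives the correct bound $\sum_i n_i=(n-k)+q'\le n+q$.
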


	\subsubsection{Constant delay algorithms for enumerating antichains}
	Using the same techniques as for ideals, we obtain a constant delay enumeration algorithm for antichains (Algorithm~\ref{alg:antichains}). 
	As above, we can refine the algorithm to list the antichains in Gray code order, i.e, such that successively visited antichains differ in at most three elements.
	
			\begin{algorithm}[h]
		\caption{Enumeration of all antichains in a poset (basic version).}
		\label{alg:antichains}
		\myproc{\Enumerate{Poset $P$, array of elements $A$ outside $P$}}{
			\eIf{$P$ is empty}{
				\Visit{$A$}
			}{
				Find a longest chain $C$ in $P$\;\label{line:ac_chain}
				Construct $L_2, \ldots, L_{k}$ and $S_0, \ldots, S_{k-1}$\;\label{line:ac_lisi}
				$P' \gets S_0$ \tcp{$P' = P_1$}
				\Enumerate{$P'$, $A \cup \set{c_1}$}\;\label{line:call1}
				\For{$i \gets 2, \ldots, k$}{
					$P' \gets (P' \setminus L_i) \cup S_{i-1}$ \tcp{$P' = P_i$}
					\Enumerate{$P'$, $A \cup \set{c_i}$}\;\label{line:call2}
				}
				\Enumerate{$P \setminus C$, $A$}\;
			}
		}
		\myproc{\Main{Poset $P$}}{
			Sort the elements so that $u \preceq v$ implies $u \leq v$\;
			\Enumerate($P$, $\emptyset$)
		}
	\end{algorithm}

		\begin{theorem}
		Algorithm \ref{alg:antichains} enumerates all antichains of a poset $P$ on $n$ elements with constant amortized delay, and requires $\O{n^2}$ space and $\O{n^2}$ time for the initialization, where $q \in \O{n^2}$ denotes the number of $2$-antichains in $P$.
		The first $k$ antichains are visited in time $\O{k + n(n+q)}$.
		We can refine Algorithm~\ref{alg:antichains} to a loopless algorithm that visits the antichains of $P$ in Gray code order, i.e, such that two consecutively visited antichains differ in at most three elements.
	\end{theorem}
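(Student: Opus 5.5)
The plan has three parts: prove that Algorithm~\ref{alg:antichains} is correct, show it satisfies the Pyramid condition with constant $T^\star$, and then invoke Theorem~\ref{thm:pyramid_intro} and Theorem~\ref{thm:loopless_intro}.

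\textbf{Correctness.} Every antichain $A$ of $P$ meets the longest chain $C$ in at most one element. So the antichains split into those avoiding $C$, which are exactly the antichains of $P\setminus C$, and, for each $i$, those containing $c_i$. An antichain containing $c_i$ is $\{c_i\}$ together with an antichain of the subposet of elements incomparable to $c_i$. This subposet is exactly the $P'$ maintained in the loop, namely the elements $u\notin C$ with $u\not\preceq c_i$ and $u\not\succeq c_i$. Using the definitions of $S_j$ and $L_j$, these are the elements whose largest smaller chain element is below $c_i$ and whose smallest larger chain element is above $c_i$. One checks inductively that passing from $c_{i-1}$ to $c_i$ removes $L_i$ and inserts $S_{i-1}$. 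Induction on $|P|$ then shows that every antichain is visited exactly once.

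\textbf{Pyramid condition.} A single iteration on $P$ costs $\O{n+q}$ by the chain and $L/S$ computations described earlier, where the updates touch each element a constant number of times. I take the potential $\Phi(P)=\alpha+\beta n+\gamma q+\delta t$, where $t$ is the number of $3$-antichains, and must verify
\[
\sum_{Y}\Phi(Y)+\mu[P=\emptyset]-\Phi(P)\ge c(n+q).
\]
The children are $P\setminus C$ and the $k$ posets $Q_i=\{u: u\parallel c_i\}$.
\begin{itemize}
\item[(i)] Every $2$-antichain $\{u,c_i\}$ with $u\notin C$ corresponds to the element $u$ of $Q_i$. Every $3$-antichain $\{u,v,c_i\}$ corresponds to the pair $\{u,v\}$ of $Q_i$. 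Every $2$- or $3$-antichain avoiding $C$ survives in $P\setminus C$. Hence $\sum_i |Q_i|+|P\setminus C|$ counts $n-k$ plus the pairs meeting $C$. Likewise the children's $q$ values count all pairs of $P$ exactly once, and their $t$ values count all triples at least once.
\item[(ii)] The constants $\alpha$ contribute $(k+1)\alpha-\alpha=k\alpha$. The $\beta$ terms contribute $\beta(n-k+q_C)-\beta n=\beta(q_C-k)$, where $q_C$ is the number of pairs meeting $C$. The $\gamma$ terms contribute $\gamma(q-q_C+t_C)-\gamma q=\gamma(t_C-q_C)$, and the $\delta$ terms are at least $0$.
\end{itemize}
So the gain is $k\alpha+\beta q_C-\beta k+\gamma t_C-\gamma q_C$, which is not obviously $\ge c(n+q)$.

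\textbf{Main obstacle.} The difficulty is charging $n+q$, and in particular pairs not meeting $C$, to the potential decrease. I would first try reweighting the terms as $\Phi=\alpha+\beta n+\gamma q+\delta t$ with $\alpha\gg\beta\gg\gamma\gg\delta$. I would also exploit that $C$ is longest: each element outside $C$ is comparable to some $c_i$, or else lies in an antichain with $C$-elements. Each incomparable pair $\{u,v\}$ either lies in a common $Q_i$, creating a triple with $c_i$, or is split in a way that creates pairs with $C$.

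If this accounting fails, I would fall back on distinguishing children by size. Empty children occur only when $P'$ is empty, and then visit an element giving $\mu$. The leaf output term $\mu$, together with the $k$ visited singletons $\{c_i\}$ arising from empty or small $Q_i$, pays for the $\O{k}$ part. The $\O{q}$ part is paid by the $\beta$-gain on pairs meeting $C$ versus pairs among non-chain elements, via the longest-chain property that $\sum|Q_i|\ge$ const$\cdot$(pairs touching $P\setminus C$). I expect this inequality to be the crux.

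\textbf{Conclusion.} Once the Pyramid condition holds with constant $T^\star$, Theorem~\ref{thm:pyramid_intro} gives constant amortized delay and first-$k$ time $\O{k+\Delta}$ with $\Delta=\O{n(n+q)}$, since the recursion depth is at most $n$. For the Gray code, I would order the calls alternately in direction, reversing at odd depths as in reflected Gray codes, so that consecutive outputs differ by swapping $c_i$ with $c_{i+1}$ plus $\O{1}$ elements. Theorem~\ref{thm:loopless_intro} then makes the algorithm loopless.
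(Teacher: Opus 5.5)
Your correctness argument and the final appeal to Theorems~\ref{thm:pyramid_intro} and~\ref{thm:loopless_intro} are fine. However, both substantive steps are left open, and in each case the route you sketch does not work.

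\textbf{Pyramid inequality.} Your claim that the $\delta$-terms are $\ge 0$ is false.
\begin{itemize}
\item A $3$-antichain containing some $c_i$ lies in no child.
\item The $3$-antichains of the child for $c_i$ correspond to the $4$-antichains of $P$ through $c_i$.
\end{itemize}
So the $\delta$-contribution is exactly $\delta(f_C-t_C)$, where $f_C$ counts the $4$-antichains meeting $C$. The loss $-\delta t_C$ must be absorbed by $\gamma t_C$, which forces $\gamma>\delta$. The gain $\delta f_C$ is exactly what pays for the crux you leave open: the $q'''$ incomparable pairs $\{u,v\}\subseteq P\setminus C$ with no common incomparable element on $C$.

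Your fallback inequality is false. Let $S_i$ and $L_{i+1}$ be antichains of size $a$, each element incomparable to exactly one chain element. Then $\sum_i|Q_i|=2a$, but $q'''\ge a^2$.

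Even $q'''=\O{q_C+t_C}$ fails, so no reweighting of $\alpha,\beta,\gamma$ can succeed without $f_C$. Layer $S_i$ into antichains $A_1,\dots,A_e$, each entirely below the next, with $|A_j|\approx e/j$ and elements of $A_j$ incomparable exactly to $c_{i+1},\dots,c_{i+j}$. Mirror this for $L_{i+1}$, giving a poset like that of Example~\ref{ex:uno}. Then $q'''=\Theta(e^2\log^2 e)$, while $q_C=\Theta(e^2)$ and $t_C=\Theta(e^2\log e)$.

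The missing ingredient is Lemma~\ref{lem:pair}, built as follows.
\begin{itemize}
\item For a pair in $Q'''$, one has (up to swapping) $u\in S_i$ and $v\in L_{i+1}$ for the same $i$. Otherwise some $c_j$ with $\max(s_u,s_v)<j<\min(l_u,l_v)$ is incomparable to both. Hence $q'''\le\sum_i|S_i||L_{i+1}|$.
\item Peel the minimal elements of $S_i$ into layers $A_{i,1},A_{i,2},\dots$. By the longest-chain property, every element of $A_{i,j}$ is incomparable to $c_{i+1},\dots,c_{i+j}$. So $\sum_{i,j}j\le q_C$ and $\sum_{i,j}j\binom{a_{i,j}}{3}\le f_C$, and symmetrically for $L_{i+1}$.
\item The Jensen-type Lemma~\ref{lem:aux} then yields $q'''\le 96(f_C+q_C)$.
\end{itemize}
Only with this bound do the constants $\alpha=196$, $\beta=195$, $\gamma=97$, $\delta=96$ close the inequality.

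\textbf{Gray code.} Reversing the order of calls at odd depths does not control the junctions. Consecutive blocks list different posets, and nothing bounds the first and last antichains of a sublisting. In fact the scheme fails outright on an $n$-element antichain. There the even-depth listing of $\{x_3,\dots,x_n\}$ starts at $\{x_3,x_5,\dots\}$ and ends at $\{x_4,x_6,\dots\}$. As a result, the junction from the $c_1$-block to the $(P\setminus C)$-block changes all $n$ elements.

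What you need is an inductive endpoint invariant, as in Lemma~\ref{lem:graycodeantichain}: $\flip^3(P)$ has a Hamiltonian path from a $1$-antichain to $\emptyset$. The construction is:
\begin{itemize}
\item Traverse each child path reversed inside its block, so the $c_i$-block runs from $\{c_i\}$ to $\{c_i,a_i\}$.
\item Take the blocks $c_1,\dots,c_k$ all in the same direction.
\item Finish with the $(P\setminus C)$-path from $\{a_r\}$ to $\emptyset$.
\end{itemize}
Every junction, $\{c_i,a_i\}\to\{c_{i+1}\}$ or $\{c_k,a_k\}\to\{a_r\}$, changes at most three elements. The new path again starts at a $1$-antichain, so the invariant is preserved. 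In any case, your ``$\O{1}$ elements'' would not give the claimed bound of three.
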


This paper is organized as follows: in Section~\ref{sec:potentialanalysis}, we describe the Pyramid condition, a new method for the potential analysis of recursive algorithms. 
In Section~\ref{sec:ideals}, we describe an algorithm that enumerates ideals in a poset with constant delay. 
In Section~\ref{sec:antichains}, we give a constant delay algorithm for antichains.

		\section{Potential analysis for combinatorial enumeration}\label{sec:potentialanalysis}
		
		In this section, we describe a new method for potential analysis, called \emph{Pyramid method}. 
		It will be used to show that the algorithms for enumerating ideals and antichains in posets (Algorithms~\ref{alg:ideal} and~\ref{alg:antichains}) have constant amortized delay. 
		We stress, however, that the presented methods can be applied to any enumeration algorithm, and might thus be of independent interest. 
		
		In Section~\ref{sec:recursive_setup}, we introduce the setup of the analyzed algorithms. 
		In Section~\ref{sec:pyramid}, we describe the Pyramid method. 
		In Section~\ref{sec:worstcase}, we discuss how turn an algorithm that satisfies the Pyramid condition with amortized delay $T^\ast$ into an algorithm with worst-case delay $T^\ast$, provided that the elements are visited in Gray code order.

		\subsection{Recursive algorithms}\label{sec:recursive_setup}
We now describe the setup of the recursive algorithms that we analyze using terminology similar to Uno~\cite{uno2015constant}. 
Let $S$ be a set to be enumerated using a recursive algorithm $\alg$. 
As an example, consider the following example enumerating all matchings of a graph. 

	\begin{figure}
	\centering
	\includegraphics{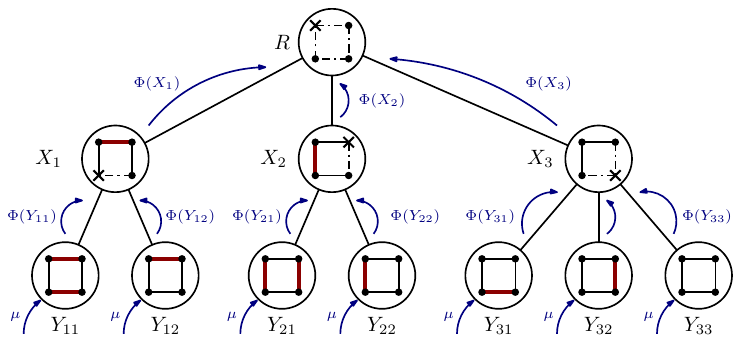}
	\caption{Recursion tree for enumerating all matchings in a graph $G$ (see Example~\ref{ex:rec}).
		In each iteration (node of the tree), a vertex $v$ of $G$ (marked with $\times$) is selected. 
		The recursive calls correspond to selecting an edge incident to $v$ to be included in the matching (marked in red) or deciding to include none of the edges incident to $v$ (excluded edges are marked in solid black).
		At the beginning, the status of all edges in $G$ is undetermined (dashed). Every leaf of the recursion tree corresponds to a matching of $G$.
	}
	\label{fig:recursiontree}
\end{figure}

	\begin{example}\label{ex:rec}
	We consider a simple algorithm to enumerate all matchings in an input graph~$G$ (also see~\cite[Section 5]{uno2015constant}). 
	For an illustration, see Figure~\ref{fig:recursiontree}. 
	The algorithm maintains a set of edges included in the matching. At the beginning, this set is empty.
	In each iteration, we fix a vertex $v$ of $G$ and let $e_1, \dots, e_k$ be the edges incident to $v$. 
	Clearly, every matching of $G$ contains at most one of these edges.
	Thus, for each $i \in \{1, \dots, k\}$, there is a recursive call of the function that includes the edge $e_i$ in the matching, together with one additional call representing the case that none of the edges $e_1, \dots, e_k$ is included. 
\end{example}
		
%

Let $T(\alg)$ be a recursion tree representing the recursion calls in $\alg$.
Every node of $T(\alg)$ corresponds to a recursive function call of $\alg$ and potentially some additional non-recursive computation.
		An \emph{iteration} of the recursive function $\alg$ is its execution, excluding the computation in the function recursively called by the iteration.
		We can identify the iterations of $\alg$ with the nodes of its recursion tree, and use them interchangeably. 
		Associated to each iteration is a subset $X \subseteq S$ that contains the elements that are visited in this iteration or one of its recursive calls.
		We use the set~$X$ to label and identify the iteration.
		In iteration $X$, a partition $X = X_1 \dcup \cdots \dcup X_k \dcup X'$ is derived, where $X_1, \dots, X_k \neq \emptyset$.
		Here,~$X'$ denotes the set of elements directly visited by the iteration $X$, and may be empty. 
		The function is recursively called for the sets $X_1, \ldots, X_k$. 
		These iterations are called the \emph{children} of $X$ as in the recursion tree, the nodes corresponding to $X_1, \dots, X_k$ are the children of the node corresponding to $X$.
		The set of children of $X$ is denoted by $C(X)$, and we call~$X$ the \emph{parent} of $X_1, \dots, X_k$.		
		The \emph{root iteration} is the iteration without parent, and a \emph{leaf iteration} is an iteration without any children.
		A \emph{subtree} of an iteration $X$ contains $X$ and all its descendants, i.e., iterated children sets.

		In our applications, we usually encounter the special case that if $|X| \geq 2$, no element is directly visited by the iteration $X$ and all elements of $X$ are passed on to the recursive calls, and if $|X| =1$, the unique element of $X$ is visited and no recursion is called.
		That is, we have $X' = \emptyset$ if $|X| > 1$ and $X' = X$ if $|X| =1$.  

		As customary in sublinear enumeration, we assume that every visited element is provided by a pointer to a data structure (usually an array) and this data structure is modified during the enumeration before another element is visited.

		\subsection{Potential analysis and the Pyramid condition}\label{sec:pyramid}

		In this section, we describe a potential analysis method which we call \emph{Pyramid method}.
		
		The potential analysis is a well-established technique to analyze the amortized time complexity of data structures (for instance, see~\cite{cormen2009}).
		We now define a variant to analyze the amortized complexity of recursive algorithms, in particular for combinatorial enumeration. 
		The intuition is that for many enumeration algorithms, most elements of the enumerated set are visited in leaf iterations. 
		At the same time, these iterations tend to require only small computation time. 
		In other words, the leaf iterations usually visit elements in quick succession while requiring little computation time, whereas the inner nodes of the tree visit less elements and require more computation time. 
		Leaf iterations thus tend to have a positive balance between used computation time and generated output, whereas inner nodes have a negative one. 
		For the amortized time analysis, this suggests a model in which children nodes pass excess time resources to their parent. 
		This approach was, for instance, taken by Uno's Push-out method~\cite{uno2015constant}, which allowed to show that various enumeration algorithms have constant delay.
		
		In our model, we use \emph{coins} as fixed time units. 
		Each coin pays for $T^\star$ instructions, where $T^\star$ can be any fixed function of the input size. 
		Ultimately, $T^\star$ will be the amortized delay that our analysis yields. 
		In our applications, we usually show that our algorithm has constant delay, which corresponds to setting $T^\star = \O{1}$.
		If an iteration visits an element, it is rewarded by obtaining a fixed constant number $\mu > 0$ of coins. 
		When executing instructions, the corresponding amount of coins needs to be payed by the algorithm. 
		
		Following the above intuition, our model makes children iterations pass coins to their parents. 
		This is modeled by a potential function $\Phi$. 
		It assigns each iteration $X$, i.e., each node of the recursion tree, a potential $\Phi(X) >0$ representing the number of coins that~$X$ passes to its parent.
		An iteration $X$ thus obtains $\mu$ coins for every visited element of $X'$ as well as $\sum_{Y \in C(X)} \Phi(Y)$ coins from its children, and passes $\Phi(X)$ coins to its parent (see Figure~\ref{fig:recursiontree} for an illustration for the algorithm from Example~\ref{ex:rec}).
		
		Intuitively, this mechanism is sustainable if $\Phi$ and $T^\star$ are chosen in a way that every node can pay its own computation time.
		That is, the difference between the $\mu |X'| + \sum_{Y \in C(X)} \Phi(Y)$ coins that an iteration $X$ obtains and the $\Phi(X)$ coins it sends to its parent must be sufficient to 
		pay for the execution of the instructions in $X$, i.e, at least $T(X)/T^\star$ coins are needed.
		This is captured by the following condition:
		
		\begin{definition}[Pyramid condition]\label{def:pyramidcondition}
A recursive algorithm $\alg$ satisfies the \emph{Pyramid condition} with respect to a potential function $\Phi$ and time $T^\star$ if for every iteration $X$ of~$\alg$, it holds that \[\sum_{Y \in C(X)} \Phi(Y) + \mu |X'| - \Phi(X) \ge \frac{T(X)}{T^\star}.\]
		\end{definition}
	
		Note that this requires the root iteration $R$ to provide $\Phi(R)$ coins to pass to its (non-existing) parent. 
		This additional time resource could be used to cover instructions executed outside the recursion, such as reading the input or initialization.
	
		We now show that this condition can indeed be used to obtain bounds for the amortized delay of a recursive algorithm. 
		
		\subsubsection{Amortized delay}
		\begin{lemma}\label{lem:subtree_time}
		Let $\alg$ be a recursive algorithm enumerating a set $X$ that satisfies the Pyramid condition for some potential function $\Phi$ and time $T^\star$. 
		Let $X$ be an iteration of $\alg$. 	
		Then the number of instructions executed in a subtree of $X$ is at most $T^\star (\mu |X| - \Phi(X))$.
		\end{lemma}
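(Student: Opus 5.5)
We prove the statement by structural induction on the recursion tree, processing iterations from the leaves upward. For an iteration $X$, write $T_{\mathrm{sub}}(X)$ for the total number of instructions executed in the subtree of $X$. Then
\[
T_{\mathrm{sub}}(X) = T(X) + \sum_{Y \in C(X)} T_{\mathrm{sub}}(Y).
\]
The partition $X = X_1 \dcup \cdots \dcup X_k \dcup X'$ gives
\[
|X| = |X'| + \sum_{Y \in C(X)} |Y|.
\]
The claim to establish is $T_{\mathrm{sub}}(X) \le T^\star(\mu|X| - \Phi(X))$.

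\textbf{Base case.} Let $X$ be a leaf iteration, so $C(X) = \emptyset$ and $X' = X$. The Pyramid condition reads $\mu|X| - \Phi(X) \ge T(X)/T^\star$. Multiplying by $T^\star > 0$ gives $T_{\mathrm{sub}}(X) = T(X) \le T^\star(\mu|X| - \Phi(X))$.

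\textbf{Inductive step.} Assume the bound holds for every child $Y \in C(X)$. By the induction hypothesis,
\[
T_{\mathrm{sub}}(X) \le T(X) + T^\star \sum_{Y \in C(X)} \bigl(\mu|Y| - \Phi(Y)\bigr).
\]
The Pyramid condition for $X$ gives
\[
T(X) \le T^\star\Bigl(\sum_{Y \in C(X)} \Phi(Y) + \mu|X'| - \Phi(X)\Bigr).
\]
Adding the two inequalities, the terms $\Phi(Y)$ cancel (a telescoping effect). What remains is
\[
T^\star\Bigl(\mu|X'| + \mu\sum_{Y \in C(X)}|Y| - \Phi(X)\Bigr),
\]
and by the disjoint partition this equals $T^\star(\mu|X| - \Phi(X))$.

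The recursion tree is finite, since the algorithm terminates and each child carries a nonempty set. Hence the induction on subtree height is well founded.

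There is no real obstacle here. The only points needing care are to use the disjointness of the partition so the element counts add up exactly, and to note that the children's sets $X_i$ are exactly the labels of the child iterations.
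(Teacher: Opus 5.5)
Your proof is correct and is essentially the paper's argument: induct over the recursion tree, bound each child's subtree by the hypothesis, substitute the Pyramid condition for $T(X)$, and use $|X| = |X'| + \sum_{Y \in C(X)} |Y|$ so that the $\Phi(Y)$ terms cancel. The paper phrases this as induction on $|X|$ and handles leaves and inner nodes in a single computation; your induction on subtree height is equally valid, and it also covers the case of a node with a single child and $X' = \emptyset$, where $|Y| = |X|$ and the size does not strictly decrease.
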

		\begin{proof}
			We prove the lemma by the induction on the size of $X$.
			The number of instructions executed in the subtree of $X$ is at most
			\begin{align*}
				&T(X) + \sum_{Y \in C(X)} T^\star(\mu|Y| - \Phi(Y)) \\
				&= T^\star \left(\frac{T(X)}{T^\star} +  \Phi(X) - \mu|X'| - \sum_{Y \in C(X)} \Phi(Y) \right) + T^\star(\mu |X| - \Phi(X)) \\
				&\le T^\star(\mu |X| - \Phi(X)).
			\end{align*}
		Note that this inequality holds both when $X$ is a leaf, i.e, $C(X) = \emptyset$, as well as when $C(X)$ is non-empty.
		In the second step, we used that $|X| = |X'| +\sum_{Y \in C(X)} |Y|$.
		The last inequality follows from the Pyramid condition.
		\end{proof}
		
		This immediately yields the following:
		
		\begin{theorem} \label{thm:amortized}
		Let $\alg$ be a recursive algorithm enumerating a set $X$  and assume that $\alg$ satisfies the Pyramid condition for some potential function $\Phi$ and time $T^\star$. 
		The total time to enumerate $X$ is $\O{|X| T^\star - \Phi(X)}$. 
		Thus $\alg$ has amortized delay $\O{T^\star}$.
		\end{theorem}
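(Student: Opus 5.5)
The plan is to apply Lemma~\ref{lem:subtree_time} to the root iteration $R$ of $\alg$. By the setup of Section~\ref{sec:recursive_setup}, $R$ is labelled by the whole enumerated set $X$, so its subtree is the whole recursion tree. The number of instructions executed by the recursive part of $\alg$ is therefore at most $T^\star(\mu|X| - \Phi(X))$. Since $\mu$ is a fixed constant, this is $\O{|X| T^\star - T^\star\Phi(X)}$. Absorbing the factors consistently with how $\Phi$ is measured in coins, which are units of $T^\star$ instructions, gives the bound $\O{|X|T^\star - \Phi(X)}$ in the statement.

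Next I would note that $\Phi(X) > 0$, because potentials are positive by definition. Hence the total time is at most $\mu T^\star |X| = \O{|X| T^\star}$. Dividing by the number of outputs $|X|$ gives amortized delay $\O{T^\star}$, matching the definition of amortized delay as total computation time divided by the number of outputs.

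The only subtle point, and it is minor, is bookkeeping. First, $\Phi(X)$ should be interpreted with the $T^\star$ scaling, i.e.\ as a count of coins, each worth $T^\star$ instructions. Second, instructions outside the recursion, such as initialization, are not counted in this bound; the remark after Definition~\ref{def:pyramidcondition} suggests they can be charged to the root's $\Phi(R)$ coins. Neither issue changes the argument, which is otherwise a direct corollary of Lemma~\ref{lem:subtree_time}.
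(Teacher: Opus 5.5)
Your proposal is correct and is the paper's argument: the paper obtains the theorem directly from Lemma~\ref{lem:subtree_time} applied to the root iteration, which bounds the instructions by $T^\star(\mu|X| - \Phi(X))$, and positivity of $\Phi$ then gives amortized delay $\O{T^\star}$. Your remark on the coin-versus-instruction scaling is apt, since the precise bound is $T^\star(\mu|X| - \Phi(X))$ and the stated $\O{|X|T^\star - \Phi(X)}$ is only a loose rendering of it. Like the paper, you correctly leave initialization out of the count.
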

		We stress that this only takes the computations executed within the recursion into account, in particular not reading the input and possible initialization.

		\subsubsection{Visiting times}
		An enumeration algorithm defines an ordering on the enumerated set $X$, namely the ordering in which it visits the elements of $X$. 
		We now analyze the number of instructions that are executed between visiting the $i$-th and the $j$-th element of this ordering.
		Let $\Delta$ be an upper bound for the number of instructions executed by all iterations on a path in the recursion tree from the root to a leaf iteration.
		
		\begin{lemma} \label{lem:gap}
			Let $\alg$ be a recursive algorithm enumerating a set $X$  and assume that $\alg$ satisfies the Pyramid condition with time $T^\star$. 
			For every $1 \le i \le j \le |X|$, the number of instructions $\delta_{ij}$ executed between visiting the $i$-th element and the $j$-th element of $X$ is at most $2\Delta + (j-i) \mu T^\star-1$.
			Furthermore, the number of instructions executed before visiting the $j$-th element is at most $\Delta + j \mu T^\star-1$.
		\end{lemma}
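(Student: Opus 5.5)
We analyse the instructions executed between the visits of the $i$-th and $j$-th elements by splitting them according to the structure of the recursion tree. Let $x_i$ and $x_j$ be the $i$-th and $j$-th visited elements, and let $X_i$, $X_j$ be the iterations that visit them directly (so $x_i \in X_i'$ and $x_j \in X_j'$). The execution between the two visits is a depth-first traversal, and it consists of three kinds of work.

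\begin{enumerate}[(a)]
\item Part of the execution of the iterations on the path from $X_i$ up to the lowest common ancestor $Z$ of $X_i$ and $X_j$; these iterations are finished or continued after $x_i$ is visited.
\item Part of the execution of the iterations on the path from $Z$ down to $X_j$; these iterations are started or continued before $x_j$ is visited.
\item Complete subtrees of iterations $Y$ that are entirely processed strictly between the two visits. These are children of path iterations lying ``between'' the two paths.
\end{enumerate}

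The partial work in (a) and (b) lies on two root-to-leaf paths. Each path carries at most $\Delta$ instructions in total, so together they contribute at most $2\Delta$. (If $X_i$ or $X_j$ is not a leaf, we can extend the path to any leaf below it, and the bound still holds.)

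For the complete subtrees in (c), we apply Lemma~\ref{lem:subtree_time}. A subtree rooted at $Y$ costs at most $T^\star(\mu|Y| - \Phi(Y)) \le T^\star\mu|Y|$, using $\Phi(Y) > 0$. Every element visited in such a subtree is visited strictly after $x_i$ and strictly before $x_j$, and these subtrees are disjoint. Hence $\sum_Y |Y| \le j-i-1$, and their total cost is at most $(j-i-1)\mu T^\star$.

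There is one subtlety: elements visited directly (in $X'$) by path iterations between the two visits also lie in the gap. They only help, since they reduce the number of elements available to the subtrees, so the bound $\sum_Y |Y| \le j-i-1$ is unaffected. Combining (a)--(c) gives
\[
\delta_{ij} \le 2\Delta + (j-i-1)\mu T^\star,
\]
which is at most $2\Delta + (j-i)\mu T^\star - 1$, provided $\mu T^\star \ge 1$. We may assume this, since otherwise each visit could not pay for even one instruction; alternatively we may enlarge $\mu$.

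The second statement follows from the same argument with no left endpoint. Before visiting $x_j$, the algorithm executes only part of the root-to-$X_j$ path, costing at most $\Delta$, plus complete subtrees hanging to the left of that path. These subtrees contain at most $j-1$ elements, so they cost at most $(j-1)\mu T^\star$. The total is at most $\Delta + j\mu T^\star - 1$.

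The main obstacle is making the decomposition (a)--(c) rigorous: every instruction executed in the interval must be either on one of the two paths or inside a complete subtree finished within the interval. I would formalize this through the depth-first order in which iterations are executed.
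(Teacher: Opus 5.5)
Your proof is correct and follows the paper's argument: you use the same decomposition into the two paths below the lowest common ancestor plus the complete subtrees hanging between them, and you bound each subtree by $\mu T^\star |Y|$ via Lemma~\ref{lem:subtree_time}. The only difference is where the $-1$ comes from: the paper uses that the common ancestor $L$ is counted on both paths ($2\Delta - T(L) \le 2\Delta - 1$), while you use the sharper count $j-i-1$ together with $\mu T^\star \ge 1$, which is legitimate (at a leaf the Pyramid condition gives $\mu |X'| T^\star > T(X) \ge |X'|$ when each visit costs an instruction), though your fallback of enlarging $\mu$ would not give the bound for the original $\mu$.
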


		\begin{proof}
			For an illustration, see Figure~\ref{fig:gap}.
			\begin{figure}
			\centering
			\includegraphics{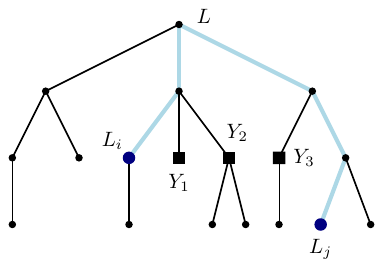}
			\caption{Illustration of the proof of Lemma~\ref{lem:gap}.}
			\label{fig:gap}
			\end{figure}
			Let $L_i$ and $L_j$ be the iterations in which the $i$-th and the $j$-th element are visited, respectively. 
			Let $L$ be the lowest common ancestor of $L_i$ and $L_j$ in the recursion tree.
			Let $Y_1, \dots, Y_k$ denote those iterations executed between $L_i$ and $L_j$ whose parents are on the paths from $L$ to $L_i$ and $L_j$, but that are not contained in the paths themselves.
			Then $\delta_{ij}$ is bounded from above by the number of instructions on the paths, together with the instructions in the subtrees of $Y_1, \dots, Y_k$. 
			The number of instructions on the paths is at most $2 \Delta - T(L) \leq 2\Delta-1$.
			By Lemma~\ref{lem:subtree_time}, the number of instructions executed in the subtree of iteration $Y_i$ is at most $\mu T^\star |Y_i|$.
			Note that the sets $Y_1, \dots, Y_k$ contain all elements visited between $i$ and $j$ and that each element is contained in precisely one set $Y_i$. 
			Thus $\sum_{i=1}^k  \mu T^\star |Y_i| = \mu T^\star (j-i)$.
			This yields $\delta_{ij} \leq 2 \Delta +\mu T^\star (j-i)-1$. 
			
			The proof of the second part follows analogously by summing the instructions executed along a path from the root to $L_j$ as well as in all subtrees whose root has a parent on this path and which are executed before $L_j$ (note that the $j$-th element is not contained in any of these subtrees).
		\end{proof}
	
			This immediately yields the following result:
		
		\begin{theorem} \label{thm:kth}
				Let $\alg$ be a recursive algorithm enumerating a set $X$ and assume that $\alg$ satisfies the Pyramid condition with time $T^\star$. 
				Then $\alg$ enumerates the first $k$ elements in time $\O{k T^\star + \Delta}$.
		\end{theorem}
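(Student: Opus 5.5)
The plan is to read Theorem~\ref{thm:kth} off the second part of Lemma~\ref{lem:gap}; no new combinatorial argument should be needed. Recall what that part says: the number of instructions executed before the $j$-th element of $X$ is visited is at most $\Delta + j\mu T^\star - 1$.

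\textbf{Reduction to Lemma~\ref{lem:gap}.} Fix $k \le |X|$. The time needed to enumerate the first $k$ elements is the number of instructions executed up to and including the visit of the $k$-th element. By Lemma~\ref{lem:gap} applied with $j = k$, the instructions executed before that visit number at most $\Delta + k\mu T^\star - 1$. The visit itself, and the remaining instructions of the leaf iteration in which it takes place, are bounded by the instruction count of that iteration. This count is already covered by $\Delta$, since it lies on a root-to-leaf path. Hence the total is at most $\Delta + k\mu T^\star$.

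\textbf{Absorbing the constant.} Since $\mu$ is a fixed constant in the Pyramid condition, this bound is $\O{kT^\star + \Delta}$, as claimed.

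\textbf{The one point needing care.} The only step that is not completely mechanical is the convention for ``enumerating the first $k$ elements,'' that is, whether the work of the $k$-th visit itself is counted. If it is, one should check that the proof of Lemma~\ref{lem:gap} already charges the full root-to-$L_k$ path, including $T(L_k)$, to $\Delta$. It does: the path instructions are bounded by $\Delta$ there. So under either convention the bound changes by at most an additive $\O{\Delta}$, and the asymptotic statement is unaffected.
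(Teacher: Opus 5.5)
Your proposal is correct and matches the paper, which obtains Theorem~\ref{thm:kth} directly from the second part of Lemma~\ref{lem:gap} with $j = k$ and then absorbs the constant $\mu$. Your extra remark about counting the $k$-th leaf's own instructions is harmless: even the crude bound $T(L_k) \le \Delta$ adds only $\O{\Delta}$.
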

	
	Note that this yields an upper bound on the amortized delay defined in the sense of Cappelli and Strozecki~\cite{capelli2021amortized}.
	
			\subsubsection{Comparison with Uno's Push-out method}\label{sec:pushout}
	
	Uno \cite{uno2015constant} presented a method called \emph{Push-out} to analyze the amortized complexity of recursive enumeration algorithms.
	It uses the following condition:
	
	\begin{definition}[Push-out condition~\cite{uno2015constant}]\label{def:pushout}
A recursive algorithm satisfies the \emph{Push-out condition} with time $T^\star$ if there exist constants $\alpha > 1$ and $\beta \geq 0$ such that for all iterations~$X$, we have
	\[\sum_{Y \in C(X)} T(Y) \ge \alpha T(X) - \beta (|C(X)|+1) T^\star.\]
	\end{definition}

\begin{theorem}[{\cite[Theorem~1]{uno2015constant}}]\label{thm:uno}
A recursive enumeration algorithm that satisfies the Push-out condition with time $T^\star$ has amortized delay~$\O{T^\star}$.
\end{theorem}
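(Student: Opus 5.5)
I would prove the statement by reducing it to the Pyramid method. The plan is to build an explicit potential $\Phi$ from the Push-out constants $\alpha,\beta$, check the Pyramid condition (Definition~\ref{def:pyramidcondition}), and then invoke Theorem~\ref{thm:amortized}. This also gives Theorem~\ref{thm:pushout_intro}. Set $s = 1/(\alpha-1) > 0$. I take the potential to be a multiple of the running time plus a constant,
\[\Phi(X) = s\,\frac{T(X)}{T^\star} + \gamma,\]
with a constant $\gamma>0$ and the reward $\mu$ to be fixed later. The point is that $\Phi(X) = \O{T(X)/T^\star + 1}$, which matches the intuition that under Push-out each child sends an amount proportional to its own cost to its parent.

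The key computation is the following. Summing $\Phi$ over the children and using the Push-out inequality gives
\[\sum_{Y\in C(X)}\Phi(Y) \ \ge\ s\Bigl(\alpha\frac{T(X)}{T^\star} - \beta(|C(X)|+1)\Bigr) + \gamma|C(X)|.\]
Subtracting $\Phi(X)$, and using $s\alpha - s = 1$, the left-hand side of the Pyramid condition is at least
\[\frac{T(X)}{T^\star} + \gamma\bigl(|C(X)|-1\bigr) + \mu|X'| - s\beta\bigl(|C(X)|+1\bigr).\]
It therefore suffices to show $\gamma(|C(X)|-1) + \mu|X'| \ge s\beta(|C(X)|+1)$, and I split into cases on $|C(X)|$.

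If $|C(X)|\ge 2$, then $|C(X)|-1 \ge (|C(X)|+1)/3$, so $\gamma \ge 3s\beta$ suffices. If $|C(X)|=0$, so $X$ is a leaf, the Push-out condition reads $0\ge \alpha T(X) - \beta T^\star$, hence $T(X)\le \beta T^\star/\alpha$. A leaf visits at least one element, so it suffices that $\mu \ge \gamma + s\beta$; note the constant $\gamma$ enters here with a minus sign. If $|C(X)|=1$, the requirement becomes $\mu|X'| \ge 2s\beta$.

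The case $|C(X)|=1$ is the main obstacle. An iteration with a single child that visits nothing cannot be paid for by this potential, and in fact cannot be paid for at all. A long path of such iterations, each of constant cost $\approx 2\beta T^\star/(\alpha-1)$, satisfies the Push-out inequality but gives only one output for linear work. So I would make explicit the standing assumption implicit in Uno's framework: every iteration with exactly one child visits at least one element, or equivalently such chains are contracted into a single iteration. Under this assumption, $\mu \ge 2s\beta$ handles this case.

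Choosing $\gamma = 3s\beta$ and $\mu = 4s\beta$ (or any $\mu>0$ if $\beta=0$) then satisfies all three cases. Theorem~\ref{thm:amortized} yields total time $\O{|X|T^\star}$, that is, amortized delay $\O{T^\star}$.
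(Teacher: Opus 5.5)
Your proposal is correct and follows the paper's route: Proposition~\ref{obs:push_out} uses the same potential $\Phi(X) = \frac{T(X)}{(\alpha-1)T^\star} + \frac{3\beta}{\alpha-1}$ (your $\gamma = 3s\beta$), derives the Pyramid condition from the Push-out inequality, and concludes via Theorem~\ref{thm:amortized}. Your treatment of leaves (paid for by $\mu|X'|$) and of single-child iterations is more careful than the paper's, which only checks nodes with $|C(X)|\ge 2$ and asserts that every inner node has at least two children; your single-child-path example correctly shows that some such standing assumption is needed.
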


	We prove that the Pyramid method subsumes the Push-out method by defining a particular potential function for which both methods give equivalent conditions.
	
	\begin{proposition}\label{obs:push_out}
If a recursive algorithm $\alg$ satisfies the Push-out condition with time $T^\star$ for constants $\alpha > 1$ and $\beta \geq 0$, the $\alg$ satisfies the Pyramid condition with time $T^\star$ for the potential function 
$$\Phi(X) = \frac{T(X)}{(\alpha-1)T^\star} + \frac{3\beta}{\alpha-1}.$$
	\end{proposition}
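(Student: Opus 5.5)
The plan is to verify the inequality of Definition~\ref{def:pyramidcondition} directly for the given $\Phi$. We substitute the Push-out condition (Definition~\ref{def:pushout}) into the sum over children and choose the reward constant $\mu$ large enough to absorb the additive error terms. No earlier result beyond the two definitions is needed. Theorem~\ref{thm:amortized} then recovers the conclusion of Theorem~\ref{thm:uno}.

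\textbf{Step 1 (main computation).} Fix an iteration $X$ and write $c = |C(X)|$. By definition of $\Phi$,
\[
\sum_{Y \in C(X)} \Phi(Y) = \frac{\sum_{Y\in C(X)} T(Y)}{(\alpha-1)T^\star} + \frac{3\beta c}{\alpha-1}.
\]
Applying $\sum_{Y} T(Y) \ge \alpha T(X) - \beta(c+1)T^\star$ bounds this from below by
\[
\frac{\alpha T(X)}{(\alpha-1)T^\star} + \frac{\beta(2c-1)}{\alpha-1}.
\]
Subtracting $\Phi(X) = \frac{T(X)}{(\alpha-1)T^\star} + \frac{3\beta}{\alpha-1}$ and using $\frac{\alpha}{\alpha-1} - \frac{1}{\alpha-1} = 1$ gives
\[
\sum_{Y\in C(X)}\Phi(Y) - \Phi(X) \;\ge\; \frac{T(X)}{T^\star} + \frac{\beta(2c-4)}{\alpha-1}.
\]
So the Pyramid condition holds as soon as $\mu|X'| + \frac{\beta(2c-4)}{\alpha-1} \ge 0$.

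\textbf{Step 2 (case analysis on $c$).}
\begin{itemize}
\item If $c \ge 2$, the error term is nonnegative and we are done with any $\mu > 0$.
\item If $c = 0$, then $X$ is a leaf. Since $X \neq \emptyset$ and there are no children, $X' = X$ and $|X'| \ge 1$. Choosing $\mu \ge 4\beta/(\alpha-1)$ suffices.
\item If $c = 1$ and $X' \neq \emptyset$, then $\mu \ge 2\beta/(\alpha-1)$ suffices.
\end{itemize}
Hence we fix $\mu = \max\{1, 4\beta/(\alpha-1)\}$.

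\textbf{Step 3 (main obstacle).} The remaining case is an inner iteration with exactly one child and $X' = \emptyset$, where the bound from Step 1 falls short by $2\beta/(\alpha-1)$ coins. Here I would first use the standard normalization implicit in Uno's framework: an iteration with a single child that visits nothing can be merged with its child. This merge does not change the output order. It also keeps the Push-out inequality with the same constants, up to adjusting $\beta$, which is only a constant-factor change.

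Alternatively, if one insists on the literal setup, I would try to shift the missing coins into the constant part of $\Phi$. Even so, I expect the clean statement to rely on this degenerate case being excluded, as in Uno's setting. This is the step I would check most carefully against the precise conventions of~\cite{uno2015constant}.
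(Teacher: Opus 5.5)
Your Steps 1 and 2 are the paper's computation, which reaches $\sum_{Y}\Phi(Y)-\Phi(X)\ge T(X)/T^\star+\frac{\beta}{\alpha-1}(2|C(X)|-4)$. Your explicit choice $\mu\ge 4\beta/(\alpha-1)$ for leaves covers a point the paper's displayed chain passes over: its last inequality uses $|C(X)|\ge 2$ and never invokes the term $\mu|X'|$.

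The paper does not handle your Step 3 by any normalization. It simply uses that every inner node of the recursion tree has at least two children. This holds for the paper's algorithms, where a nonempty poset spawns $k+1\ge 2$ calls. You should adopt that hypothesis outright. It cannot be dispensed with, and neither of your two fallbacks works.

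\begin{itemize}
\item \textbf{Shifting into the constant term.} When $|C(X)|=1$ and $X'=\emptyset$, the constant parts of $\Phi$ cancel exactly, so this changes nothing. For any $\Phi$ of the form $aT+b$, the condition reduces to $T(Y)\ge\alpha T(X)$. Push-out only gives $T(Y)\ge\alpha T(X)-2\beta T^\star$.
\item \textbf{Merging.} Merging does not preserve Push-out. Take a path of $m$ single-child iterations with $X'=\emptyset$ and $T=1$, ending in a leaf with $T=1$. It satisfies Push-out with $\alpha=\beta=2$ and $T^\star=1$ for every $m$, yet spends $m+1$ instructions on a single output. Collapsing the path yields a leaf violating $T\le\beta T^\star/\alpha$ for any fixed $\beta$ once $m$ is large. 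With the given $\Phi$, the Pyramid condition fails at every path node, since there $\Phi(Y)-\Phi(X)=0<1=T(X)/T^\star$.
\end{itemize}

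So the proposition is only true under the at-least-two-children assumption. With that assumption stated, your Steps 1--2 already form a complete proof, so you should drop Step 3.
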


\begin{proof}
	We have
	\begin{align*}
		\sum_{Y \in C(X)} \Phi(Y) - \Phi(X)
		=&\; \frac{1}{(\alpha-1)T^\star}\left(\sum_{Y \in C(X)}T(Y)-T(X)\right) + \frac{3\beta}{\alpha-1} (|C(X)|-1)\\
		\ge&\; \frac{1}{(\alpha-1)T^\star}\left((\alpha-1)T(X) - \beta(|C(X)|+1)T^\star \right) + \frac{\beta}{\alpha-1} (3|C(X)|-3)\\
		=&\; \frac{T(X)}{T^\star} + \frac{\beta}{\alpha-1}(2|C(X)|-4) \\
		\ge&\; \frac{T(X)}{T^\star}.
	\end{align*}
The first inequality uses the Push-out condition and the second one follows from the fact that every inner node of the recursion tree has at least two children.
\end{proof}

	This result, combined with Theorem~\ref{thm:amortized}, yields an alternative proof of \cite[Theorem~1]{uno2015constant}. 
	Moreover, it shows that any enumeration problem analyzed by the Push-out method can also be analyzed by the Pyramid method, yielding the same amortized time complexity. 
	In particular, we can invoke Lemma~\ref{lem:gap} to determine the complexity of enumerating the first $k$ elements.
	This applies, for instance, to the enumeration of elimination orderings, matchings forests, connected induced subgraphs, and spanning trees in a given graph studied in~\cite{uno2015constant}.
	
	In Example~\ref{ex:uno} below, we show that for Algorithm~\ref{alg:ideal} enumerating ideals in a poset, the Push-out method fails to show that the given algorithm has constant delay, whereas we can show this with the Pyramid method. 
	The same applies for Algorithm~\ref{alg:antichains}.

		\subsection{Worst-case delay for Gray code algorithms}\label{sec:worstcase}
		
		We now describe a method to deamortize a recursive algorithm $\alg$ that satisfies the Pyramid condition with time $T^\star$, i.e.,~$\alg$ has amortized delay $T^\star$, provided that $\alg$ enumerates the elements in so-called \emph{Gray code order}.
		The goal is to obtain an algorithm~$\halg$ with worst-case delay $\O{T^\star}$ that enumerates the elements in the same ordering as $\alg$.
		This method was already described in an unpublished technical report by Uno~\cite{uno2003}. 
		We include a description here for completeness.
		Capelli and Strozecki~\cite{capelli2021amortized} present a different approach for deamortization.

		We assume that $\alg$ enumerates all elements in a \emph{Gray code order}. 
		This means that the modifications in the data structures between two consecutively visited elements can be stored in $\O{1}$ memory, and the data structure can be modified for the next element in the worst-case time $\O{T^\star}$.
		
		The deamortization of $\alg$ uses an auxiliary queue $Q$ with $\frac{2\Delta}{\mu T^\star}$ slots.
		Every slot stores the changes between two consecutively visited elements.
		Algorithm $\halg$ simulates the computation of Algorithm~$\alg$ as follows:
		whenever $\alg$ visits an element, Algorithm~$\halg$ stores it in the queue $Q$.
		More precisely, $\halg$ saves the changes in the data structure between two elements consecutively visited by~$\alg$, thereby using one slot of $Q$.
		
		In the beginning, $\Delta$ instructions of $\alg$ are executed.
		Subsequently, after each $\mu T^\star$ instructions of Algorithm~$\alg$, Algorithm $\halg$ visits and removes the first element in $Q$.
		Concretely, $\halg$ applies all changes stored in the first slot of $Q$ to the data structure. 
		By assumption, this can be computed in time $\O{T^\star}$.
		Furthermore, whenever $Q$ contains more than $\frac{2\Delta}{\mu T^\star}$ elements, Algorithm~$\halg$ visits and removes the first element in $Q$.
		It remains to prove the following:
		
		\begin{lemma}
		The queue $Q$ is non-empty whenever Algorithm $\halg$ needs to visit an element.
		\end{lemma}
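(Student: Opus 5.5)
The plan is to apply Lemma~\ref{lem:gap} at the moments when $\halg$ must visit an element, after translating the schedule of $\halg$ into instruction counts of $\alg$. Index the visits of $\halg$ triggered by the timer. The $m$-th such visit (for $m \ge 1$) happens after $\alg$ has executed $\Delta + m\mu T^\star$ instructions in total: $\Delta$ in the initial phase, plus $m$ blocks of $\mu T^\star$. Visits triggered by overflow only remove extra elements, and they happen only when $Q$ is large, so they cannot empty $Q$ at a timer moment. I will handle them by counting.

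Suppose that, just before the $m$-th timer visit, $\halg$ has removed $r$ elements in total, counting both timer and overflow removals. The queue is non-empty iff $\alg$ has visited at least $r+1$ elements by then. By the second part of Lemma~\ref{lem:gap}, the $j$-th element is visited after at most $\Delta + j\mu T^\star - 1$ instructions. Taking $j = m$, this is strictly less than $\Delta + m\mu T^\star$, so at least $m$ elements have been produced by $\alg$ at that moment.

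If no overflow removals occurred, then $r = m-1$, and $m \ge r+1$ elements have been produced, so $Q$ is non-empty. If overflow removals did occur, consider the last overflow removal before this moment. Right after it, $Q$ still contained about $\frac{2\Delta}{\mu T^\star}$ elements. I will then use the first part of Lemma~\ref{lem:gap} to bound how fast $\halg$ can drain $Q$ relative to how fast $\alg$ refills it. Let $\ell$ be the number of timer visits since that overflow. These visits span $\ell\mu T^\star$ instructions of $\alg$. Over any window, the number of elements $\alg$ produces is at least (instructions $- 2\Delta$)$/(\mu T^\star)$, roughly. Hence the net loss in $Q$ is at most about $2\Delta/(\mu T^\star)$, which is no more than the buffer size. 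Therefore $Q$ stays non-empty.

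The main obstacle is this off-by-one and constant bookkeeping in the overflow case. The buffer of size $\frac{2\Delta}{\mu T^\star}$ matches exactly the $2\Delta$ slack in Lemma~\ref{lem:gap}, so the $-1$ terms in the lemma must be tracked carefully to obtain strict non-emptiness. If the bookkeeping proves delicate, I would argue directly with the element indices. Let $i$ be the index of the last element of $\alg$ that was in $Q$ at the overflow moment. Then apply $\delta_{i,j} \le 2\Delta + (j-i)\mu T^\star - 1$ with $j = i + \ell - \frac{2\Delta}{\mu T^\star}$, which shows that element $j$ arrives before the $\ell$-th subsequent timer tick.
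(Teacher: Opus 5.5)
Your two-case structure is the paper's: the no-overflow case uses the second bound of Lemma~\ref{lem:gap}, and the overflow case uses the first bound, with the buffer $\frac{2\Delta}{\mu T^\star}$ absorbing the $2\Delta$ slack. Your first case is correct. The second case has a real gap, and it is not the $-1$ bookkeeping you anticipate.

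You fixed the timer schedule globally: the $m$-th timer visit comes after $\Delta+m\mu T^\star$ instructions. You then assert that the $\ell$ timer visits after the last overflow span $\ell\mu T^\star$ instructions. Under your own schedule this is false. If element $i$ triggers that overflow at time $t(i)$, the $\ell$-th subsequent tick is at $t(i)+\epsilon+(\ell-1)\mu T^\star$, and the offset $\epsilon$ of the first tick can be as small as $1$. Your index computation only gives $t(j)-t(i)\le 2\Delta+(j-i)\mu T^\star-1=\ell\mu T^\star-1$ for $j=i+\ell-\frac{2\Delta}{\mu T^\star}$. That can lie beyond the tick by almost a full block.

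With only Lemma~\ref{lem:gap} available, this cannot be repaired. Take $\mu T^\star=\Delta=10$, so there are two slots and ticks at $20,30,40,50,\dots$. The visit times $t(1),\dots,t(5)=1,2,25,29,51$ satisfy both bounds of the lemma. The overflow at time $29$ leaves $Q=\{3,4\}$, the ticks at $30$ and $40$ drain it, and at time $50$ the queue is empty.

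The missing ingredient is synchronizing the timer with overflows. The paper counts that $\alg$ executes $(j-i)\mu T^\star+2\Delta$ instructions between the overflow at element $i$ and the moment element $j$ is needed. That count is exact precisely when the $\mu T^\star$ instructions are measured from the most recent visit of $\halg$, i.e.\ when the timer is restarted at every overflow removal. Adopt that schedule; it still gives worst-case delay $\O{T^\star}$. Then the $\ell$-th scheduled visit after the overflow is at exactly $t(i)+\ell\mu T^\star$. The needed element is already in $Q$ when $\ell\le\frac{2\Delta}{\mu T^\star}$, and otherwise it arrives by $t(i)+\ell\mu T^\star-1$, so your index argument goes through verbatim.

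Alternatively, keep the fixed schedule but give $Q$ one extra slot. The $\ell$-th tick then needs element $i+\ell-\frac{2\Delta}{\mu T^\star}-1$, which arrives by $t(i)+(\ell-1)\mu T^\star-1$, before the tick.
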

	
		\begin{proof}
			Assume that $Q$ is empty when Algorithm $\halg$ needs to visit the $j$-th element.
			First assume that~$Q$ has never been full before.
			When $\halg$ needs to visit the $j$-th element, Algorithm~$\alg$ has executed $j \mu T^\star + \Delta$ instructions.
			By Lemma \ref{lem:gap}, Algorithm $\alg$ executed at most $j \mu T^\star + \Delta-1$ instructions before visiting the $j$-th element.
			This contradicts the assumption that~$Q$ is empty.
			
			Now assume that the last occasion when $Q$ was full was when $\alg$ visited the $i$-th element.
			From the size of the queue, it follows that $\alg$ executed $(j-i) \mu  T^\star + 2\Delta$ instructions.
			However, by Lemma~\ref{lem:gap}, Algorithm~$\alg$ executed at most $(j-i) \mu  T^\star + 2\Delta-1$ instructions between visiting the $i$-th and the $j$-th element.
			This is a contradiction.
		\end{proof}
		
		This yields the following theorem for deamortization:
		
		\begin{theorem}\label{thm:loopless}
		Let $\alg$ be a recursive algorithm satisfying the Pyramid condition with time~$T^\star$. 
		Furthermore, assume that the change between two successively visited elements of $\alg$ can be stored using $\O{1}$ memory and that these changes can be applied in time $\O{T^\star}$. 
		Then Algorithm $\halg$ enumerates all elements in the same order as $\alg$ with worst-case delay $\O{T^\star}$ while requiring $\O{\Delta/T^\star}$ additional space and $\O{\Delta}$ additional time for the initialization.
		\end{theorem}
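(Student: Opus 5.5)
The theorem follows by putting together the construction of $\halg$ just described with the preceding lemma that $Q$ is never empty when an element must be visited. I would check four things in turn: correctness of the order, worst-case delay, additional space, and initialization time.

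\emph{Order.} $\halg$ stores in $Q$, in FIFO order, exactly the changes between consecutive elements visited by $\alg$. It applies them in the same order to its own copy of the data structure. Hence $\halg$ visits the elements of $X$ in the same order as $\alg$, with each element visited exactly once. At the end of the run of $\alg$, the remaining contents of $Q$ are flushed one slot every $\O{T^\star}$ time.

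\emph{Worst-case delay.} Outside the initial phase, $\halg$ emits an element after every $\mu T^\star$ simulated instructions of $\alg$. By the lemma, $Q$ is non-empty at those moments, so such an emission is always possible. Between two emissions, $\halg$ performs three kinds of work:
\begin{enumerate}[(i)]
\item at most $\mu T^\star = \O{T^\star}$ simulated instructions of $\alg$;
\item $\O{1}$ bookkeeping to push slots, since each visit of $\alg$ stores an $\O{1}$-size change;
\item applying one stored change, which costs $\O{T^\star}$ by assumption.
\end{enumerate}

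One subtlety concerns the extra emissions triggered when $Q$ exceeds $\frac{2\Delta}{\mu T^\star}$ slots. These only shorten gaps between visits, so they do not hurt the delay. Each such emission costs $\O{T^\star}$ and is charged to the push that caused the overflow. The resulting bound is delay $\O{T^\star}$.

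\emph{Space and initialization.} $Q$ never holds more than $\frac{2\Delta}{\mu T^\star}+1$ slots, each of $\O{1}$ size. Since $\mu$ is a constant, the additional space is $\O{\Delta/T^\star}$. The initialization consists of the first $\Delta$ instructions of $\alg$ executed before the first emission, which is $\O{\Delta}$ additional time.

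The main obstacle is establishing that emissions never stall, that is, that $Q$ is non-empty at each scheduled emission. This is exactly the content of the previous lemma, derived from Lemma~\ref{lem:gap}, so here it is only invoked. The remaining care is in the two boundary cases. When the queue has never been full, the relevant bound is the one on instructions before the $j$-th visit. When it was last full at the $i$-th visit, the relevant bound is the one on instructions between visits $i$ and $j$. The lemma's proof already covers both cases.
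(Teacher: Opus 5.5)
Your proof is correct and follows the paper's route. The theorem comes directly from the queue-based simulation $\halg$ together with the lemma, derived from Lemma~\ref{lem:gap}, that $Q$ is non-empty whenever $\halg$ must visit an element; space is bounded by the queue capacity $\frac{2\Delta}{\mu T^\star}$ and initialization by the first $\Delta$ simulated instructions, and your accounting of overflow-triggered emissions and of flushing $Q$ after $\alg$ terminates only makes explicit details the paper leaves implicit.
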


	\section{Constant amortized time enumeration of poset ideals}\label{sec:ideals}
	
	In this section, we present a recursive algorithm (Algorithm~\ref{alg:ideal})  that enumerates all ideals of an input poset $(P,\preceq)$ with constant amortized delay.
	
	\subsection{Preliminaries}\label{sec:ideal_prelims}
	We first introduce some auxiliary structures that will be used to define the recursive structure of our algorithm. 
	Throughout, let $P$ be a poset and fix a longest chain $C$ of $P$.
	Let $k =|C|$. 
	The recursion of the enumeration algorithm uses a partition of the ideals of $P$ into $k+1$ groups, depending on the maximal element of $C$ contained in the ideal or the fact that the ideal does not contain any element of $C$, respectively.
	
	For an illustration of the following definitions, see Figure~\ref{fig:piexample}.
	Let $\id(P)$ denote the set of ideals of~$P$.
	Write $C = \{c_1, \dots, c_{k}\}$ with $c_1 \prec \dots \prec c_{k}$.
	For simplicity, we add two virtual elements $c_0$ and $c_{k+1}$, which are smaller and larger than all elements in $P$, respectively.
	For $I \in \id(P)$, let $\zeta(I) = \max\{i \colon c_i \in I\}$ and set $\zeta(I) = 0$ if $I \cap C = \emptyset$. 
	For $i \in \{0, \dots, k\}$, let $\id_i(P) = \{I \in \id(P) \colon \zeta(I) = i\}$.
	We set $D_i = \{u \in P \colon u \preceq c_i\}$ to be the downset of $c_i$, and $U_i = \{u \in P \colon u \not \succeq c_{i+1}\}$. 
	Note that $D_i$ and $U_i$ are the smallest and largest ideal in $\id_i(P)$, respectively. 
	In particular, we have $D_0 = \emptyset$ and $U_{k} = P$.
	
	For any $u \in P$, let
	\[s_u = \max\{i \in \{0, \dots, k\} \colon c_i \preceq u\}\text{ and } l_u = \min\{i \in \{0, \dots, k\} \colon u \preceq c_i\} \]
	denote the indices of the smallest element of~$C$ smaller than $u$ and the largest element of~$C$ larger than $u$, respectively.
	Note that $s_{c_i} = l_{c_i} = i$.
	For $i \in \set{0, \ldots, k+1}$, let 
	\[S_i = \set{u \in P \setminus C \colon s_u = i} \text{ and } L_i = \set{u \in P \setminus C \colon l_u = i}.\]
	Note that $L_1 = S_{k} = \emptyset$.
	Moreover, we have $D_i \setminus C = L_2 \cup \cdots \cup L_i$ and $U_i \setminus C = S_{1} \cup \cdots \cup S_{i}$. 
	Let $P_i = (P \setminus D_i) \cap U_i$. 
	In other words, $P_i$ consists of those elements in $u \in P$ with $u \not \preceq c_i$ and $u \not \succeq c_{i+1}$. 
	In particular, the following is easily verified:
	
	\begin{lemma} \label{lem:ideal-equiv}
		For $u \in P \setminus C$ and $i \in \set{0, \ldots, k}$, the following properties are equivalent: 
		\begin{enumerate}
		\item $u \in P_i$,
		\item $s_u \le  l_u$,
		\item $u$ is incomparable with $c_i$ or $c_{i+1}$.	
		\end{enumerate}
	\end{lemma}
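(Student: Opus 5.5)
We read condition (2) as the $i$-dependent condition $s_u \le i < l_u$. Taken literally, $s_u \le l_u$ does not involve $i$, whereas (1) and (3) do, so the intended statement must locate $i$ between $s_u$ and $l_u$. The plan is to prove (1)$\Leftrightarrow$(2) directly from the definitions, then (1)$\Leftrightarrow$(3), where only the direction (1)$\Rightarrow$(3) uses that $C$ is a longest chain.

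\emph{Monotonicity.} For fixed $u$, the set $\{j : u \preceq c_j\}$ is upward closed in $\{0,\dots,k+1\}$: if $u \preceq c_j$ and $j \le j'$, then $u \preceq c_{j'}$ by transitivity. This set contains $k+1$, by the convention on the virtual element $c_{k+1}$, so it equals $\{j : j \ge l_u\}$. Symmetrically, $\{j : c_j \preceq u\}$ is downward closed, contains $0$, and equals $\{j : j \le s_u\}$. Hence $u \not\preceq c_i \iff i < l_u$ and $u \not\succeq c_{i+1} \iff i+1 > s_u$, that is, $s_u \le i$. Since $P_i$ consists exactly of the $u$ with $u \not\preceq c_i$ and $u \not\succeq c_{i+1}$, this gives (1)$\Leftrightarrow$(2). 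The only care needed is with the virtual elements at $i=0$ and $i=k$.

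\emph{(3)$\Rightarrow$(1).} Suppose $u$ is incomparable with $c_i$. Then $u \not\preceq c_i$. Moreover $u \not\succeq c_{i+1}$, since otherwise $u \succeq c_{i+1} \succ c_i$ would make $u$ comparable to $c_i$. So $u \in P_i$. The case where $u$ is incomparable with $c_{i+1}$ is symmetric: $u \not\succeq c_{i+1}$, and $u \not\preceq c_i$ since otherwise $u \preceq c_i \prec c_{i+1}$.

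\emph{(1)$\Rightarrow$(3), the main step.} Let $u \in P_i$ and suppose $u$ is comparable with both $c_i$ and $c_{i+1}$. Because $u \not\preceq c_i$ and $u \not\succeq c_{i+1}$, this forces $c_i \prec u \prec c_{i+1}$, where $u \ne c_i, c_{i+1}$ because $u \notin C$. Then $C \cup \{u\}$ is a chain with $k+1$ elements, contradicting that $C$ is a longest chain. At $i=0$ or $i=k$, one comparability is automatic via the virtual element, so the argument yields $u \prec c_1$ or $u \succ c_k$ respectively. Again $C \cup \{u\}$ is a longer chain, consistent with $L_1 = S_k = \emptyset$.

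I expect no real obstacle. The only delicate points are fixing the intended reading of (2) and handling the boundary indices with the virtual elements $c_0, c_{k+1}$, which must be excluded when forming the longer chain in the last step.
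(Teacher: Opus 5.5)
Your proof is correct. The paper gives no proof beyond saying the lemma is ``easily verified'', and your direct unfolding of the definitions is exactly the intended argument: monotonicity of $\{j : u \preceq c_j\}$ and $\{j : c_j \preceq u\}$ gives (1)$\Leftrightarrow$(2), and inserting $u$ into $C$ to get a longer chain gives (1)$\Rightarrow$(3). You are also right that (2) must be read as $s_u \le i < l_u$, with $l_u$ allowed to equal $k+1$; this is the form the paper itself relies on later, e.g.\ when it deduces $u \in P_{s_u}$ and $\sum n_i = n-k+q'$.
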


In particular, $P_0$ and $P_{k}$ consist of all elements of $P$ which are incomparable to $c_1$ and to~$c_{k}$, respectively.
From Lemma~\ref{lem:ideal-equiv}, the following easily follows:

\begin{lemma}\label{lem:once}
	Every antichain~$A$ of $P \setminus C$ is contained in at least one of the posets $P_0, \ldots, P_{k}$.
	Furthermore, $A$ lies in at least two such posets if and only if $C$ contains an element incomparable to all elements of~$A$.
\end{lemma}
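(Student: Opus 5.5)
The plan is to reduce everything to intervals of indices on the chain and then use the one-dimensional Helly property. For $u \in P \setminus C$, unwinding the definition of $P_i$ gives $u \notin P_i$ iff $u \preceq c_i$ or $u \succeq c_{i+1}$. In terms of $s_u$ and $l_u$ this reads $l_u \le i$ or $s_u \ge i+1$. Hence
\[
u \in P_i \iff s_u \le i \le l_u - 1 .
\]
This is the precise form of Lemma~\ref{lem:ideal-equiv} I will use. So the set of indices $i$ with $u \in P_i$ is the integer interval $J_u = [s_u, l_u-1]$. Here I use the virtual elements $c_0, c_{k+1}$, so that $0 \le s_u$ and $l_u \le k+1$.

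In the same way, for $j \in \{1,\dots,k\}$ we have $c_j \preceq u$ iff $j \le s_u$, and $u \preceq c_j$ iff $j \ge l_u$. This uses only that $C$ is a chain and transitivity. Therefore $c_j$ is incomparable to $u$ iff $s_u < j < l_u$.

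\textbf{First claim.} An antichain $A \subseteq P\setminus C$ lies in $P_i$ iff $i \in \bigcap_{u\in A} J_u = [\max_{u} s_u,\ \min_{v} l_v - 1]$. This interval is nonempty iff $s_u < l_v$ for all $u,v \in A$.

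\emph{Case $u \ne v$.} Suppose $s_u \ge l_v$. Then $v \preceq c_{l_v} \preceq c_{s_u} \preceq u$, so $u$ and $v$ are comparable, contradicting that $A$ is an antichain.

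\emph{Case $u=v$.} If $s_u \ge l_u$, then $c_{l_u} \preceq c_{s_u} \preceq u \preceq c_{l_u}$. This forces $u = c_{l_u}$ (or $u$ equal to a virtual element), which is impossible since $u \notin C$.

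\emph{Case $A=\emptyset$.} The intersection is all of $\{0,\dots,k\}$.

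In all cases the intersection is nonempty, so $A$ lies in at least one $P_i$.

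\textbf{Second claim.} $A$ lies in at least two of the posets iff the intersection interval has length at least two, i.e.\ iff $\max_u s_u + 1 \le \min_v l_v - 1$. This holds iff some integer $j$ satisfies $\max_u s_u < j < \min_v l_v$. Since $\max s \ge 0$ and $\min l \le k+1$, any such $j$ lies in $\{1,\dots,k\}$. By the characterization above, such $j$ exist exactly when $c_j$ is incomparable to every $u \in A$. For $A=\emptyset$ both sides hold trivially (for $k \ge 1$).

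\textbf{Main obstacle.} The only delicate point is getting the index bookkeeping right: the strict versus non-strict inequalities, and the role of the virtual elements $c_0, c_{k+1}$. The intended reading of Lemma~\ref{lem:ideal-equiv} must be the interval statement $s_u \le i < l_u$. I will verify this directly from the definitions rather than rely on its stated form. Note that longestness of $C$ is not even needed here, only that $C$ is a chain.
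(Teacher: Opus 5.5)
Your proof is correct and is essentially the paper's argument. The paper picks $u \in A$ with $s_u$ maximal and shows $A \subseteq P_{s_u}$, using exactly your reduction $u \in P_i \iff s_u \le i < l_u$ (the intended reading of Lemma~\ref{lem:ideal-equiv}) and the same contradiction $v \preceq c_{l_v} \preceq c_{s_u} \preceq u$; this amounts to showing that $\max_{u} s_u$ lies in your intersection interval. Your interval formulation also makes the second claim fully explicit, which the paper only says ``can be shown similarly'', and you are right that only the chain property of $C$ is used here, not its maximality.
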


\begin{proof}
Choose $u \in A$ such that $s_u$ is maximal. 
By Lemma~\ref{lem:ideal-equiv}, we have $u \in P_{s_u}$. 
If $A \not \subseteq P_{s_u}$, then there exists $v \in A \setminus  P_{s_u}$. 
By Lemma~\ref{lem:ideal-equiv} and the maximality assumption on $u$, this implies $l_v \leq s_u$. 
But then $v \preceq c_{l_v} \preceq c_{s_u} \preceq u$, which is a contradiction to $u$ and $v$ being incomparable. 
The second statement can be shown similarly.
\end{proof}

This lemma implies that every element of $P \setminus C$ lies in at least two of the posets $P_0, \ldots, P_{k}$.
In particular, every element of $L_i$ is incomparable with $c_{i-1}$ and every element of $S_i$ is incomparable with $c_{i+1}$.	
	
We now decompose the set of ideals of $P$. 
Clearly $\id(P) = \id_0(P) \dcup \cdots \dcup \id_{k}(P)$. 
	The next lemma establishes a bijective correspondence between $\id_i(P)$ and $\id(P_i)$, which is the central ingredient for the recursion. 
	
	\begin{lemma}\label{lemma:idealcorrespondence}
		For $i \in \{0, \dots, k\}$, the map 
		\[\varphi_i^P \colon \id_i(P) \to \id(P_i),\,I \mapsto I \setminus D_i\] 
		is a (well-defined) bijection. 	
		Its inverse maps $I' \in \id(P_i)$ to $I' \cup D_i \in \id_i(P)$.
	\end{lemma}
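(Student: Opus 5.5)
The plan is to verify the two maps directly and then check that they are mutually inverse. Throughout, I use that $D_i$ is the downset of $c_i$ (with $D_0=\emptyset$), and that $P_i = (P\setminus D_i)\cap U_i$, where $U_i=\{u\colon u\not\succeq c_{i+1}\}$ (with $U_k=P$). I also use the observation that every ideal $I\in\id_i(P)$ satisfies $D_i\subseteq I\subseteq U_i$. Indeed, $c_i\in I$ (for $i\ge 1$) forces $D_i\subseteq I$ by downward closure. Conversely, if $u\in I$ with $u\succeq c_{i+1}$, then $c_{i+1}\in I$, contradicting $\zeta(I)=i$.

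\textbf{Well-definedness of $\varphi_i^P$.} Let $I\in\id_i(P)$ and set $I'=I\setminus D_i$. By the sandwich above, $I'\subseteq U_i\setminus D_i=P_i$. To see that $I'$ is an ideal of the subposet $P_i$, take $u\in I'$ and $v\in P_i$ with $v\preceq u$. Then $v\in I$ because $I$ is downward closed in $P$, and $v\notin D_i$ because $v\in P_i$. Hence $v\in I'$.

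\textbf{The inverse map lands in $\id_i(P)$.} Let $I'\in\id(P_i)$ and set $J=I'\cup D_i$. First, $J$ is downward closed in $P$. Take $u\in J$ and $v\preceq u$.
\begin{itemize}
\item If $u\in D_i$, then $v\preceq c_i$, so $v\in D_i\subseteq J$.
\item If $u\in I'\subseteq P_i$ and $v\in D_i$, then $v\in J$ immediately.
\item If $u\in I'$ and $v\notin D_i$, we must show $v\in U_i$. If instead $v\succeq c_{i+1}$, then $u\succeq v\succeq c_{i+1}$, contradicting $u\in U_i$. So $v\in P_i$, and $v\in I'$ because $I'$ is an ideal of $P_i$.
\end{itemize}
Second, $\zeta(J)=i$. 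For $i\ge 1$ we have $c_i\in D_i\subseteq J$. No $c_j$ with $j>i$ lies in $J$: such a $c_j$ is not in $D_i$ (as $c_j\succ c_i$) and not in $P_i$ (as $c_j\succeq c_{i+1}$).

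\textbf{Mutual inverses.} The two maps are inverse to each other by set algebra. Since $I'\cap D_i=\emptyset$, we get $(I'\cup D_i)\setminus D_i=I'$. Since $D_i\subseteq I$, we get $(I\setminus D_i)\cup D_i=I$. This gives bijectivity.

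The only mildly delicate step is the third case above, which shows that $I'\cup D_i$ is downward closed. It is the one place where the upper constraint $U_i$ in the definition of $P_i$ is used. The boundary cases $i=0$ and $i=k$ are handled uniformly by the virtual elements $c_0$ and $c_{k+1}$.
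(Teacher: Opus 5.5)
Your proof is correct and follows the same route as the paper. You show $D_i\subseteq I\subseteq U_i$ to get well-definedness, check that $I'\cup D_i$ is downward closed, and obtain bijectivity from $D_i\subseteq I$ and $I'\cap D_i=\emptyset$. You are in fact more careful than the paper, which calls the downward closure of $I'\cup D_i$ ``clear'' and never verifies $\zeta(I'\cup D_i)=i$; your case analysis using $u\in U_i$ and your check that no $c_j$ with $j>i$ lies in $I'\cup D_i$ supply exactly those missing details.
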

	
	\begin{proof}
Let $I \in \id_i(P)$. 
As $I$ is downward-closed, we have $D_i \subseteq I$. 
Moreover, as $c_{i+1} \notin I$, we have $I \cap (S_{i+1} \cup \dots \cup S_{k-1}) = \emptyset$. 
Thus $I \setminus D_i$ is an ideal of $P_i$.
Now let $I' \in \id(P_i)$. 
Then clearly $I' \cup D_i$ is downward-closed, so an ideal of $P$.

For $I \in \id_i(P)$, we have $(I \setminus D_i) \cup D_i = I$ as $D_i \subseteq I$. 
Conversely, for $I' \in \id(P_i)$, we have $(I' \cup D_i)\setminus D_i = I'$ as $I' \cap D_i \subseteq P_i \cap D_i = \emptyset$. 
In particular, $\varphi_i^P$ is a bijection.
\end{proof}

We conclude this part with some observations that will be used in Section~\ref{sec:graycodeideals} to define a Gray code for $\id(P)$. 	
	
	\begin{lemma}\label{lemma:propertiesdi}
Let $i \in \{0, \dots, k-2\}$. 
		\begin{enumerate}
			\item For $I \in \id_i(P)$, the set $I \cup \set{c_{i+1},c_{i+2}}$ is an ideal of $P$ if and only if $L_{i+1} \cup L_{i+2} \subseteq I$.
Therefore, $U_i \cup \set{c_{i+1},c_{i+2}} \in \id_{i+2}(P)$ and $U_i \setminus D_{i+2} \in \id(P_{i+2})$.
			\item Similarly, for $I \in \id_{i+2}(P)$, the set $I \setminus \set{c_{i+1},c_{i+2}}$ is an ideal of $P$ if and only if $(S_{i+1} \cup S_{i+2}) \cap I = \emptyset$.
			Therefore, $D_{i+2} \setminus \set{c_{i+1},c_{i+2}} \in \id_{i}(P)$ and $D_{i+2} \setminus (D_{i} \cup \set{c_{i+1},c_{i+2}}) \in \id(P_{i})$.
		\end{enumerate}
	\end{lemma}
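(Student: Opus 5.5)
The plan is to prove part 1 directly from the definitions of $L_j$, $D_j$ and $U_i$, and then obtain part 2 by the dual argument.

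\emph{Part 1.} Fix $I \in \id_i(P)$. By Lemma~\ref{lemma:idealcorrespondence}, $I = I' \cup D_i$ with $I' \in \id(P_i)$; in particular $c_i \in I$ and $c_{i+1}, c_{i+2} \notin I$.

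For the forward direction, suppose $J = I \cup \set{c_{i+1},c_{i+2}}$ is an ideal. Then $D_{i+2} \subseteq J$. Every $u \in L_{i+1} \cup L_{i+2}$ lies in $P \setminus C$ and satisfies $u \preceq c_{i+2}$, so $u \in J$. Since $u \notin C$, this gives $u \in I$.

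For the converse, assume $L_{i+1} \cup L_{i+2} \subseteq I$. Since $I$ is an ideal, it suffices to show that the downsets of $c_{i+1}$ and $c_{i+2}$ lie in $J$. Take $u \prec c_{i+2}$.
\begin{itemize}
\item If $u \in C$, then $u = c_j$ for some $j \le i+1$. Either $c_j \in D_i \subseteq I$, or $c_j = c_{i+1} \in J$.
\item If $u \notin C$, then $l_u \le i+2$. Either $l_u \le i$, so $u \preceq c_i$ and $u \in D_i \subseteq I$; or $u \in L_{i+1} \cup L_{i+2} \subseteq I$.
\end{itemize}
This uses $D_j \setminus C = L_2 \cup \dots \cup L_j$. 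The downset of $c_{i+1}$ is handled by the same argument, since it is contained in the downset of $c_{i+2}$.

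\emph{Consequences of part 1.} Apply the criterion to $I = U_i$. First, $U_i \in \id_i(P)$: it is the largest ideal there. Second, every element of $L_{i+1} \cup L_{i+2}$ is a non-chain element below $c_{i+2}$. If such an element were $\succeq c_{i+1}$, it would lie strictly between $c_{i+1}$ and $c_{i+2}$, or equal one of them. The first case yields a longer chain, contradicting the maximality of $C$; the second contradicts $u \notin C$. Hence $L_{i+1} \cup L_{i+2} \subseteq U_i$, so $U_i \cup \set{c_{i+1},c_{i+2}}$ is an ideal.

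Its maximal chain element is $c_{i+2}$: the element $c_{i+3}$ is not in $U_i$, because $c_{i+3} \succeq c_{i+1}$. So this ideal lies in $\id_{i+2}(P)$. Applying $\varphi_{i+2}^P$ from Lemma~\ref{lemma:idealcorrespondence} gives $(U_i \cup \set{c_{i+1},c_{i+2}}) \setminus D_{i+2} = U_i \setminus D_{i+2} \in \id(P_{i+2})$.

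\emph{Part 2.} This is the dual argument. Fix $I \in \id_{i+2}(P)$ and set $J = I \setminus \set{c_{i+1},c_{i+2}}$. The set $J$ is an ideal exactly when no element of $J$ lies above $c_{i+1}$ or $c_{i+2}$. The elements of $J$ are all $\not\succeq c_{i+3}$, since $c_{i+3} \notin I$. A non-chain element $u \in J$ with $u \succeq c_{i+1}$ therefore has $s_u \in \set{i+1, i+2}$, i.e.\ $u \in S_{i+1} \cup S_{i+2}$. So $J$ is an ideal iff $(S_{i+1} \cup S_{i+2}) \cap I = \emptyset$.

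For $I = D_{i+2}$, the intersection is empty. An element $u \in S_{i+1} \cup S_{i+2}$ with $u \preceq c_{i+2}$ would sit strictly between $c_{i+1}$ and $c_{i+2}$, contradicting maximality of $C$; for $u \in S_{i+2}$, the relation $c_{i+2} \preceq u \preceq c_{i+2}$ forces $u \in C$. Hence $D_{i+2} \setminus \set{c_{i+1},c_{i+2}}$ is an ideal. Its chain part is $\set{c_1,\dots,c_i}$, so it lies in $\id_i(P)$. Applying $\varphi_i^P$ removes $D_i$ and gives the final claim.

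The main obstacle is the repeated use of the longest-chain property to rule out elements lying strictly between consecutive chain elements. It has to be invoked carefully, including the edge cases $i = 0$ and $c_0$ virtual, where $D_0 = \emptyset$.
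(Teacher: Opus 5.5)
Your proof is correct and follows essentially the same route as the paper: you check downward closure via $D_{i+2} = D_i \cup L_{i+1} \cup L_{i+2} \cup \set{c_{i+1},c_{i+2}}$ and then transport along $\varphi_{i+2}^P$ (resp.\ $\varphi_i^P$) from Lemma~\ref{lemma:idealcorrespondence}. Beyond the paper, you make explicit the longest-chain argument giving $L_{i+1} \cup L_{i+2} \subseteq U_i$ and $(S_{i+1}\cup S_{i+2}) \cap D_{i+2} = \emptyset$, which the paper passes over with ``in particular'', and you write out part 2, which the paper only calls analogous.
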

	
\begin{proof}
We only prove the first claim, the second is analogous. 
Let $I \in \id_i(P)$. 
First assume that $I \cup \set{c_{i+1},c_{i+2}}$ is an ideal of $P$. 
Then $L_{i+1} \cup L_{i+2} \subseteq D_{i+1} \cup D_{i+2} \subseteq I \cup \{c_{i+1}, c_{i+2}\}$ as this set is downward-closed. 
Conversely, assume $L_{i+1} \cup L_{i+2} \subseteq I$. 
We have $D_{i+2} = D_i \cup L_{i+1} \cup L_{i+2} \subseteq I$ as $D_i \subseteq I$ due to $I \in \id_i(P)$, and hence $I \cup \{c_{i+1}, c_{i+2}\}$ is downward-closed.
In particular, this yields $U_i \cup \set{c_{i+1},c_{i+2}} \in \id_{i+2}(P)$. 
By Lemma~\ref{lemma:idealcorrespondence}, we then have $\varphi_{i+2}^P(U_i \cup \{c_{i+1}, c_{i+2}\}) = U_i \setminus D_{i+2} \in \id(P_{i+2})$.
\end{proof}

\subsection{Constant amortized time algorithm}

Throughout, let $P$ be a poset on $n$ elements. 
We now describe an algorithm that lists all ideals of~$P$ with constant amortized delay. 
In this section, we describe a basic version (Algorithm~\ref{alg:ideal}) that does not necessarily list the ideals in Gray code order. 
In the subsequent section, we show how this version can be modified to output a Gray code listing by selecting a specific order for the recursive calls. 	

We use the notation from the Section~\ref{sec:ideal_prelims}. 
Additionally, let $Q$ denote the set of incomparable pairs of $P$ and set $q= |Q|$.

\paragraph{Description of the algorithm} 	
	

We use the bijection between $\id(P)$ and $\id(P_0) \dcup \cdots \dcup \id(P_{k})$ described in Lemma~\ref{lemma:idealcorrespondence}.
Given an input poset $P$, an iteration of Algorithm~\ref{alg:ideal} consists of computing the posets $P_0, \dots, P_{k}$ and proceeding recursively on these smaller posets.
Throughout the recursion, a list of elements $I$ containing the elements that have been already added to the currently processed ideal by a parent iteration is maintained. 
By Lemma~\ref{lemma:idealcorrespondence}, we obtain an ideal of $P$ from an ideal of $P_i$ by adding the elements in $D_i$. 
Thus the set $D_i = D_{i-1} \cup L_i \cup \{c_i\}$ is added to the array~$I$ at the beginning of the recursive call for the poset $P_i$. 

\paragraph{Implementation}

We now describe the implementation of the steps of Algorithm~\ref{alg:ideal} as well as the required data structures in detail. 
\medskip
	
\emph{Input and output:} The input poset $P$ is represented by an adjacency matrix $M$ with $M_{uv} = 1$ if $u \prec v$ and $M_{uv} = 0$ otherwise.
In the initialization phase, we relabel the elements of~$P$ to guarantee that $u \preceq v$ implies $u \leq v$. 
The ideals visited by Algorithm~\ref{alg:ideal} are stored in an array $I$.
\medskip

\emph{Computation of a longest chain.} We now describe the computation of a longest chain (Line~\ref{line:ac_chain}). 
In order to achieve that Algorithm~\ref{alg:ideal} runs with constant amortized delay, this computation needs to have time complexity $\O{n+q}$. 
In particular, we cannot use the topological sorting algorithm as it requires $\Omega(n^2)$ time.

Instead, we first compute an antichain decomposition of $P$, using the following algorithm (see Algorithm~\ref{alg:antichaindecomp}):
\begin{algorithm}[h]
	\caption{Creating an antichain decomposition.}
	\label{alg:antichaindecomp}
	\myproc{\Main{Poset $P$ on elements $\{1, \dots, n\}$ such that $u \preceq v$ implies $u \leq v$}}{
		$j \gets 0$\;
		\For{$i \gets 1, \ldots, n$}{	
			\For{$t \gets j, \ldots, 1$}{
				\For{$v \in A_t$}{
					\If{$v \prec u$}{
						\eIf{$t =j$}{
							$A_{j+1} \gets \{u\}$ \tcp{create a new antichain $A_{j+1}$}
							$x_u = v$\;
							$j \gets j+1$\;
							\KwGoTo \texttt{iteration\_end} \tcp{break two inner loops}
						}
						{
							$A_{j+1} \gets A_{j+1} \cup \{u\}$ \tcp{add element to the existing antichain $A_{j+1}$}
							$x_u = v$\;
							\KwGoTo \texttt{iteration\_end}\;
						}
					}
					
				} 
				$A_{1} \gets A_{1} \cup \{u\}$ \tcp{no smaller element found, so add $u$ to the bottom antichain $A_{1}$}
				$x_u = \bot$\;
			}
			\texttt{iteration\_end:}
		}
		\Return{$A_1, \dots, A_j$}
	}
\end{algorithm}
We process the elements of $P$ individually in the order $1, \dots, n$.
Throughout, we maintain a list of antichains $A_1, \ldots, A_j$ as well as a variable $x_u$ for every previously processed element $u$.
A new element $u$ is inserted by successively comparing it to the elements in the antichains $A_j, \ldots, A_1$ until the first element $v \in A_i$ with $v \prec u$ is found. 
In this case we set $x_u = v$ and add $u$ to $A_{i+1}$. 
This may cause the creation of a new antichain $A_{j+1}$.
If $u$ is incomparable to all previously processed elements, $u$ is added to $A_1$, and we set $x_u = \bot$ to be the null pointer.
This yields an antichain decomposition $A_1, \dots, A_k$ of $P$.

We then find a longest chain $C$ in $P$ by starting from an arbitrary element $u \in A_k$ and following the variables $x_u$.
This clearly creates a chain~$C$ of size~$k$.
As every chain contains at most one element from each antichain $A_1, \dots, A_k$, $C$ is a longest chain in $P$.

The correctness of this algorithm follows from the following lemma:

\begin{lemma}\label{lem:antichain_correctness}
	For every $i \in \{1, \dots, k\}$, the set $A_i$ computed by Algorithm~\ref{alg:antichaindecomp} is an antichain.
	Algorithm~\ref{alg:antichaindecomp} and thus the computation of a longest chain can be carried out in $\O{n+q}$.
\end{lemma}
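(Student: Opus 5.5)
The proof has two parts: correctness (each $A_i$ is an antichain) and the $\O{n+q}$ running time. Both rest on one invariant that I will establish first: after processing elements $1,\dots,u-1$, for every $t$ and every $w \in A_t$ with $t \ge 2$, we have $x_w \in A_{t-1}$ and $x_w \prec w$. Moreover, $w$ is not greater than any element of $A_{t}, A_{t+1}, \dots$ among the elements processed before $w$. This holds because $u$ is placed in $A_{i+1}$, where $A_i$ is the \emph{highest} antichain containing an element below $u$. The scan goes from $A_j$ downwards, so every element of $A_{i+1},\dots,A_j$ was compared with $u$ and found not to be smaller.

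\emph{Antichain property.} Suppose $v,u \in A_t$ with $v$ processed before $u$. Since the labelling is a linear extension, $u \prec v$ is impossible, so it suffices to rule out $v \prec u$. When $u$ was inserted, the current top index $j$ satisfied $j \ge t$, because $v$ already sat in $A_t$. Hence the scan examined $A_j, \dots, A_t$ before it could stop at $A_{t-1}$ (or fall through to $A_1$ when $t=1$). During that scan $v$ was compared with $u$. If $v \prec u$, the scan would have stopped at level $t$ or higher and placed $u$ in $A_{t+1}$ or above, a contradiction. Hence $v \not\prec u$ and $A_t$ is an antichain. The chain obtained by following the $x$-pointers from any element of $A_k$ has one element in each of $A_k,\dots,A_1$, since $x_w \in A_{t-1}$ whenever $w\in A_t$. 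It therefore has length $k$, and it is longest because each antichain meets a chain at most once.

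\emph{Running time.} Charge each comparison made while inserting $u$ to the compared element $v$. Every comparison with answer ``$v \not\prec u$'' is made between $u$ and an earlier $v$ with $u \not\prec v$ by the linear extension, so $\{u,v\}$ is an incomparable pair. Each pair is compared at most once, namely when the later of the two elements is inserted. These comparisons therefore total at most $q$. For each $u$ there is at most one comparison with answer ``$v \prec u$'', since the insertion stops immediately afterwards. That contributes at most $n$. The remaining work per element is $\O{1}$ for the loop bookkeeping, plus the cost of skipping over empty loop levels. This gives $\O{n+q}$, and extracting the chain by following pointers costs $\O{k}\le\O{n}$.

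The main obstacle is the overhead of the outer $t$-loop: if many antichains are scanned without a single comparison, that cost is not charged to any pair. Since every $A_t$ is non-empty, each visited level $t>i$ contains at least one element $v$ that yields an ``incomparable'' comparison, so that level's cost is absorbed into the $q$ term. Level $i$ itself is paid by the successful comparison. I would spell this out to make sure no level is traversed for free. I would also check the fall-through case carefully: as printed, the statement adding $u$ to $A_1$ sits inside the $t$-loop, and it must be read as executing after the loop finishes. Under that reading, every level is scanned completely against incomparable elements, and the charging argument above applies.
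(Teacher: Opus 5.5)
Your proof is correct and follows the same route as the paper. You prove the antichain property by contradiction: an earlier element below $u$ in the same $A_t$ would have been compared with $u$ and forced it to a higher level. You bound the running time by $\O{n+q}$ by charging each insertion to at most one successful comparison with a smaller element, plus comparisons on incomparable pairs, each tested at most once. Your extra checks fill in details the paper leaves implicit: every scanned level is non-empty and so is paid for by a comparison, the $x$-pointers descend exactly one level at a time, and the fall-through to $A_1$ must be placed after the $t$-loop.
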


\begin{proof}
	Suppose that there exist $i \in \{1, \dots, k\}$ and $u,v \in A_i$ with $u \prec v$. Then $u \leq v$, so~$v$ was inserted after $u$. 
	Thus $v$ would have been compared to $u$ and thus inserted in $A_{i+1}$, which is a contradiction. 
	
	Algorithm~\ref{alg:antichaindecomp} compares every element to at most one smaller element. 
	All other comparisons are between incomparable elements, and every incomparable pair of elements is tested at most once. 
	Thus the time complexity of Algorithm~\ref{alg:antichaindecomp} and hence of the construction of a longest chain is $\O{n + q}$.
\end{proof}

\emph{Computing $s_u$ and~$l_u$.} 
To compute $s_u$ and $l_u$ for all $u \in P \setminus C$ (Line~\ref{line:ac_lisi} in Algorithm~\ref{alg:ideal}), we use the antichain decomposition $A_1, \dots, A_k$ computed before.
Set $s_{c_i} = i$ for $i = 1, \dots, k$.
Let $u \in P \setminus C$ and let $A_i$ be the antichain containing~$u$.
By construction of~$C$, we have $c_i \in A_i$, so $s_u < i$. 
We thus compare $u$ to $c_{i-1}, \ldots, c_1$ until the maximal element on $C$ smaller than $u$ is found.
The values of $l_u$ are determined analogously.
As in Lemma~\ref{lem:antichain_correctness}, the time complexity of this procedure is $\O{n+q}$.
\medskip

\emph{Sorting the lists.}
By relabeling the elements of $P$ in the preprocessing step, we ensure that $u \preceq v$ implies $u \leq v$. 
We need to maintain this property for the recursive calls. 
In the update, we replace $P'$ by $(P' \setminus L_i) \cup S_i$.
To remove~$L_i$ from $P'$, we traverse $P'$ and remove all vertices $u$ with $s_u = i$.
We can construct $S_i$ as a sorted list,  
so adding $S_i$ to $P'$ then amounts to merging two sorted lists.
The time complexity of creating the sorted lists $P_1, \ldots, P_{k}$ is $\O{\sum n_i}$, where $n_i = |P_i|$.
Lemma~\ref{lem:ideal-equiv} implies that $\sum n_i \leq q + n - k \in \O{n+q}$.

	\paragraph{Complexity analysis} 
	
	With the setup described in the previous paragraph, we obtain the following time complexity for the initialization and a single iteration:

	\begin{lemma} \label{lem:ideal_time}
	The initialization of Algorithm \ref{alg:ideal} requires $\O{n^2}$ time. 
		One iteration can be completed in $\O{n+q}$.
		The maximal recursion depth is $n$. In particular, the time complexity of all iterations on a path from the root to a leaf in the recursion tree is $\O{n(n+q)}$.
	\end{lemma}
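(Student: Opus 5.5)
The proof splits into three parts: the initialization cost, the cost of one iteration, and the recursion depth. The last claim then follows by multiplying the per-iteration cost by the depth.

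\emph{Initialization.} The preprocessing in \Main relabels $P$ so that $u \preceq v$ implies $u \le v$. This is a topological sort of the comparability relation given by the adjacency matrix $M$. Running Kahn's algorithm on $M$ reads each of the $n^2$ entries a constant number of times, which gives $\O{n^2}$. Permuting the rows and columns of $M$ according to the new labels costs another $\O{n^2}$. Setting up the array $I$ and the input list is cheaper.

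\emph{One iteration.} I go through the lines of \Enumerate on a poset $P'$ with $n' = |P'|$ elements and $q'$ incomparable pairs.
\begin{enumerate}
\item By Lemma~\ref{lem:antichain_correctness}, the antichain decomposition and the longest chain $C$ cost $\O{n'+q'}$.
\item Computing $s_u$ and $l_u$ for all $u$, and bucketing the elements into the sorted lists $S_i$ and $L_i$, costs $\O{n'+q'}$ by the same counting argument. Each $u$ is compared to at most one element of $C$ that is comparable to it in the scanned direction; every other comparison is charged to a distinct incomparable pair $\{u,c_j\}$. Since $s_u$ and $l_u$ are integers in $\{0,\dots,k\}$, bucketing is linear.
\item The updates $P' \gets (P'\setminus L_i)\cup S_i$ are a traversal plus a merge of sorted lists, costing $\O{n_{i-1}+n_i}$. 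Summed over $i$ this is $\O{\sum_i n_i}$.
\item By Lemma~\ref{lem:ideal-equiv}, $\sum_i n_i \le n'+q'$. An element $u \notin C$ lies in $P_i$ exactly for $s_u \le i \le l_u$. That is $l_u-s_u+1$ indices, and $u$ is incomparable to the $l_u-s_u-1$ chain elements strictly between $c_{s_u}$ and $c_{l_u}$.
\item The updates of $I$ add $L_i \cup \{c_i\}$ for each $i$, for a total of $\O{n'}$. The final removal $I \gets I\setminus P$ costs $\O{n'}$ if elements are appended at the end of $I$ and popped back off.
\item The loop overhead is $\O{k+1}$.
\end{enumerate}
Altogether one iteration costs $\O{n'+q'} \subseteq \O{n+q}$, since $P'$ is a subposet of $P$.

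\emph{Recursion depth and path cost.} Each child poset $P_i$ is a proper subset of the parent $P'$, because $P_i \cap C = \emptyset$ and $C \neq \emptyset$ whenever $P' \ne \emptyset$. So the poset size strictly decreases along any root-to-leaf path, and the depth is at most $n$ (counting the leaf with the empty poset, $n+1$ levels). Each iteration on the path costs $\O{n+q}$, so the total along the path is $\O{n(n+q)}$.

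The only delicate step is the per-iteration bound. One must make sure every operation is charged either to an element or to an incomparable pair, never to a comparable pair. That is precisely what the arguments behind Lemma~\ref{lem:antichain_correctness} and Lemma~\ref{lem:ideal-equiv} provide.
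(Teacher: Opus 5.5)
Your argument is correct and matches the paper's: $O(n^2)$ topological sorting for the relabeling, then $O(n+q)$ per iteration, made up of the antichain-decomposition chain computation, the charged scan for $s_u,l_u$, and list updates bounded by $\sum_i n_i$, and finally depth at most $n$ because each $P_i$ avoids the nonempty chain $C$. One small slip: $u\notin C$ lies in $P_i$ exactly for $s_u\le i\le l_u-1$ (it leaves with $L_{l_u}$, as your own update step shows), so it lies in $l_u-s_u$ posets, one more than the number of chain elements it is incomparable to, which is what gives $\sum_i n_i\le n'+q'$; your count $l_u-s_u+1$ only yields $\sum_i n_i\le 2n'+q'$, which still suffices for the lemma.
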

	\begin{proof}
		In the initialization, Algorithm \ref{alg:ideal} relabels the elements of $P$ in a way that $u \preceq v$ implies $u \leq v$ for all $u,v \in P$. 
		This can be achieved in time $\O{n^2}$ using the classical topological sorting algorithm.
	We showed how to compute the longest chain and construct the sets $L_i$ and $S_i$ (Lines~\ref{line:longestchain} and~\ref{line:lisi}) in time $\O{n+q}$.
		Updating the arrays containing~$P'$ and~$I$ (Lines~\ref{line:updatep} and~\ref{line:updatei}) is done in total time $\O{n+q}$ since every element is added and removed at most once.
		Therefore, the runtime of one iteration of the recursive function is $\O{n+q}$.
		Since~$P_i$ contains no element of $C$, the recursion depth is at most $n$.
	\end{proof}

Concerning the space requirement of Algorithm~\ref{alg:ideal}, we note the following:
	
	\begin{lemma} \label{lem:ideal_space}
		The space complexity of Algorithm \ref{alg:ideal} is $\O{n^2}$.
	\end{lemma}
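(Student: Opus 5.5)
To prove the $\O{n^2}$ space bound, I split the memory into two parts: the global structures that are shared by the whole recursion, and the data held by the iterations on the current recursion path. Only one root-to-leaf path of the recursion tree is active at any time, so the second part is bounded by summing over a single path.

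The global part consists of three things:
\begin{itemize}
\item the adjacency matrix $M$, which takes $n^2$ entries;
\item the array $I$ holding the current ideal, which takes $\O{n}$;
\item the arrays holding $s_u$, $l_u$ and $x_u$.
\end{itemize}
For the last item, I would note that these values can be stored per element and overwritten by the callee. Each caller keeps its own chain $C$ and the lists $S_i, L_i$ explicitly, so it does not need the callee's values of $s_u$, $l_u$ and $x_u$ afterwards. Alternatively, the callee can store them in local arrays of size equal to its own poset. Either way, every structure is indexed by elements of $P$, and the matrix $M$ is never copied, because subposets are given as sorted element lists and comparisons are looked up in $M$.

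Next I bound the local memory of one iteration on a poset $P'$ with $n'$ elements. It holds the chain $C$, the antichain decomposition $A_1,\dots,A_k$, the lists $S_0,\dots,S_k$ and $L_1,\dots,L_{k+1}$, and the current list $P'$ that is updated in Line~\ref{line:updatep}. Each of these is a partition of $P'$, or a subset of it, or a single sorted list that is modified in place. So an iteration on $n'$ elements uses $\O{n'}$ local memory.

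The key point is that the list $P'$ is updated in place, so an iteration does not store all the posets $P_0,\dots,P_k$ at once. If it did, the memory could reach $\sum n_i = \O{n+q}$, which is up to quadratic per level and would give a cubic bound overall. So I would argue that $P'$ holds only $P_i$ at any moment, and that the recursive call receives it as a pointer or copies it in $\O{|P_i|}$ space.

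Along a path, the poset sizes strictly decrease, because each $P_i$ excludes $C \neq \emptyset$; this is the fact already used in Lemma~\ref{lem:ideal_time} for the depth bound. The sizes are therefore at most $n, n-1, \dots, 1$, and the local memory along the path is $\sum_{j \le n} \O{j} = \O{n^2}$.

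Adding the global $\O{n^2}$ for the matrix gives $\O{n^2}$ total. The main obstacle is exactly the in-place handling described above: I must check that merging $S_i$ into $P'$ and deleting $L_i$ are done on one list owned by the caller, so that the per-iteration memory really is linear rather than $\O{n+q}$.
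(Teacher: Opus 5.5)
Your proof is correct and follows the same route as the paper: linear memory per iteration (chain, current subposet, values $s_u,l_u$, lists $S_i,L_i$) times a call stack of height at most $n$. Your refinements (summing the strictly decreasing poset sizes along a path, counting the $n^2$ adjacency matrix explicitly, and requiring that $P'$ is updated in place rather than storing all of $P_0,\dots,P_k$) are details the paper leaves implicit, and they do not change the argument.
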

	\begin{proof}
		Every iteration of Algorithm \ref{alg:ideal} requires $\O{n}$ memory for a longest chain, the subposet~$P'$, the values $s_u$ and $l_u$ and sets $S_i$ and $L_i$.
		These variables have to be stored on the call stack whose height is at most $n$ by Lemma \ref{lem:ideal_time}.
	\end{proof}

We now move to the potential analysis. 
Before we define the relevant notions, we prove the following number-theoretic lemma. 
	
	\begin{lemma}\label{lem:aux}
		For any $a_1, \ldots, a_g, b_1, \ldots, b_f \in \mathbb{Z}_{>0}$, it holds that 
		$$48 \sum_{j=1}^g j \left( \binom{a_j}{3} + 1\right) + 48 \sum_{j=1}^f j\left(\binom{b_j}{3} + 1 \right) \ge \sum_{j=1}^g a_j \cdot \sum_{j=1}^f b_j.$$
	\end{lemma}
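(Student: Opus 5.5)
The plan is to decouple the two sums with AM--GM. Write $A=\sum_{j=1}^g a_j$ and $B=\sum_{j=1}^f b_j$. The core claim is the one-sided estimate
\[
\sum_{j=1}^g j\left(\binom{a_j}{3}+1\right)\;\ge\;\frac{A^2}{K}
\]
for some absolute constant $K\le 96$, together with the same estimate for the $b_j$ and $B$. Granting it, the left-hand side of Lemma~\ref{lem:aux} is at least $\frac{48}{K}(A^2+B^2)\ge \frac{96}{K}AB\ge AB$, which is the statement.

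To prove the one-sided estimate, I would first replace $h(x)=\binom{x}{3}+1$ by a pure cubic. Concretely, I want $h(x)\ge x^3/c$ for all integers $x\ge 1$ with a small $c$. A direct check of small values gives ratios $x^3/h(x)$ of $1,8,13.5,12.8,11.4,\dots$ for $x=1,2,3,4,5$. For larger $x$ the ratio is clearly below $14$, since $h(x)\sim x^3/6$; this can be verified by expanding $14h(x)-x^3$ as a cubic with positive leading coefficient for $x\ge 5$. So $c=14$ works, and
\[
\sum_j j\,h(a_j)\ge \frac{1}{14}\sum_j j\,a_j^3.
\]
Separately, since $h\ge 1$, we also have the trivial bound $\sum_j j\,h(a_j)\ge \sum_{j=1}^g j\ge g^2/2$.

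The step I expect to be the real obstacle is lower-bounding $\sum_j j a_j^3$ in terms of $A$ without losing a logarithm. A quadratic bound combined with Cauchy--Schwarz would produce a harmonic sum $H_g$, so the cubic growth of $h$ is essential here. I would apply H\"older with exponents $3$ and $3/2$ to $A=\sum_j (j^{1/3}a_j)\,j^{-1/3}$:
\[
A\le \Bigl(\sum_j j a_j^3\Bigr)^{1/3}\Bigl(\sum_{j=1}^g j^{-1/2}\Bigr)^{2/3}\le \Bigl(\sum_j j a_j^3\Bigr)^{1/3}(2\sqrt g)^{2/3},
\]
so $\sum_j j a_j^3\ge A^3/(4g)$. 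Combining the two lower bounds, I would then split into cases. If $g\le A$, the H\"older bound gives $\sum_j j\,h(a_j)\ge A^3/(56g)\ge A^2/56$. If $g\ge A$, the trivial bound gives $\sum_j j\,h(a_j)\ge g^2/2\ge A^2/2$. Hence $K=56\le 96$ suffices.

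The same argument applies verbatim to the $b_j$, and the AM--GM step above finishes the proof with room to spare in the constant $48$.
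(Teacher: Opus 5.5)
Your proof is correct, and its core step is the same as the paper's. The paper applies Jensen to $x\mapsto x^3$ with weights $j^{-1/2}$ at the points $\sqrt{j}\,a_j$, which is exactly your H\"older bound $\sum_j j a_j^3\ge A^3/(\sum_j j^{-1/2})^2\ge A^3/(4g)$, and then uses $A\ge g$. That last inequality is automatic because every $a_j\ge 1$, so your case $g\ge A$ only occurs when $g=A$; the case split is harmless but unnecessary.

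You differ from the paper at the two ends, and there your version is the sound one. The paper claims $12(\binom{a}{3}+1)\ge a^3$, deriving it from $2(a-1)(a-2)+12\ge a^2$. Multiplying the latter by $a$ only gives $12\binom{a}{3}+12a\ge a^3$, and the claim itself fails at $a=3$ and $a=4$ ($24<27$, $60<64$). Your constant $14$ is right; the worst ratio is $13.5$, at $a=3$.

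The paper then assumes $A\ge B$ and discards the $b$-half of the left-hand side. With the corrected constant, that chain of estimates only gives $\frac{48}{13.5}\cdot\frac{A^2}{4}=\frac{8}{9}A^2$, which falls short of $A^2$. Your symmetric finish via $A^2+B^2\ge 2AB$ recovers exactly the missing factor $2$, giving $\frac{96}{56}AB\ge AB$.

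The only line to add is in the comparison $h(x)\ge x^3/14$. A positive leading coefficient alone does not make a cubic positive on $[5,\infty)$. Here, however, $14h(x)-x^3=\frac{4}{3}x^3-7x^2+\frac{14}{3}x+14$ is positive for $x\ge 6$, since $x^2(\frac{4}{3}x-7)>0$ there. At $x=1,\dots,5$ it takes the values $13,6,1,6,29$.
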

	\begin{proof}
		From Jensen's inequality, it follows that
		$$\frac{\sum_{j=1}^g \frac{1}{\sqrt{j}} \left(\sqrt{j}{a_j} \right)^3}{\sum_{j=1}^g \frac{1}{\sqrt{j}}}
		\ge
		\left(\frac{\sum_{j=1}^g \frac{1}{\sqrt{j}} \sqrt{j}{a_j} }{\sum_{j=1}^g \frac{1}{\sqrt{j}}}\right)^3,
		$$
		which implies that 
		$$\sum_{j=1}^g j a_j^3 \ge \frac{\sum_{j=1}^g a_j}{\left(\sum_{j=1}^g \frac{1}{\sqrt{j}}\right)^2} \left(\sum_{j=1}^g a_j\right)^2 \ge \frac14 \left(\sum_{j=1}^g a_j\right)^2$$
		where the last inequality follows from $\sum_{j=1}^g a_j \ge g$ and $\sum_{j=1}^g \frac{1}{\sqrt{j}}  2 \sqrt g$.
		From $2(a_j-1)(a_j-2)+12 \ge a_j^2$, it follows that
		$$12\sum_{j=1}^g j \left( \binom{a_j}{3} + 1 \right) \ge \sum_{j=1}^g j a_j^3.$$
		
		Without loss of generality, we assume that $\sum_{j=1}^g a_j \ge \sum_{j=1}^f b_j$.
		By combining the previous inequalities, we obtain
		$$48 \sum_{j=1}^g j \left( \binom{a_j}{3} + 1\right) + 48 \sum_{j=1}^f j\left(\binom{b_j}{3} + 1 \right) \ge 4\sum_{j=1}^g j a_j^3 \ge \left(\sum_{j=1}^g a_j\right)^2 \ge \sum_{j=1}^g a_j \cdot \sum_{j=1}^f b_j.\qedhere$$
	\end{proof}
	 
	 Our aim is to show that Algorithm~\ref{alg:ideal} has amortized constant delay, using the Pyramid condition from the previous section (see Definition~\ref{def:pyramidcondition}). 
	 To this end, we define a partition $Q = Q' \dcup Q'' \dcup Q'''$ as follows: 
	 Let $Q'$ be the set of incomparable pairs with one vertex in $C$, $Q''$ the set of incomparable pairs $\{u,v\}$ with $u,v \in P \setminus C$ such that $u$ and $v$ have a common incomparable element on $C$, and $Q'''$ the remaining pairs in $Q$. 
	 Recall that $n_i =|P_i|$. 
	 Let $Q_i$ be the set of incomparable pairs in $P_i$ and set $q_i = |Q_i|$.
	 Set $Q_i'' = Q'' \cap Q_i$ and $q_i'' = |Q_i''|$.
	 Analogously, we define $Q'''_i$ and $q'''_i$.
	 Let~$T$ be the set of all 3-antichains in $P$.
	 As above, we define a partition $T = T' \dcup T'' \dcup T'''$ and let $t', t'', t'''$ denote the size of $T', T'', T'''$, respectively.
	 For $i =1, \dots, k$, let $T_i \subseteq T$ denote the 3-antichains contained in $P_i$, and set $t_i = |T_i|$.
	Set $T_i'' = T'' \cap T_i$ and $T_i''' = T''' \cap T_i$, and define $t_i'' = |T_i''|$ and $t_i''' = |T_i'''|$. 
	 
	For the potential analysis, we need the following upper bound on~$q'''$. 
	
		\begin{lemma}\label{lem:pair_ideal}
		It holds that $q''' \leq 96 \left(\sum t''_i + \sum n_i\right)$.
	\end{lemma}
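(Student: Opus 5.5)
The plan is to localize every pair of $Q'''$ to one index $i$, split it into a "lower" element of $L_{i+1}$ and an "upper" element of $S_i$, and then bound the product of these two sets with Lemma~\ref{lem:aux}.

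\emph{Step 1: localize the pairs of $Q'''$.} By Lemma~\ref{lem:ideal-equiv}, an element $u\in P\setminus C$ lies in $P_m$ exactly when $s_u\le m<l_u$. It is incomparable precisely to the chain elements $c_j$ with $s_u<j<l_u$. Moreover $l_u-s_u\ge 2$, since otherwise $c_{s_u}\prec u\prec c_{s_u+1}$ and $C$ would not be a longest chain.

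Take $\{u,v\}\in Q'''$ and assume $s_u\le s_v$. Since $u,v$ are incomparable we get $l_u>s_v$; otherwise $u\preceq c_{l_u}\preceq c_{s_v}\preceq v$. Since the open index intervals $(s_u,l_u)$ and $(s_v,l_v)$ are disjoint, we get $l_u\le s_v+1$. Hence $l_u=s_v+1$. Writing $i=s_v$, we obtain $u\in L_{i+1}\cap P_i$ and $v\in S_i\cap P_i$, and $i$ is determined by the pair. It therefore suffices to show
\[
\sum_i |L_{i+1}\cap P_i|\cdot|S_i\cap P_i| \le 96\Bigl(\sum_m t''_m+\sum_m n_m\Bigr).
\]

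\emph{Step 2: layer each side by height.} Fix $i$ and let $X=L_{i+1}\cap P_i$. Let $A_1,\dots,A_g$ be the Mirsky layers of $X$, where $A_h$ consists of the elements whose longest chain inside $X$ ending at them has $h$ elements. These layers are nonempty antichains, and $a_h=|A_h|>0$.

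If $u\in A_h$, there is a chain $u_1\prec\dots\prec u_h=u$ in $X$. Then $c_1,\dots,c_{s_{u_1}},u_1,\dots,u_h,c_{i+1},\dots,c_k$ is a chain. Maximality of $|C|$ gives $h\le i-s_{u_1}$, and hence $s_u\le i-h$ because $s_u\ge s_{u_1}$.

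Consequently $u$ lies in the $\ge h+1$ posets $P_{i-h},\dots,P_i$. Any $3$-antichain inside $A_h$ consists of elements incomparable to the common chain element $c_i$, so it belongs to $T''$. It is also contained in each of $P_{i-h},\dots,P_i$, so it is counted at least $h+1\ge h$ times in $\sum_m t''_m$.

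Symmetrically, I layer $Y=S_i\cap P_i$ from the top into layers $B_1,\dots,B_f$ of sizes $b_h$. Each element and each $3$-antichain of $B_h$ lies in at least $h$ posets $P_i,\dots,P_{i+h}$, and the $3$-antichains share the incomparable element $c_{i+1}$.

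\emph{Step 3: apply Lemma~\ref{lem:aux} and sum.} Lemma~\ref{lem:aux} gives
\[
|X|\,|Y|\le 48\sum_h h\Bigl(\tbinom{a_h}{3}+1\Bigr)+48\sum_h h\Bigl(\tbinom{b_h}{3}+1\Bigr).
\]
By Step 2, $h\binom{a_h}{3}$ is at most the number of incidences (triple in $A_h$, poset $P_m$ containing it). Likewise $h\cdot 1\le h\,a_h$ is at most the number of incidences (element of $A_h$, poset $P_m$ containing it), and the same holds on the $Y$ side.

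When summing over $i$, each element $u\in P\setminus C$ appears on the $X$-side only for $i=l_u-1$ and on the $Y$-side only for $i=s_u$. The same is true for each triple, since a triple in $A_h$ has all of its elements in the same $L_{i+1}$. So every incidence is charged at most twice over all $i$, and summing yields the bound $2\cdot 48\,(\sum_m t''_m+\sum_m n_m)$, which is the claim.

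The main obstacle I expect is Step 2: one has to choose the antichain layering so that the index $h$ in Lemma~\ref{lem:aux} really is dominated by the multiplicity with which elements and triples appear among the $P_m$. The height-layer argument combined with the maximality of $C$ is the step I would check most carefully, including the boundary conventions for $c_0$ and $c_{k+1}$.
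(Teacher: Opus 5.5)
\textbf{There is a genuine gap in Step 2: the layering direction is reversed on both sides, and the key inequality does not follow.}

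For $u$ in the $h$-th Mirsky layer of $X=L_{i+1}$, counted from the bottom, your chain $c_1,\dots,c_{s_{u_1}},u_1,\dots,u_h,c_{i+1},\dots,c_k$ bounds only $s_{u_1}\le i-h$, where $u_1$ is the \emph{bottom} element of that chain. Since $s$ is nondecreasing along chains, $s_u\ge s_{u_1}$ gives no upper bound on $s_u$. So ``hence $s_u\le i-h$'' is a non sequitur.

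The claim is in fact false. For $k\ge 3$, take $c_1\prec\dots\prec c_k$ and $u_1\prec\dots\prec u_{k-1}\prec c_k$, with $c_{k-2}\prec u_{k-1}$ as the only further relation.
\begin{itemize}
\item $C$ is still a longest chain, and $L_k=\{u_1,\dots,u_{k-1}\}$.
\item $u_{k-1}$ lies in your layer $h=k-1$ (with $i=k-1$).
\item Yet $s_{u_{k-1}}=k-2$, so $u_{k-1}$ lies only in $P_{k-2}$ and $P_{k-1}$.
\end{itemize}
The charge $h\bigl(\binom{1}{3}+1\bigr)=k-1$ that Step 3 assigns to this layer therefore cannot be paid by its two incidences. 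The same problem arises for $Y=S_i$ layered from the top: the chain argument then constrains only the top element of the chain, not $v$ itself.

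\textbf{The fix is to layer in the opposite direction, which is what the paper does.} For $X=L_{i+1}$, peel off \emph{maximal} elements, i.e.\ let $h$ be the size of a longest chain in $X$ \emph{starting} at $u$. For $Y=S_i$, peel off \emph{minimal} elements.
\begin{itemize}
\item On the $X$-side, $c_1,\dots,c_{s_u},u=u_1\prec\dots\prec u_h,c_{i+1},\dots,c_k$ is a chain with $s_u+h+k-i$ elements. This gives $s_u\le i-h$ directly, so $u$ lies in $P_{i-h},\dots,P_i$.
\item On the $Y$-side, $c_1,\dots,c_i,v_1\prec\dots\prec v_h=v,c_{l_v},\dots,c_k$ gives $l_v\ge i+h+1$, so $v$ lies in $P_i,\dots,P_{i+h}$.
\end{itemize}

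With this correction, your Steps 1 and 3 go through unchanged, and the argument coincides with the paper's. The only difference is in how the triple term is charged. The paper charges $h\binom{a_h}{3}$ to 4-antichains formed by a chain element together with the triple, and then bounds their number by $\sum_m t''_m$. You count (triple, $P_m$) incidences directly, which is equivalent and slightly cleaner.
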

	\begin{proof}
		Let $\{u,v\} \in Q'''$.
		As $u$ and $v$ are incomparable, we have $s_u < l_v$ and $s_v < l_u$, so $\max\set{s_u,s_v} < \min\set{l_u,l_v}$.
		If $\max\set{s_u,s_v} < i < \min\set{l_u,l_v}$ for some $i \in \{1, \dots, k\}$, then $c_i$ is incomparable with both $u$ and~$v$. 
		This contradicts the assumption $\{u,v\} \in Q'''$.
		Therefore, we have $\max\set{s_u,s_v} = \min\set{l_u,l_v} - 1 =: i$.
		Since $l_u \ge s_u + 2$, it follows that $s_u = l_v - 1$ or $s_v = l_u - 1$.
		This yields a partition $Q''' = G_1 \dcup \cdots \dcup G_{k}$, where $G_i$ consists of all pairs ${u,v} \in Q'''$ with $u \in S_i$ and $v \in L_{i+1}$.
		Clearly, we have $|G_i| \le |S_i| |L_{i+1}|$.
		
		We now decompose the subposet of $P$ induced by $S_i$ into antichains.
		Let $A_{i,1}$ be the set of minimal elements of $S_i$. 
		Let $A_{i,2}$ be all minimal elements of $S_i \setminus A_1$. 
		We continue this way until we reach the last non-empty antichain, denoted by $A_{i,{e_i}}$.
		Similarly, let $B_{i,1}$ denote the set of maximal elements of $L_{i+1}$, let $B_{i,2}$ the set of maximal elements in $L_{i+1} \setminus B_{i,1}$, and continue up to the last non-empty antichain $B_{i,f_i}$.
		Let $a_{i,j} = |A_{i,j}|$ and $b_{i,j} = |B_{i,j}|$. 
		Using Lemma \ref{lem:aux}, we derive
		$$|G_i| \le |S_i| |L_{i+1}| \le \sum_{j=1}^{e_i} a_{i,j} \cdot \sum_{j=1}^{f_i} b_{i,j} \le 48 \sum_{j=1}^{e_i} j \left( \binom{a_{i,j}}{3} + 1\right) + 48 \sum_{j=1}^{f_i} j\left(\binom{b_{i,j}}{3} + 1 \right).$$
		
		Let $Z$ be the set of 4-antichains which contain an element of $C$, and let $Z_i$ be the set of 4-antichains in $Z$ containing three elements of $S_i$.
		Since $C$ is a maximal chain of $P$, all elements in $A_{i,j}$ are incomparable with $c_{i+1}, \ldots, c_{i+j}$.
		Thus $Z_i$ contains $j \binom{a_{i,j}}{3}$ 4-antichains that contain three elements from $A_{i,j}$. 
		This implies that 
		$$\sum_{i=1}^{k-1} \sum_{j=1}^{e_i} j \binom{a_{i,j}}{3} \le \sum_{i=1}^{k-1} |Z_i| \le |Z| = \sum t''_i.$$
		
		Since every antichain $A_{i,j}$ is non-empty and $A_{i,j} \subseteq P_{i+1} \cap \dots \cap P_{i+j}$, we obtain \[\sum_{i=1}^{k-1} \sum_{j=1}^{e_i} j \le \sum n_i.\]
		Using a similar bound for $B_{i,j}$, we obtain the desired inequality.
	\end{proof}

	
	We set $\alpha = 922$, $\beta = 921$, $\gamma = 385$, $\delta = 192$ and define a potential function $\Phi(P) = \alpha + \beta n + \gamma q + \delta t$. 
	Using this potential and the Pyramid condition (see Definition~\ref{def:pyramidcondition}), we now show that Algorithm~\ref{alg:ideal} has constant amortized delay.
	
	\begin{lemma} \label{lem:phi-ideal}
		For every non-empty poset $P$, we have $\sum \Phi(P_i) - \Phi(P) \ge n + q$.
		In particular, Algorithm~\ref{alg:ideal} has constant amortized delay.
	\end{lemma}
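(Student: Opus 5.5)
The plan is to expand $\sum_{i=0}^{k}\Phi(P_i)-\Phi(P)$ term by term and bound each of the four contributions (constant, elements, pairs, triples) by counting how often each object of $P$ reappears in the subposets $P_0,\dots,P_k$, using Lemma~\ref{lem:once}. Once the inequality is established, constant amortized delay follows from Lemma~\ref{lem:ideal_time}, since one iteration costs $\O{n+q}$. Every leaf has $n=0$ and visits exactly one ideal, so $\mu=\alpha$ works there. Then the Pyramid condition with $T^\star=\O{1}$ holds, and Theorem~\ref{thm:amortized} applies.

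\textbf{Constant and element terms.} We have $\sum_i \alpha = (k+1)\alpha$, so the constant term contributes $k\alpha$. For elements, every $u\in P\setminus C$ lies in at least two of the $P_i$ (Lemma~\ref{lem:once} applied to the antichain $\{u\}$, together with the remark that $u$ is incomparable to some $c_j$). Moreover, $u$ lies in exactly $l_u-s_u$ of them. Hence $\sum n_i \ge 2(n-k)$, and $\beta(\sum n_i - n)\ge \beta(n-2k)+\beta(\sum n_i-(n-k))$.

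The loss of $k$ elements of $C$ is therefore $-\beta k$, which is compensated by $k\alpha$ because $\alpha>\beta$. I would keep the surplus $\beta\sum n_i - \beta n -\beta k \ge \frac{\beta}{2}\sum n_i$ (roughly) as a reserve, since Lemma~\ref{lem:pair_ideal} needs $\sum n_i$.

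\textbf{Pair and triple terms.} The pairs in $Q'$ are lost entirely and cost $-\gamma q'$. By Lemma~\ref{lem:once}, the pairs in $Q''$ appear in at least two $P_i$, so they gain at least $\gamma q''$, and more precisely $\sum q_i''\ge 2q''$. The pairs in $Q'''$ appear exactly once, so they contribute $0$.

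For triples, the 3-antichains in $T'$ are lost, while those in $T''$ are counted $\sum t_i''\ge 2t''$ times, giving a gain of $\delta(\sum t''_i - t'')\ge \frac{\delta}{2}\sum t_i''$.

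The key accounting step is to pay for the lost pairs in $Q'$ and triples in $T'$. A pair $\{u,c_j\}\in Q'$ has $u\in P_{j-1}\cap P_j$, so $q'\le \sum n_i$. Similarly, a triple in $T'$ with one chain element gives a pair of $P\setminus C$ sharing an incomparable chain element, so $t'\le \sum q_i''$, using that a chain contains at most one element of an antichain. I would charge $\delta t'$ against the $\gamma$-surplus on $Q''$; this needs $\gamma\ge 2\delta$, which holds since $385\ge 384$. I would charge $\gamma q'$ against $\beta\sum n_i$.

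\textbf{Paying for $n+q$.} The remaining target is $n+q = n+q'+q''+q'''$. Here $n$ and $q'$ are covered by $\sum n_i$. The term $q''$ is covered by the leftover surplus on $Q''$. The term $q'''$, which gains nothing by itself, is covered by Lemma~\ref{lem:pair_ideal}: $q'''\le 96(\sum t_i''+\sum n_i)$, paid from $\frac{\delta}{2}\sum t_i''$ (since $\delta/2=96$) and from the $\beta$-reserve.

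The main obstacle is the bookkeeping: making sure that the constants $922,921,385,192$ simultaneously cover every charge, in particular that the $\beta\sum n_i$ reserve suffices for $n$, $q'$, $96\sum n_i$, and $\beta k$. If the crude bound $\sum n_i\ge 2(n-k)$ is insufficient when $n-k$ is small, I would instead use $\alpha k$, with $\alpha$ exceeding $\beta$, to cover the chain elements separately.
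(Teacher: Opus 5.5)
Your decomposition matches the paper's: the same partitions $Q=Q'\dcup Q''\dcup Q'''$ and $T=T'\dcup T''\dcup T'''$, the multiplicities from Lemma~\ref{lem:once}, the bounds $q'\le\sum n_i$ and $t'\le\sum q_i''$, and Lemma~\ref{lem:pair_ideal} for $Q'''$. The gap is the step you defer as ``bookkeeping'': with your bounds and the stated constants it does not close. Using $\sum n_i\ge 2(n-k)$, your reserve is $\beta(\sum n_i-(n-k))\ge\frac{\beta}{2}\sum n_i=460.5\sum n_i$. The charges against $\sum n_i$ are:
\begin{itemize}
\item $(\gamma+1)\sum n_i=386\sum n_i$ for the lost pairs of $Q'$ together with $q'$ in the target;
\item $\frac12\sum n_i$ for $n-k$ in the target;
\item $96\sum n_i$ from Lemma~\ref{lem:pair_ideal}.
\end{itemize}
That totals $482.5\sum n_i$. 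Every other budget closes with zero slack ($\alpha-\beta-1=0$, $\frac{\gamma-1}{2}-\delta=0$, $\frac{\delta}{2}-96=0$), so nothing can be shifted over. Your fallback via $\alpha k$ cannot help either: $\alpha k$ is fully spent on the chain elements, and $\sum n_i$ is not bounded by a multiple of $k$. The paper's displayed combination uses the same estimates and asserts that the coefficient $\frac{\beta-1}{2}-\gamma-97$ of $\sum n_i$ vanishes. It actually equals $460-482=-22$, so your concern was well founded.

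The missing observation is that the elements of $P\setminus C$ and the pairs in $Q'$ are counted by the same occurrences. You already noted that $u\in P\setminus C$ lies in exactly $l_u-s_u$ of the $P_i$, and $l_u-s_u-1$ is the number of $c_j$ incomparable to $u$. Hence $\sum n_i=(n-k)+q'$ exactly; the paper records this identity but only uses $q'\le\sum n_i$. This gives $\beta(\sum n_i-n)=\beta(q'-k)$, $n-k\le q'$ and $\sum n_i\le 2q'$. With all your other estimates unchanged (the $192=2\cdot 96$ comes from Lemma~\ref{lem:pair_ideal}), you get
\[
\sum_{i=0}^{k}\Phi(P_i)-\Phi(P)-n-q\;\ge\;(\alpha-\beta-1)k+(\beta-\gamma-194)\,q'+\Bigl(\tfrac{\gamma-1}{2}-\delta\Bigr)\sum q_i''+\Bigl(\tfrac{\delta}{2}-96\Bigr)\sum t_i''\;=\;342\,q'\;\ge\;0.
\]
Alternatively, raising $\beta$ to $965$ and $\alpha$ to $966$ makes your original scheme close. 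That suffices for the constant amortized delay conclusion, but it changes $\Phi$.

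Two smaller slips:
\begin{itemize}
\item Your inequality $\beta(\sum n_i-n)\ge\beta(n-2k)+\beta(\sum n_i-(n-k))$ is false in general, since its right-hand side equals $\beta(\sum n_i-k)$. What you want is $\beta(\sum n_i-n)=\beta(\sum n_i-(n-k))-\beta k$.
\item At a leaf the Pyramid condition reads $\mu-\alpha\ge T(X)/T^\star>0$, so $\mu=\alpha$ is not enough. Take e.g.\ $\mu=\alpha+1$ with $T^\star$ at least the cost of a leaf iteration.
\end{itemize}
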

	
\begin{proof}
We show that $\sum \Phi(P_i) - \Phi(P) - (n+q) \geq 0$. 
First, we analyze the terms in this expression that correspond to poset elements. 
	We have $n \le k + \frac12 \sum n_i$
	as Lemma~\ref{lem:ideal-equiv} yields $2(n-k) \le \sum n_i$.
	Hence,
	$$\beta \sum n_i - \beta (n+1) \ge \frac{\beta-1}2 \sum n_i - (\beta+1) k.$$
	
	Second, we analyze the contribution of incomparable pairs.
For every incomparable pair of the form $\{c_i, v\}$ with $v \in P \setminus C$, we have $v \in P_i$. 
Lemma~\ref{lem:once} yields $q' + n - k = \sum n_i$, so in particular $q' \leq \sum n_i$.
 We have $q'' \le \frac12 \sum q''_i$ as by 
		Lemma \ref{lem:once}, both elements of any pair in $Q''$ lie in at least two of the posets $P_0, \ldots, P_{k}$.
	Finally, Lemma~\ref{lem:once} yields $q''' = \sum q'''_i$.
		Moreover, we have $q''' \le 96 (\sum t''_i + \sum n_i)$
		by Lemma~\ref{lem:pair_ideal}. 
Combining these inequalities yields
	$$\gamma \sum q_i - (\gamma+1) q \ge \frac{\gamma-1}2 \sum q''_i - (\gamma+97) \sum n_i - 96 \sum t''_i.$$
	
	Lastly, we analyze the terms corresponding to 3-antichains.
We have $t' = \sum q''_i$ as for every 3-antichain of the form $\{c_i, u,v\}$ with $u,v \in P \setminus C$, we have $\{u,v\} \in Q_i''$.
We have $t'' \le \frac12 \sum t''_i$ as 
		by Lemma \ref{lem:once}, every 3-antichain $T''$ belong to at least two posets of $P_0, \ldots, P_{k}$.
		Finally, we have $t''' = \sum t'''_i$, which  again follows from Lemma \ref{lem:once}.
	Together, this yields
	$$\delta \left(\sum t_i - t\right) \ge \frac\delta2 \sum t''_i - \delta \sum q''_i.$$
	
Combining the above inequalities, we obtain 
	\begin{align*}
		\sum& \Phi(P_i) - \Phi(P) - n - q \\
		\ge&\; \alpha k + \frac{\beta-1}2 \sum n_i - (\beta+1) k \\
		&+ \frac{\gamma-1}2 \sum q''_i - (\gamma+97) \sum n_i - 96 \sum t''_i \\
		&+ \frac\delta2 \sum t''_i - \delta \sum q''_i \\
		\ge&\; k (\alpha - \beta - 1)
		+ \sum n_i \left(\frac{\beta-1}{2} - \gamma - 97\right)
		+ \sum q''_i \left(\frac{\gamma-1}2 - \delta\right)
		+ \sum t''_i \left(\frac\delta2 - 96\right)\\
		=&\; 0.
	\end{align*}
By Lemma~\ref{lem:ideal_time}, one iteration of Algorithm~\ref{alg:ideal} can be executed in time $\O{n+q}$.
By the above, the potential $\Phi$ satisfies the Pyramid condition for $T^\star =\O{1}$. 
By Theorem~\ref{thm:amortized}, Algorithm~\ref{alg:ideal} has constant amortized delay.
\end{proof}

	The following theorem summarizes the results of this section:
	
	\begin{theorem}
		Algorithm \ref{alg:ideal} enumerates all ideals with constant amortized delay while requiring $\O{n^2}$ space.
		For $k \in \mathbb{N}$, the $k$-th ideal is visited in time $\O{k + n(n+q)}$.
	\end{theorem}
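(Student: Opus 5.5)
This is essentially an assembly of the results already established for Algorithm~\ref{alg:ideal}. I treat each iteration as the node of the recursion tree labelled by the pair $(P', I)$ on which \Enumerate is called. The set of elements it is responsible for is the set of ideals of $P'$ (shifted by $I$), so the partition of $X$ into $X_0, \dots, X_k$ is exactly the partition $\id_0(P') \dcup \cdots \dcup \id_k(P')$. By Lemma~\ref{lemma:idealcorrespondence} these pieces are in bijection with $\id(P'_0), \dots, \id(P'_k)$. Leaves are the calls with $P' = \emptyset$; each visits exactly one ideal, so $|X'| = 1$ there and $|X'| = 0$ otherwise. Correctness, i.e.\ that every ideal is visited exactly once, then follows by induction on $|P'|$ from the same lemma and the disjoint decomposition $\id(P) = \dcup_i \id_i(P)$.

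For the amortized bound, I take the potential $\Phi(P') = \alpha + \beta n' + \gamma q' + \delta t'$ from Lemma~\ref{lem:phi-ideal}, with $T^\star$ a suitable constant multiple of the hidden constant in Lemma~\ref{lem:ideal_time}. For an inner node, Lemma~\ref{lem:phi-ideal} gives $\sum \Phi(P'_i) - \Phi(P') \ge n' + q'$, and by Lemma~\ref{lem:ideal_time} this is at least $T(X)/T^\star$. For a leaf, $T(X) = \O{1}$, so the Pyramid condition holds by choosing $\mu \ge \alpha + \O{1}$; this covers both $\Phi(\emptyset) = \alpha$ and the constant work of \Visit. One point to check is that $T^\star$ must be independent of $n$ for the leaf case to go through, which it is. Theorem~\ref{thm:amortized} then yields constant amortized delay. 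The root potential $\Phi(P) = \O{n^3}$ is harmlessly absorbed, since $|\id(P)| \ge n+1$ and the total is $\O{|X| T^\star}$ in any case.

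For the time of the $k$-th ideal, I apply Theorem~\ref{thm:kth} with $\Delta$ the cost of a root-to-leaf path. By Lemma~\ref{lem:ideal_time}, the depth is at most $n$ and each iteration costs $\O{n+q}$, so $\Delta = \O{n(n+q)}$. The $\O{n^2}$ initialization (topological relabelling) is dominated by this bound. Space $\O{n^2}$ is Lemma~\ref{lem:ideal_space}.

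The only real obstacle is bookkeeping at the leaves: Lemma~\ref{lem:phi-ideal} is stated only for non-empty $P$. I need to make sure the constant reward $\mu$ is chosen to fund both the leaf's own work and its outgoing potential $\alpha$. I also need to confirm that the cost of the final loop step $I \gets I \setminus P$ is $\O{n}$ and thus included in the per-iteration $\O{n+q}$ bound.
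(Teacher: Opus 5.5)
Your proposal is correct and follows the paper's route: the paper obtains this theorem by combining Lemma~\ref{lem:phi-ideal} (the Pyramid condition for $\Phi = \alpha + \beta n + \gamma q + \delta t$) with Lemma~\ref{lem:ideal_time} and Theorem~\ref{thm:amortized} for the amortized bound. It then uses Theorem~\ref{thm:kth} with $\Delta = \O{n(n+q)}$ for the $k$-th ideal and Lemma~\ref{lem:ideal_space} for the space bound, and your explicit treatment of the leaves (choosing $\mu \ge \alpha + \O{1}$, since Lemma~\ref{lem:phi-ideal} only covers non-empty posets) fills in a step the paper leaves implicit for ideals.
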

	
	\begin{remark}
	Instead of requiring the input poset to be given by its adjacency matrix, one can modify Algorithm~\ref{alg:ideal} so that the input poset is given by its Hasse diagram. 
	The idea is as follows:
	Let $m$ be the number of edges in the Hasse diagram. 
	We show that a single iteration needs $\O{n+m}$ instructions.
	For the potential analysis, we use a potential of the form $\Phi(P) = \alpha + \beta n + \gamma m + \delta q$. 
	With this, we can show as before that the algorithm achieves constant amortized delay.
	We omit the details here.
	\end{remark}
	
	We conclude this section by showing that we cannot use the Push-out method (see Theorem~\ref{thm:uno}) to show that Algorithm~\ref{alg:ideal} has constant amortized delay.
	
	\begin{example}\label{ex:uno}
	Let $\ell \in \mathbb{N}$. 
	We construct a poset $P$ for which the Push-out method (see Theorem~\ref{thm:uno}) fails to prove that Algorithm~\ref{alg:ideal} has constant amortized delay. 
	See Figure~\ref{fig:unoexample} for an illustration.
	
	\begin{figure}
	\centering
	\includegraphics{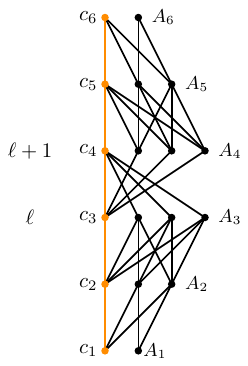}
	\label{fig:unoexample}
	\caption{Construction of the poset $P$ for which the Algorithm~\ref{alg:ideal} cannot be analyzed using the Push-out method. The illustration uses $\ell =3$.}
	\end{figure}
	We start with a disjoint union of $2\ell$ antichains $A_1, \dots, A_{2\ell}$, where the size of $A_{\ell+1-i}$ and $A_{\ell+i}$ is $\lceil 1+\ell/i \rceil$ for $i \le \ell$
	(the rounding does not change the asymptotics in the following computations).
	We add cover relations as follows:
	Fix $c_i \in A_i$ for $i =1, \dots, 2\ell$ and let
   $c_i \prec c_{i+1}$ for $i \in \{1, \dots, 2\ell-1\}$. 
	Thus $\{c_1, \dots, c_{2\ell}\}$ is a longest chain in $P$. 
	For $i\in \{2, \dots, \ell\}$, $a \in A_{i} \setminus \{c_i\}$ and $b \in A_{i-1}$, let $b \prec a$. 
	For all $a \in A_\ell$, let $a \prec c_{\ell+1}$. 
	Symmetrically, for all $i \in \{\ell+1, 2\ell -1\}$,  $a \in A_i \setminus \{c_i\}$ and $b \in A_{i+1}$, let $a \prec b$, and let $c_{\ell} \prec a$ for all $a \in A_{\ell+1}$. 
%
	
	We have $n =2 \sum_{i=1}^\ell \lceil 1 + \ell/i\rceil = \Theta(\ell \log \ell)$.
	The incomparable pairs in $P$ are of the following forms: pairs of elements from the same antichain, all elements in $(A_i \setminus \set{c_i}) \times (A_j \setminus \set{c_j})$ for $i \le \ell < j$, and pairs of the form $\{c_i, a\}$ with $a \in A_j \setminus \{c_j\}$ and $j <i$ if $i \leq \ell-1$, and $j > i$ if $i \geq \ell+1$.
	Thus $q = \Theta(n^2) = \Theta(\ell^2 \log^2 \ell)$.
	For $i \le \ell-1$, we have $P_i =\bigcup_{j=1}^{i+1} A_i \setminus \set{c_i}$.
	Thus $n_i = \Theta(\ell \log i)$ and $q_i = \Theta(\sum_{j=\ell-i}^\ell (\ell/j)^2)$.
	Similar formulas hold for $i \ge \ell+1$.
	Finally, we have $P_\ell =P \setminus C$, so $n_\ell \le n$ and $q_\ell \le q$.
	
	Recall that the complexity of the iteration of Algorithm~\ref{alg:ideal} processing $P$ is $T(P) = \Theta(n+q) = \Theta(\ell^2 \log^2 \ell).$
	The time complexity of processing all children iterations of $P$, except for~$P_\ell$, is \[\sum_{i = 0}^{2\ell} T(P_i) - T(P_\ell) = \Theta\left( \sum_{i=1}^{\ell-1} \sum_{j=\ell-i}^\ell (\ell/j)^2 \right) = \Theta(\ell^2 \log \ell)\] since \[\sum_{i=1}^{\ell-1} \sum_{j=\ell-i}^\ell \frac{1}{j^2} = \sum_{i=1}^\ell i \frac{1}{i^2} - \frac{1}{\ell^2} = \Theta(\log \ell).\]
	Furthermore, $(|C(P)|+1) T^\star = (2\ell+1) T^\star = \Theta(\ell)$.
	Now assume that Algorithm~\ref{alg:ideal} satisfies the Push-out condition for some $\alpha > 1$ and $\beta \geq 0$. 
	In particular, this implies that
	$$\sum_{i = 0}^{2\ell} T(P_i) \geq \alpha T(P) + \beta (2\ell+1) T^\star.$$
This yields
\begin{align*}
	\Theta(\ell^2 \log \ell) &= \sum_{i = 0}^{2\ell} T(P_i) - T(P_\ell) \\
	&\ge \alpha T(P) - T(P_\ell) - \beta (2\ell+1) T^\star \\
	&\ge (\alpha-1) T(P) -  \beta (2\ell+1) T^\star  \\
	&= \Theta(\ell^2 \log^2 \ell)
\end{align*}	
(note that $\alpha -1 > 0$).
By choosing $\ell \in \mathbb{N}$ sufficiently large, we thus obtain a contradiction. 
Thus Algorithm~\ref{alg:ideal} does not satisfy the Push-out condition with constant time.
\end{example}
	
\subsection{Gray code enumeration of poset ideals}\label{sec:graycodeideals}

In this section, we modify Algorithm~\ref{alg:ideal} to list the ideals of the input poset $P$ in Gray code order, while maintaining constant amortized delay.
This is achieved by choosing a particular ordering of the recursive calls in Algorithm~\ref{alg:ideal} as well as specified start and end ideals for every recursive call.  
		
For $k \in \mathbb{N}$, let $\flip^k(P)$ be the following flip graph: the vertices are the ideals of $P$, and two ideals are connected by an edge if they differ in at most $k$ elements.
Pruesse and Ruskey~\cite{pruesse1993gray} showed that $\flip^2(P)$ has a Hamiltonian path.
Note that in order to obtain loopless algorithm for enumerating ideals, it would be sufficient to find an efficiently computable Hamiltonian path in $\flip^m(P)$ for an arbitrary constant $m$.
		
We first show that the order of the recursive calls in Algorithm~\ref{alg:ideal} can be chosen such that two successively visited ideals differ in at most three elements.
We proceed in two steps: first, we show that there is a Hamiltonian path in $\flip^3(P)$ that lists the ideals in $P_i$ consecutively, and maintains this property in the inductive steps. 
Subsequently, we discuss how to convert this to an algorithm. 
We use the notation from Section~\ref{sec:ideal_prelims}.

		
\begin{lemma}\label{lem:ideal_gray}
For every ideal $I \neq P$, the flip graph $\flip^3(P)$ contains a Hamilton path between~$I$ and $P$.
Similarly, for every ideal $I \neq \emptyset$, the flip graph $\flip^3(P)$ contains a Hamilton path between~$I$ and $\emptyset$.

\end{lemma}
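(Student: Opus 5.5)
We argue by induction on $|P|$. It suffices to prove the first statement. The map $I \mapsto P \setminus I$ is a bijection between the ideals of $P$ and the ideals of the dual poset $P^{\mathrm{op}}$. It preserves symmetric differences and exchanges $P$ with $\emptyset$, so the second statement for $P$ is the first statement for $P^{\mathrm{op}}$. The induction will use both statements on the smaller posets $P_0, \dots, P_k$, where $k = |C|$ as in Section~\ref{sec:ideal_prelims}; this is legitimate since each $P_i$ misses $C$ and is strictly smaller. If $P=\emptyset$ there is nothing to prove.

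By Lemma~\ref{lemma:idealcorrespondence}, the block $\id_i(P)$ is the image of $\id(P_i)$ under $I' \mapsto I' \cup D_i$. This map preserves symmetric differences, so a Hamilton path in $\flip^3(P_i)$ gives a Hamilton path in $\flip^3$ restricted to $\id_i(P)$. It runs to $D_i$ if it ends at $\emptyset$, and to $U_i$ if it ends at $P_i$. The Hamilton path of $\id(P)$ will traverse the blocks one after another, each block contiguously. We connect blocks by the following short flips. First, $U_i \leftrightarrow U_i \cup \{c_{i+1}\}$ joins block $i$ to block $i+1$, since $U_i\cup\{c_{i+1}\}$ is an ideal in $\id_{i+1}(P)$. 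Second, $D_{i+1} \leftrightarrow D_{i+1}\setminus\{c_{i+1}\}$ joins block $i+1$ to block $i$. Third, Lemma~\ref{lemma:propertiesdi} provides the two-step jumps $U_i \to U_i\cup\{c_{i+1},c_{i+2}\}$ from block $i$ to block $i+2$, entering at the image of $U_i\setminus D_{i+2}$, and $D_{i+2}\to D_{i+2}\setminus\{c_{i+1},c_{i+2}\}$ from block $i+2$ to block $i$.

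Let $j=\zeta(I)$; the target $P=U_k$ is the top of block $k$. The route is a zigzag by parity. Inside block $j$, run from $I$ down to $D_j$ by the induction hypothesis (second statement in $P_j$). Then descend through blocks $j-2, j-4, \dots$ using the $D$-jumps. In each such block, enter at the prescribed ideal and leave at its minimum, again by the second statement in $P_i$. Once the bottom (block $0$ or $1$) is reached, step to the block of the other parity with a single-element flip. Then ascend through blocks $j-1, j-3, \dots$ and on through the remaining blocks $j+1, j+2, \dots, k$. On the way up, leave each block at its maximum $U_i$ (first statement in $P_i$) and use the $U$-jumps while both parities below $j$ remain unvisited, the single steps $U_i \to U_i\cup\{c_{i+1}\}$ above $j$, and finish at $U_k = P$. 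Every connecting flip changes at most two elements. The allowance of three is reserved for repairs. For instance, a block of the form $\{D_i\}=\{U_i\}$ (that is, $P_i=\emptyset$) can be passed through at once. Alternatively, an entrance flip can be merged with a single-element move inside the next block.

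The main obstacle is the degenerate situations where the induction hypothesis cannot be applied, namely when the entrance ideal of a block coincides with the extremal ideal at which we must leave it. Examples are $I=D_j$, or entering block $i$ at $D_{i+2}\setminus\{c_{i+1},c_{i+2}\}$ when $L_{i+1}\cup L_{i+2}=\emptyset$, so that the entrance is already $D_i$ while we need to exit at $D_i$. My first attempt is to reverse the traversal direction of that block: exit at $U_i$ instead, and reroute the parity pattern, possibly starting by ascending rather than descending. If that fails, I would modify a connecting flip to involve three elements, for example $U_i\cup\{c_{i+1},c_{i+2}\}$ plus one minimal element of $P_{i+2}$, so that the entrance differs from the forced exit. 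A careful case analysis on the parity of $k-j$ and on which blocks are trivial should close the argument.
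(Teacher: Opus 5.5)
\textbf{The case $\zeta(I)=k$ is missing.} Your zigzag matches the paper's Case~1: the block order, the connecting flips $U_i\to U_i\cup\{c_{i+1}\}$, $U_i\to U_i\cup\{c_{i+1},c_{i+2}\}$ and $D_{i+2}\to D_{i+2}\setminus\{c_{i+1},c_{i+2}\}$, and keeping the third element for repairs. But the construction assumes $\zeta(I)<k$.

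Suppose $c_k\in I$; such ideals $I\neq P$ exist whenever $P_k\neq\emptyset$, e.g.\ $I=D_k$. Then both the start $I$ and the target $P=U_k$ lie in $\id_k(P)$. Every other block $\id_i(P)$ is non-empty, since it contains $D_i$. So no Hamilton path from $I$ to $P$ can traverse $\id_k(P)$ contiguously. Indeed, for $j=k$ your route ends in block $k-1$, not at $P$. Duality does not help. Under $I\mapsto P\setminus I$ with the reversed chain, this case becomes the second statement with both $P\setminus I$ and $\emptyset$ in block~$0$, which is the same obstruction. No choice of repairs at the block boundaries fixes it either.

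\textbf{The missing idea is to split block $k$.} The paper fixes $y\in P\setminus I$, which necessarily lies in $P_k$. It sets $P'=\{u\in P_k\colon u\not\succeq y\}$ and $P''=\{u\in P_k\colon u\not\preceq y\}$. With $D_y$ the downset of $y$ in $P_k$, it uses
\[
\id(P_k)=\id(P')\dcup\{J\cup D_y\colon J\in\id(P'')\}.
\]
The path then runs as follows.
\begin{itemize}
\item Traverse $\id(P')$ from $I$ down to $D_k$, by induction on $P'$; note $y\notin I$.
\item Descend through blocks $k-2,k-4,\dots$, then ascend through the other parity up to block $k-1$, leaving it at $U_{k-1}$.
\item Enter the $y$-part at $P\setminus\{x\}$ with $x$ maximal in $P''$, or at $P$ if $P''=\emptyset$.
\item Finish at $P$, by induction on $P''$.
\end{itemize}

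An alternative fix for the lemma alone: choose $C$ as a maximal chain through some $y\in P\setminus I$. This forces $c_k\notin I$, and all flips you use only require consecutive chain elements to be covers. However, this ties the chain to $I$ and abandons the longest-chain recursion that Algorithm~\ref{alg:ideal} and its potential analysis rely on.

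\textbf{Degenerate cases.} These are only sketched, and your sample repair is backwards. When $U_i\cup\{c_{i+1},c_{i+2}\}$ already equals $U_{i+2}$, adding a minimal element of $P_{i+2}$ changes nothing. You must instead enter at $U_i\cup\{c_{i+1},c_{i+2}\}\setminus\{x\}$ with $x$ maximal in $P_{i+2}$. Symmetrically, on the way down you leave a block at $D_i\cup\{x\}$ with $x$ minimal in $P_i$. This is exactly what the paper does, and it is where the three-element flips arise.
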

\begin{proof}
We prove both statements in parallel, proceeding by induction on $|P|$.
It is easy to see that the statement holds if~$P$ is a chain. 
In the following, we therefore assume that~$P$ contains incomparable elements.
We first assume $I \neq P$ and construct a Hamiltonian path from $I$ to $P$ in $\flip^3(P)$.
Let~$C$ be a longest chain in $P$ and set $\zeta = \zeta(I)$.
We distinguish two cases:

\paragraph{Case 1: $0 \leq \zeta < k$.}
We distinguish the case that $\zeta$ is even or odd. 

First assume that $\zeta$ is even. 
Recall the correspondence between $\id_i(P)$ and $\id(P_i)$ given by Lemma~\ref{lemma:idealcorrespondence}.
Using this correspondence, we list the ideals of $P$, processing the posets $P_0, \dots, P_{k}$ recursively in the order
\begin{equation}\label{eq:zeta_even}
P_\zeta, P_{\zeta-2}, \ldots, P_0, P_1, P_3, \ldots, P_{\zeta-1}, P_{\zeta+1}, P_{\zeta+2}, \ldots, P_{k}.
\end{equation}

For every $i \in \{0, \dots, k\}$, we define ideals $I^f_i$ and $I^l_i$ of $P$ such that $\varphi_i^P(I^f_i)$ and $\varphi_i^P(I^l_i)$ are the first and the last visited ideals of $\id(P_i)$, respectively.
We distinguish multiple cases. 
For $\zeta > 0$, the cases given in Table~\ref{tab:zeta_even}.
Note that all sets listed are ideals of $P$ by Lemma~\ref{lemma:propertiesdi}.
By $P_i^{\text{min}}$ and $P_i^{\text{max}}$, we denote the set of minimal and maximal elements in $P_i$, respectively. 
\renewcommand{\arraystretch}{2.8}
\begin{table}[ht!]
	\small
	\setlength{\tabcolsep}{3pt}
	\centering
	\begin{tabular}{c|c|c}
		$i$ & $I_i^f$ & $I_i^l$ \\
		\hline\hline
		$\zeta$, if $\zeta >0$& $I$ &  $\begin{cases}
			D_\zeta &\text{if $\neq I$ or $P_\zeta = \emptyset$} \\
			D_\zeta \cup \set{x} &\text{else, with $x \in P_\zeta^{\text{min}}$}
		\end{cases}$ \\
		\hline
	\shortstack{$2 \leq i \leq \zeta-2,$\\ $i \equiv \zeta \pmod{2}$}
		& $D_{i+2} \setminus \set{c_{i+1},c_{i+2}}$& $\begin{cases}
			D_i &\text{if $\neq I^f_i$ or $P_i = \emptyset$} \\
			D_i \cup \set{x} &\text{else, with $x \in P_i^{\text{min}}$}
		\end{cases}$\\
		\hline
		$0$ & \dittotikz &$
		\begin{cases}
			U_0 & \text{if $\neq I^f_0$ or $P_0 = \emptyset$} \\
			U_0 \setminus \set{x} & \text{else, with $x \in P_0^{\text{max}}$} \\
		\end{cases}$ \\
		\hline
		$1$ &$
		\begin{cases}
			U_0 \cup \set{c_1} & \text{if $\neq U_1$ or $P_1 = \emptyset$} \\
			U_0 \cup \set{c_1} \setminus \{x\} & \text{else, with $x \in P_1^{\text{max}}$} \\
		\end{cases}$& $U_1$\\
		\hline
\shortstack{$3 \leq i \leq\zeta-1,$\\ $i \not \equiv \zeta \pmod{2}$}& $\begin{cases}
			U_{i-2} \cup \set{c_i,c_{i-1}} &\text{if $\neq U_i$ or $P_i = \emptyset$} \\
			U_{i-2} \cup \set{c_i,c_{i-1}} \setminus \{x\} &\text{else, with $x \in P_i^{\text{max}}$} \\
		\end{cases}$ & $U_i$\\
		\hline
		$\zeta + 1, \text{ if } \zeta >0$ & \dittotikz & \dittotikz \\
		\hline
		$\zeta +2 \leq i \leq k$ &$\begin{cases}
			U_{i-1} \cup \set{c_i} & \text{if $\neq U_i$  or $P_i = \emptyset$} \\
			U_{i-1} \cup \set{c_i} \setminus \{x\} & \text{else, with $x \in P_i^{\text{max}}$} \\
		\end{cases}$& \dittotikz
	\end{tabular}
	\caption{Case distinction for the start and end ideals in case $\zeta$ is even. The choice of the element $x$ in $P_i^{\text{min}}$, or $P_i^{\text{max}}$ respectively, is arbitrary.}
	\label{tab:zeta_even}
\end{table}
For $\zeta = 0$, the ordering is given by $P_0, P_1, \dots, P_k$.
We use 
$$I_0^f = I \text{ and } I^l_0 =
\begin{cases}
	U_0 & \text{ if $I \neq U_0$ or $P_0 = \emptyset$} \\
	U_0 \setminus \set{x} & \text{ otherwise, where $x \in P_0^\text{max}$,} \\
\end{cases}$$
as well as
\[I_1^f =
\begin{cases}
	U_0 \cup \set{c_1} & \text{if $\neq I^l_1$ or $P_1 = \emptyset$} \\
	U_0 \cup \set{c_1} \setminus \set{x} & \text{else, where $x \in P_1^{\text{max}}$} \\
\end{cases} \text{ and } I_1^l =U_1\]
(so the lines corresponding to $0$ and $1$ in Table~\ref{tab:zeta_even}, with the change that $I_0^f = I$).
For $i \geq 2$, the ideals are defined according to the case $i > \zeta +1$ in Table~\ref{tab:zeta_even}.

Now let $\zeta \in \{0, \dots, k-1\}$. 	
We show that by induction that $\flip^3(P)$ contains the desired Hamiltonian path. 
For every $i \in \{0, \dots, k\}$, one of $I_i^f$ and $I_i^l$ equals $D_i$ or $U_i$, so its image under~$\varphi_i^P$ is $\emptyset$ or $P_i$. 
By construction, we have $I_i^f = I_i^l$ if and only if $P_i = \emptyset$, in which case $P_i$ has a single ideal.
From Table~\ref{tab:zeta_even}, it is easily checked that whenever $P_i$, $P_j$ appear consecutively in the ordering given by~\eqref{eq:zeta_even}, then $I_i^l$ and $I_j^f$ differ in at most three elements.
By induction, we obtain a Hamiltonian path in $\flip^3(P_i)$ that starts with $\varphi_i^P(I_i^f)$ and ending with $\varphi_i^P(I_i^l)$.
Concatenating the respective preimages yields a Hamiltonian path from~$I$ to $P$ in $\flip^3(P)$.

Now assume that $\zeta$ is odd. We process the posets in the order
\begin{equation}\label{eq:zeta_odd}
	P_\zeta, P_{\zeta-2}, \ldots, P_1, P_0, P_2, \ldots, P_{\zeta-1}, P_{\zeta+1}, P_{\zeta+2}, \ldots, P_{k}.
\end{equation}

For $i \in \{0, \dots, k\}$, we define start and end ideals $I_i^f$ and $I_i^l$ as before.
Compared to the previous case, we only modify the definitions for $i \in \{0,1\}$ as the respective posets are traversed in opposite order in \eqref{eq:zeta_even} and \eqref{eq:zeta_odd}. 
We set
\[I_1^f =D_3 \setminus \set{c_1, c_2} \text{ and } I_1^l = \begin{cases}
	U_1 & \text{if $\neq I^f_1$ or $P_1 = \emptyset$} \\
	U_1 \setminus \set{x} & \text{else, with $x \in P_1^{\text{max}}$} \\
\end{cases},\] 
as well as
\[I_0^f = \begin{cases}
	U_1 \setminus \{c_1\} & \text{if $\neq U_0$ or $P_0 = \emptyset$} \\
	U_1 \setminus \{c_1, x\} & \text{otherwise, with $x \in P_0^{\text{max}}$} \\
\end{cases} \ \text{ and } I_0^l = U_0.\]

As in the even case, it can be shown that consecutively listed ideals differ in at most three elements and every recursive call starts or ends with the empty set or the full poset, so we obtain a Hamiltonian path from $I$ to $P$ in $\flip^3(P)$ by induction. 

\paragraph{Case 2: $\zeta = k$.}
In this case, the Hamiltonian path begins and ends with ideals belonging to $P_k$ via the correspondence from Lemma~\ref{lemma:idealcorrespondence}. 
Thus we need to split $\id(P_k)$, traversing one part at the beginning of the enumeration and the other at the end. 
To this end, we fix $y \in P \setminus I$ 
and let $D_y =\{u \in P_k \colon u \preceq y\}$ be the downset of $y$ in $P_k$.
	We define two subposets $P' = \{u \in P_{k} \colon u \not \succeq y\}$ and $P'' = \{u  \in P_{k} \colon u \not \preceq y\}$.
			Set $\hat{\id}(P'') = \{J \cup D_y \colon J \in \id(P'')\}$.
			It easily verified that \[\id(P_k) = \id(P') \dcup \hat{\id}(P'').\]		

			If $k$ is even, we enumerate the ideals of $P$ in the order 
			$$\id(P'),\, \id(P_{k-2}),\, \id(P_{k-4}), \ldots, \id(P_0),\, \id(P_1),\, \id(P_3), \ldots,\, \id(P_{k-1}), \hat{\id}(P'').$$ 
			If $k$ is odd, we use the ordering
			$$\id(P'),\, \id(P_{k-2}),\, \id(P_{k-4}), \ldots, \,\id(P_1), \,\id(P_0), \,\id(P_2), \ldots,\, \id(P_{k-1}),\, \id(P'').$$
			
			We use the same start and end ideals for $P_0, \ldots, P_{k-1}$ as in the case $\zeta < k$.
			The first enumerated ideal for $P'$ is 
			$I'^f = I$, the last is 
			$$I'^l =
			\begin{cases}
				D_{k} & \text{ if $\neq I$ or $P' = \emptyset$} \\
				D_{k} \cup \set{x} & \text{else, where $x \in P'^{\text{min}}$.}
			\end{cases}$$
		The first enumerated ideal for $P''$ is
			$$I''^f =
			\begin{cases}
				P & \text{ if $P'' = \emptyset$} \\
				P \setminus \set{x} & \text{else, with $x \in P''^{\text{max}}$}, \\
			\end{cases}$$
		the last is  $I''^l = P$.
		With this, we proceed as in the previous case.
		\end{proof}
		
		We now describe how to incorporate the choice of the recursion order from the previous proof in Algorithm~\ref{alg:ideal}. 
		For the sake of clarity, we only highlight the implementation of the rules given by the above proof.
		
		We modify the \texttt{Enumerate} routine in Algorithm~\ref{alg:ideal} such that it takes four input arguments:
		the poset $P$, elements $R$ in an input poset, the first visited ideal $I^f$ and the last visited ideal $I^l$ of~$P$. 
		The recursive function visits $R \cup I$ for all ideals $I$ of $P$.
		
		We initialize $D \gets \bigcup_{i=2}^\zeta L_i$ (so $D = D_\zeta$), 
		$R' \gets R \cup D$, $P' \gets \bigcup_{i=2}^\zeta S_i \setminus D$ (so $P' = P_\zeta$), and 
		$J^f \gets I^f \setminus D$ (so $J^f= I^f_\zeta \setminus D_\zeta$). 
		We then enumerate the ideals in $P_\zeta$ by calling $\texttt{Enumerate}(P', R', J^f, J^l)$. 
		
		To enumerate the ideals corresponding to $P_{\zeta-2}$, we update the variables as follows:
		\begin{alignat*}{1}
P' &\gets (P' \setminus (S_{\zeta-1} \cup S_{\zeta-2})) \cup (L_{\zeta-1} \cup L_{\zeta-2}) \textbf{ // } P' = P_{\zeta-2} \\
J^f &\gets (S_{\zeta-1} \cup S_{\zeta-2}) \setminus \set{c_{\zeta-1}, c_{\zeta-2}} \textbf{ // } J^f = I^f_{\zeta-2} \setminus D_{\zeta-2} \\
J^l &\gets
\begin{cases}
\emptyset & \text{ if $J^f \neq \emptyset$ or $P' = \emptyset$} \\
\set{x} & \text{else, where $x \in P'^{\text{min}}$.}
\end{cases} \\
R' &\gets R' \setminus (S_{\zeta-1} \cup S_{\zeta-2}) \textbf{ // }  R' = R \cup D_{\zeta-2} \\
D &\gets D \setminus (S_{\zeta-1} \cup S_{\zeta-2}) \textbf{ // } D = D_{\zeta-2}.
		\end{alignat*}
We then call $\texttt{Enumerate}(P',R',J^f,J^l)$. 
The subsequent changes described by the proof of Lemma~\ref{lem:ideal_gray} can be implemented analogously.

To determine a minimal or maximal element $x$ in $P_i$, we use the computation of the longest chain in the children iteration for $P_i$, and let $x$ be the minimal or maximal element of this chain, respectively. 
As in Lemma~\ref{lem:ideal_time}, we can argue that every element of $P$ is inserted or removed a constant number of times.
Thus the time complexity of a single iteration of this modified version of Algorithm~\ref{alg:ideal} is $\O{n+q}$.  
In the complexity analysis (Lemma~\ref{lem:phi-ideal}), we replace $\Phi(P_k)$ by $\Phi(P') + \Phi(P'')$ due to splitting the poset $P_k$ in the second case. 
		Since $\Phi(P_k) \leq \Phi(P') + \Phi(P'')$, the proof of Lemma~\ref{lem:phi-ideal} can be carried out as before. 
This shows that the algorithm has constant amortized delay. 
By Theorem~\ref{thm:loopless}, we can transform it into an algorithm 
Summarizing, we obtain the following:
	
\begin{theorem}
With the above-described recursive order, Algorithm~\ref{alg:ideal} enumerates all ideals of a poset $P$ with constant delay such that two consecutively visited ideals differ in at most three elements. 
It requires $\O{n(n+q)}$ space and time for initialization.
	\end{theorem}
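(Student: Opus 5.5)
The proof assembles three ingredients already established: the Gray code order of Lemma~\ref{lem:ideal_gray}, the amortized analysis of Lemma~\ref{lem:phi-ideal}, and the deamortization of Theorem~\ref{thm:loopless}.

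\textbf{Step 1: correctness and the $3$-Gray property.} By induction on the recursion, the modified \texttt{Enumerate}$(P,R,I^f,I^l)$ visits exactly the sets $R\cup J$ for $J\in\id(P)$. Each is visited once, starting with $R\cup I^f$ and ending with $R\cup I^l$.
\begin{itemize}
\item Lemma~\ref{lemma:idealcorrespondence} gives the bijection $\id_i(P)\cong\id(P_i)$.
\item In Case~2, the splitting $\id(P_k)=\id(P')\dcup\hat{\id}(P'')$ is used.
\item The chosen recursion orders are exactly the Hamiltonian path constructed in Lemma~\ref{lem:ideal_gray}.
\end{itemize}
Consecutive visits inside one child differ in at most three elements by induction. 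At a transition between children, the last ideal $I_i^l$ and the next first ideal $I_j^f$ differ in at most three elements by the table in that proof. One also has to check that every prescribed start and end ideal is legal. That is, it is an ideal of the right class, and for each child with $P_i\neq\emptyset$ we have $I_i^f\ne I_i^l$, one of them being $\emptyset$ or $P_i$. This is what the recursive hypothesis requires, and Lemma~\ref{lemma:propertiesdi} guarantees it.

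\textbf{Step 2: constant amortized time.} First I show that one iteration still costs $\O{n+q}$.
\begin{itemize}
\item The sets $P'$, $R'$, $D$, $J^f$, $J^l$ are updated through the $S_i$ and $L_i$, and each element is inserted and removed $\O{1}$ times. The orders only reverse or permute the sweep over $i$, and each index is visited once, so this costs $\O{\sum n_i}=\O{n+q}$.
\item Minimal and maximal elements $x$ of a child poset are read off from the longest chain computed in that child.
\item In Case~2, splitting $P_k$ into $P'$ and $P''$ with respect to a fixed $y$ costs $\O{n_k}$ comparisons with $y$.
\end{itemize}
For the potential, the Case~2 split replaces $\Phi(P_k)$ by $\Phi(P')+\Phi(P'')$. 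Since $\Phi(P')+\Phi(P'')\ge\Phi(P_k)$, this only adds potential, so the inequality of Lemma~\ref{lem:phi-ideal} still holds. Careful: $P'\cup P''$ equals $P_k$ minus the part of $D_y$ below $y$. To keep the inequality I would instead argue that $\alpha$ pays for the extra node and that $q,t$ are essentially preserved. Alternatively, I would treat the split as an additional iteration of cost $\O{n_k}$, paid by the extra $\alpha+\beta n$ term of one child. Pinning this down is the main technical obstacle I expect. Once it is done, the Pyramid condition holds with $T^\star=\O{1}$, and Theorem~\ref{thm:amortized} applies.

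\textbf{Step 3: deamortization.} The change between successive visited ideals is at most three elements, so it fits in $\O{1}$ memory and can be applied in $\O{1}$ time. The bound needed is $\Delta=\O{n(n+q)}$, by Lemma~\ref{lem:ideal_time}. That lemma's bound of at most $n$ levels on the recursion depth persists because every child poset omits the chain $C$; in Case~2 the children still avoid $c_k$. Theorem~\ref{thm:loopless} then gives worst-case delay $\O{1}$, with $\O{\Delta}$ extra space and initialization time. Together with the $\O{n^2}$ preprocessing, this is $\O{n(n+q)}$ space and initialization, as claimed.
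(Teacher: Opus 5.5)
Your three-step plan (the Gray code of Lemma~\ref{lem:ideal_gray}, an $\O{n+q}$ bound per iteration together with Lemma~\ref{lem:phi-ideal}, then Theorem~\ref{thm:loopless} with $\Delta=\O{n(n+q)}$) is the paper's argument. However, Step~2 is left open at the one point not inherited from earlier lemmas: that the Case~2 split does not decrease the potential. You first assert $\Phi(P')+\Phi(P'')\ge\Phi(P_k)$, then retract it on the basis of a false identity, and leave it unresolved. In fact $P'\cup P''=P_k\setminus\{y\}$, because an element strictly below $y$ is not $\succeq y$ and so lies in $P'$. Also, $P'\cap P''$ consists exactly of the elements of $P_k$ incomparable to $y$. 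Your suggested repair (``$\alpha$ pays for the extra node and $q,t$ are essentially preserved'') does not work. Every incomparable pair $\{y,u\}$ and every 3-antichain through $y$ disappears, and there can be $\Theta(n)$ and $\Theta(n^2)$ of these, respectively, which a constant $\alpha$ cannot cover.

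The missing observation is that no antichain contains both an element strictly below $y$ and one strictly above $y$. Hence every pair or 3-antichain avoiding $y$ survives in $P'$ or in $P''$, and whatever is lost through $y$ is paid for by duplication. Define:
\begin{enumerate}[(i)]
\item $a$, the number of elements of $P_k$ incomparable to $y$;
\item $b$, the number of 3-antichains of $P_k$ containing $y$, which equals the number of incomparable pairs lying in both $P'$ and $P''$;
\item $c$, the number of 4-antichains containing $y$.
\end{enumerate}
Counting gives $|P'|+|P''|=n_k-1+a$. The numbers of incomparable pairs of $P'$ and $P''$ add up to $q_k-a+b$, and the numbers of 3-antichains add up to $t_k-b+c$. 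Therefore
\[\Phi(P')+\Phi(P'')-\Phi(P_k)=(\alpha-\beta)+(\beta-\gamma)a+(\gamma-\delta)b+\delta c\ \ge\ 1,\]
since $\alpha\ge\beta\ge\gamma\ge\delta$ for $(922,921,385,192)$. With this, Lemma~\ref{lem:phi-ideal} goes through verbatim; the paper states this inequality without the computation.

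The same formula shows why your alternative of a separate split node paying its own $\Theta(n_k)$ cost fails: if $y$ is comparable to all of $P_k$, the gain is only $\alpha-\beta=1$. Charging the $\O{n_k}$ comparisons with $y$ to the parent iteration, as in your first bullet list, is the right bookkeeping. The rest of your argument is fine.
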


	\section{Constant amortized time enumeration of antichains}\label{sec:antichains}
	
	In this section, we describe a recursive algorithm (Algorithm~\ref{alg:antichains}) that enumerates all antichains of an input poset $(P, \preceq)$ with constant delay. 
	It is closely related to the enumeration algorithm for ideals described in the previous section, but uses a simpler recursive setup.
	Again, we first present a basic version and then show how to refine this to an algorithm listing the antichains in Gray code order.
	
	\paragraph{Recursive structure}
	We fix a longest chain $C$ in $P$ and write $C = \{c_1, \dots, c_{k}\}$ with $c_1\prec \dots \prec c_{k}$. 
	Note that every antichain of $P$ contains at most one element of $C$.
	For $i = 1, \ldots, k$, we let $P_i$ denote the set of all elements of $P$ incomparable with $c_i$, and let $P_r = P \setminus C$.
	For $i = 1, \dots, k$, there is a natural bijection between the set of antichains of~$P$ containing $c_i$ and antichains in $P_i$. 
	Moreover, the antichains of $P$ that do not contain an element of $C$ are precisely the antichains of $P_r$. 
	Thus this yields a natural bijection between antichains of $P$ and the union of all antichains of $P_1, \ldots, P_{k}, P_r$, which we exploit for the recursion.
	
	\paragraph{Description of the algorithm}
	In the initialization phase, we relabel the elements of~$P$ to guarantee that $u \preceq v$ implies $u \leq v$. 
	We then proceed recursively as follows: to construct an antichain, we maintain an array $A$ of elements that were already added to the antichain under construction by parent iterations. 
	In each iteration, we first find a longest chain $C$ in the currently considered poset~$P$, and construct the auxiliary sets $L_i$ and $S_i$ defined below. 
	As any antichain contains at most one element from $C$, we distinguish which element of~$C$ (if at all) is contained. 
	That is, we first add the element~$c_1$ to $A$ and call the recursion on the subposet of $P$ consisting of the elements incomparable with~$c_1$. 
	We then proceed analogously with the elements $c_2, \dots, c_{k}$. 
	It remains to enumerate the antichains not containing an element of $C$. 
	For this, we call the recursion with the current antichain $A$ on the subposet consisting of the elements in $P \setminus C$. 
	
	\paragraph{Implementation details}
	The input of Algorithm~\ref{alg:antichains} is a poset $P$ on $n$ elements. 
	As in Section~\ref{sec:ideals},~$P$ is given by a matrix $M$ with $M_{ij} =1$ if $i \prec j$, and $M_{ij}=0$ otherwise. 
	In a preprocessing step, we relabel the elements of $P$ such that $i\preceq j$ implies $i \leq j$.
	For the recursion, note that the initial ordering of all elements is preserved in every subposet.
	Thus, the recursive function calls can use the same matrix $M$ for comparison. 
	The visited antichains are stored in an array.
	As in Section~\ref{sec:ideals}, we set
	\[s_u = \max\{i \in \{0, \dots, k\} \colon c_i \prec u\}\text{ and } l_u = \min\{i \in \{0, \dots, k\} \colon u \prec c_i\}.\]	
	For $i \in \{1, \dots, k\}$, we set $s_{c_i} = l_{c_i} = i$.
	For $i \in \set{0, \ldots, k}$, let 
	\[S_i = \set{u \in P \setminus C \colon s_u = i} \text{ and } L_i = \set{u \in P \setminus C \colon l_u = i}\]

	A longest chain (Line~\ref{line:ac_chain}) and the sets $L_i$ and $S_i$ (Line~\ref{line:ac_lisi}) are computed as in Section~\ref{sec:ideals}, using Algorithm~\ref{alg:antichaindecomp}. 

\paragraph{Complexity analysis}
We now analyze the time and space complexity of Algorithm~\ref{alg:antichains}. 
In particular, we show that it achieves constant amortized delay.
We first analyze the time complexity of the initialization and of a single iteration.
Again, let $Q$ denote the set of incomparable pairs of~$P$ and set $q =|Q|$.

	\begin{lemma}\label{lemma:antichain_time}
	The initialization of Algorithm \ref{alg:antichains} requires $\O{n^2}$ time. 
	The time complexity of one iteration is $\O{n+q}$.

	\end{lemma}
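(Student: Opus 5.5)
The plan mirrors the proof of Lemma~\ref{lem:ideal_time}, with one extra point: bounding the total size of the posets $P_1,\dots,P_k$ passed to the recursive calls.

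\emph{Initialization.} Relabeling $P$ so that $u \preceq v$ implies $u \le v$ is a topological sort on the adjacency matrix $M$. This takes $\O{n^2}$ time, as in the ideal case. After that, all recursive calls reuse the same matrix $M$ for comparisons, so no further preprocessing is needed.

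\emph{Chain and auxiliary values.} In a single iteration, I would first compute a longest chain $C$ with Algorithm~\ref{alg:antichaindecomp}. By Lemma~\ref{lem:antichain_correctness} this costs $\O{n+q}$. Next I compute $s_u$ and $l_u$ for all $u \in P\setminus C$, and hence the sets $S_i$ and $L_i$, exactly as in Section~\ref{sec:ideals}. If $u$ lies in the antichain $A_i$, we scan $c_{i-1},c_{i-2},\dots$ downward. Each comparison except the last involves an incomparable pair, so the total cost is again $\O{n+q}$. The sets $S_i$ are built as sorted lists, so the order property is preserved for the recursive calls.

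\emph{Building the subposets.} The key observation is that, for $u \in P\setminus C$, we have $u \in P_i$ (the set of elements incomparable with $c_i$) if and only if $s_u < i < l_u$. So $u$ is added to $P'$ when passing to $P_{s_u+1}$, which is the $S_{i-1}$ update, and removed when passing to $P_{l_u}$, which is the $L_i$ update. Hence every element is inserted and deleted at most once while $P'$ runs through $P_1,\dots,P_k$. The same pattern as before applies: removal is a traversal of $P'$ deleting the elements with the right index, and insertion is a merge of two sorted lists. This costs
\[\O{k+\textstyle\sum_i |P_i|}.\]
Each element of $P_i$ forms an incomparable pair with $c_i$, and each such pair is counted exactly once, so
\[\textstyle\sum_i |P_i| = q' \le q.\]
Finally, forming $P_r = P\setminus C$ takes $\O{n}$, and adding and removing $c_i$ to and from $A$ takes $\O{k}$.

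Summing these gives $\O{n+q}$ per iteration. The only step needing care is the merge accounting, which shows that each element's insertion and removal is charged $\O{1}$ rather than a rescan of $P'$.
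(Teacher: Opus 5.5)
Your proposal is correct and follows the paper's route. The paper simply defers to the ideal case (topological sort for the initialization, Algorithm~\ref{alg:antichaindecomp} for $C$, and the chain scans for $s_u$ and $l_u$) and asserts that the remaining steps cost $\O{n+q}$. Your identity $\sum_i |P_i| = q' \le q$ is exactly the justification the paper leaves implicit; it reappears as $q' = \sum n_i$ in Lemma~\ref{lem:phi-antichain}. One small wording point: your closing sentence says each insertion and removal is charged $\O{1}$ rather than a rescan of $P'$, but the traversal-and-merge updates you describe do rescan $P'$, costing $\O{|P_{i-1}|+|S_{i-1}|}$ per step. Your bound $\O{k+\sum_i |P_i|}$ already accounts for this correctly, so only the closing sentence needs adjusting.
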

	\begin{proof}
	The initialization is the same as for Algorithm~\ref{alg:ideal}.
		The computation of the longest chain (Line~\ref{line:ac_chain}) and the values $s_u$ and $l_u$ (Line~\ref{line:ac_lisi}) can be carried out in $\O{n+q}$ (see Section~\ref{sec:ideals}).
%
		The remaining non-recursive steps can be carried out in time $\O{n+q}$.
		Therefore, one call of the recursive function in Algorithm~\ref{alg:antichains} has time complexity $\O{n + q}$.
	\end{proof}
	
	Similarly as for Algorithm \ref{alg:ideal}, one proves the following:
	
	\begin{lemma}
 Algorithm~\ref{alg:antichains} has maximal recursion depth $n$.
 The time complexity of the iterations on a path from the root to a leaf in the recursion tree is $\O{n(n+q)} \subseteq \O{n^3}$, using $\O{n^2}$ memory.
	\end{lemma}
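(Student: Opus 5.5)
We mirror the argument of Lemma~\ref{lem:ideal_time}. For the recursion depth, we observe that every recursive call of \texttt{Enumerate} in Algorithm~\ref{alg:antichains} is made on a subposet that misses at least one element of $P$. Indeed, the calls on $P_i$ exclude $c_i$, since $c_i$ is comparable with itself and thus not in $P_i$, and the final call on $P \setminus C$ excludes all of $C$, which is non-empty whenever $P$ is non-empty. Hence the size of the poset strictly decreases along every root-to-leaf path. Starting from $|P| = n$, the depth is therefore at most $n$, or $n+1$ if one counts the leaf iteration on the empty poset, which is harmless.

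For the time bound along a path, we invoke Lemma~\ref{lemma:antichain_time}. An iteration on a subposet $P'$ with $n'$ elements and $q'$ incomparable pairs costs $\O{n'+q'}$. Every subposet on the path is contained in $P$, so $n' \le n$, and its incomparable pairs are incomparable pairs of $P$, so $q' \le q$. Summing over at most $n+1$ iterations gives $\O{n(n+q)}$. Since $q \le \binom{n}{2}$, this is $\O{n^3}$. This bound is exactly the quantity $\Delta$ needed later for Theorem~\ref{thm:kth} and Theorem~\ref{thm:loopless}.

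For memory, we note that each active iteration stores the following, each of size $\O{n}$:
\begin{itemize}
\item the chain $C$;
\item the values $s_u$ and $l_u$;
\item the sets $S_i$ and $L_i$, whose total size is $\O{n}$ because they partition subsets of $P \setminus C$;
\item the current subposet $P'$;
\item its share of the array $A$, which holds at most one added element per level.
\end{itemize}
The comparison matrix $M$ of size $\O{n^2}$ is shared globally, as explained in the implementation details, because relabeling preserves the order in every subposet. With at most $n$ frames on the call stack, the total is $\O{n^2}$.

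The only point requiring some care is the memory accounting. We must make sure no iteration builds a per-iteration structure of size $\Theta(n+q)$, such as an explicit list of incomparable pairs. The longest-chain routine (Algorithm~\ref{alg:antichaindecomp}) keeps only the antichains $A_t$ and the pointers $x_u$, which take $\O{n}$ space, so the bound holds. Everything else is a direct transfer of the ideal case.
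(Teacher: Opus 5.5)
Your proposal is correct and is essentially the paper's own argument; the paper just says the lemma is proved "similarly" to Lemmas~\ref{lem:ideal_time} and~\ref{lem:ideal_space}, and you carry out that transfer. Every child poset ($P_i$ or $P \setminus C$) avoids $C$, so the depth is at most $n$; summing the per-iteration bound \O{n+q} of Lemma~\ref{lemma:antichain_time} (with $n' \le n$, $q' \le q$ for induced subposets) along a path gives \O{n(n+q)}; and \O{n} data per stack frame over at most $n$ frames gives \O{n^2} memory, with your remarks on the shared matrix $M$ and the \O{n}-space chain routine making explicit what the paper leaves implicit.
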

	
	We now proceed to the proof that Algorithm~\ref{alg:antichains} achieves constant amortized delay. 
	As in Section~\ref{sec:ideals}, we need a finer decomposition of the occurring sets. 
	
	Let $n_i$ and $n_r$ denote the size of $P_i$ and $P_r$, respectively.
	For $i =1, \dots,k$, let $Q_i$ denote the set of incomparable pairs $\{u,v\}$ with $u,v \in P_i$ and set $q_i = |Q_i|$. Analogously, we define $Q_r$ and $q_r$.
	We define a partition $Q = Q' \dcup Q'' \dcup Q'''$ as in Section~\ref{sec:ideals}:
	Let $Q'$ be the set of incomparable pairs in~$P$ containing an element of $C$. 
	Let $Q''$ be the incomparable pairs in $P \setminus C$ which have a common incomparable element on $C$. 
	Finally, let $Q'''$ denote the remaining incomparable pairs, i.e, those unextendable to an antichain of size 3 by any element of $C$.
	Let $q'$, $q''$, $q'''$ denote the sizes of $Q'$, $Q''$, and $Q'''$, respectively. 
	Set $Q''_i = Q'' \cap Q_i$ and $Q'''_i = Q''' \cap Q_i$.
	Note that $Q'' \cup Q''' = Q_r$ and $Q' \cap Q_i = \emptyset$.
	
	Let $T$ be the set of all 3-antichains in $P$.
	As above, we define a partition $T = T' \dcup T'' \dcup T'''$ and let $t', t'', t'''$ be the sizes of $T', T'', T'''$, respectively.
	For $i =1, \dots, k$, let $T_i \subseteq T$ denote those 3-antichains with all elements in $P_i$, and set $t_i = |T_i|$.
	Analogously, we define $T_r$ and~$t_r$.
	Set $T_i'' = T'' \cap T_i$ and $T_i''' = T''' \cap T_i$, and define $t_i'' = |T_i''|$ and $t_i''' = |T_i'''|$. 
	Analogously, we define $T_r'', T_r'''$ and $t_r'', t_r'''$. 

	\begin{lemma}\label{lem:pair}
		It holds that $q''' \leq 96 \left(\sum t''_i + \sum n_i\right)$.
	\end{lemma}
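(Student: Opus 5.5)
The plan is to transfer the proof of Lemma~\ref{lem:pair_ideal} to the antichain setting. The only changes are that the subposets are now $P_1, \dots, P_k$, where $P_i$ consists of the elements incomparable to $c_i$, and that the relevant 4-antichains are now counted through the $t''_i$ of these new $P_i$.

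\textbf{Step 1: localize the pairs in $Q'''$.} Let $\{u,v\} \in Q'''$. Since $u$ and $v$ are incomparable, $s_u < l_v$ and $s_v < l_u$. Every element $c_j$ with $\max\{s_u,s_v\} < j < \min\{l_u,l_v\}$ is incomparable to both $u$ and $v$. The definition of $Q'''$ rules out such an index, so $\max\{s_u,s_v\} = \min\{l_u,l_v\}-1 =: i$.

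Next, maximality of the longest chain $C$ gives $l_u \ge s_u+2$, since otherwise $u$ could be inserted into $C$. Combined with the previous equation, this forces, up to swapping $u$ and $v$, that $u \in S_i$ and $v \in L_{i+1}$. Hence $Q''' = G_0 \dcup \cdots \dcup G_k$ with $|G_i| \le |S_i|\,|L_{i+1}|$.

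\textbf{Step 2: bound each $|G_i|$ via Lemma~\ref{lem:aux}.} Peel $S_i$ into layers of minimal elements $A_{i,1}, \dots, A_{i,e_i}$. Peel $L_{i+1}$ into layers of maximal elements $B_{i,1}, \dots, B_{i,f_i}$. Lemma~\ref{lem:aux} then bounds $|S_i||L_{i+1}|$ by
\[48\sum_j j\Bigl(\tbinom{a_{i,j}}{3}+1\Bigr) + 48\sum_j j\Bigl(\tbinom{b_{i,j}}{3}+1\Bigr).\]

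\textbf{Step 3: the key structural claim.} I claim every element of $A_{i,j}$ is incomparable to $c_{i+1}, \dots, c_{i+j}$. Take $u \in A_{i,j}$. There is a chain in $S_i$ of length $j$ ending at $u$, and every element of $S_i$ lies above $c_i$. If $u \succ c_{i+m}$ for some $m \le j$, we could build the chain $c_1 \prec \cdots \prec c_i \prec (\text{this chain}) \prec u \prec \cdots$. Counting its elements shows it exceeds $|C|$, unless $u \succeq c_{i+j+1}$. That contradicts $s_u = i$, and $u \prec c_{i+m}$ is impossible since $u \succ c_i$ with $s_u = i$. I expect this counting to be the main obstacle, and I will carry it out carefully, exactly as in Lemma~\ref{lem:pair_ideal}.

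\textbf{Step 4: convert to $t''_m$ and $n_m$.} By Step 3, each triple from $A_{i,j}$ together with each $c_{i+m}$, $1 \le m \le j$, is a 4-antichain containing an element of $C$. Equivalently, the triple lies in $T''_{i+m}$: it is a 3-antichain of $P_{i+m}$, and it has the common incomparable element $c_{i+m}$ on $C$.

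Distinct pairs (triple, index) give distinct elements of distinct $T''_{i+m}$. Hence $\sum_{i,j} j\binom{a_{i,j}}{3} \le \sum_m t''_m$, with care taken so that the same triple is not double-charged across different $i$. Since the $S_i$ are disjoint, this holds.

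Likewise, each nonempty $A_{i,j}$ has an element lying in $P_{i+1}, \dots, P_{i+j}$. Charging $j$ to those memberships gives $\sum_{i,j} j \le \sum_m n_m$, again using disjointness of the $S_i$ so that each (element, $P_m$) incidence is charged at most once.

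The symmetric arguments handle the $B_{i,j}$. Summing over $i$ yields $q''' \le 48\cdot 2\bigl(\sum t''_i + \sum n_i\bigr) = 96\bigl(\sum t''_i + \sum n_i\bigr)$.
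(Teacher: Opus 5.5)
Your route is the paper's: for this lemma the paper only says that the proof of Lemma~\ref{lem:pair_ideal} carries over verbatim. Your Steps~1, 2 and~4 are a correct and more explicit version of that transfer. This includes the injective charging of (triple, index) and (representative, index) pairs, and the observation that in the antichain setting every 3-antichain of $P_{i+m}$ automatically lies in $T''_{i+m}$.

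However, your justification of the key structural claim in Step~3 is wrong as written, because the two cases are swapped. The case $u \succ c_{i+m}$ needs no counting, since it directly contradicts $s_u = i$. The case $u \prec c_{i+m}$ is the one that needs the longest-chain count. Your stated reason for excluding it, namely that $u \succ c_i$ with $s_u = i$, is false in general: an element of $A_{i,1}$ with $s_u = i$ and $l_u = i+2$ satisfies $c_i \prec u \prec c_{i+2}$.

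The correct argument runs as follows. Take the chain $u_1 \prec \cdots \prec u_j = u$ with $u_m \in A_{i,m}$, which exists by the peeling construction. If $u \prec c_{i+m}$ for some $1 \le m \le j$, then
\[
c_1 \prec \cdots \prec c_i \prec u_1 \prec \cdots \prec u_j \prec c_{i+m} \prec \cdots \prec c_k
\]
is a chain with $k + j - m + 1 > k$ elements, contradicting that $C$ is a longest chain. Hence $l_u \ge i+j+1$, so $u$ is incomparable to $c_{i+1}, \dots, c_{i+j}$, and in particular $i+j \le k$.

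The same swap must be fixed in your ``symmetric argument'' for $B_{i,j} \subseteq L_{i+1}$, with $v = v_j \prec \cdots \prec v_1$. There, $v \prec c_{i+1-m}$ is excluded by $l_v = i+1$. The case $v \succ c_{i+1-m}$ is excluded by the chain $c_1 \prec \cdots \prec c_{i+1-m} \prec v_j \prec \cdots \prec v_1 \prec c_{i+1} \prec \cdots \prec c_k$, which again has $k+j-m+1$ elements. With this repair your proof goes through and gives the constant $96$.
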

	
	\begin{proof}
	Although the definition of the recursion sets $P_0, \dots, P_{k}, P_r$ is different than in Section~\ref{sec:ideals}, the proof of Lemma~\ref{lem:pair_ideal} verbally carries over to this setting.
	\end{proof}

 Set $\alpha = 196$, $\beta = 195$, $\gamma = 97$, $\delta = 96$, $\mu = 392$, and define a potential function $\Phi(P) = \alpha + \beta n + \gamma q + \delta t$.
Using this for the Pyramid condition (see Definition~\ref{def:pyramidcondition}), we now show that Algorithm~\ref{alg:antichains} has constant amortized delay.

\begin{lemma} \label{lem:phi-antichain}
For every poset $P$ with $n \geq 2$, we have $\sum \Phi(P_i) + \Phi(P_r) - \Phi(P) \ge n + q$.
If $n = 1$, we have $\mu - \Phi(P) \geq 1$.
In particular, Algorithm~\ref{alg:antichains} has constant amortized delay.
	\end{lemma}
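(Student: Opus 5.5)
The plan is to mirror the proof of Lemma~\ref{lem:phi-ideal}. Split $\sum \Phi(P_i) + \Phi(P_r) - \Phi(P) - n - q$ into four groups of terms: constants, elements, incomparable pairs and 3-antichains. Bound each group below by a linear combination of $k$, $\sum n_i$, $\sum q''_i$ and $\sum t''_i$, and check that with $\alpha=196$, $\beta=195$, $\gamma=97$, $\delta=96$ every resulting coefficient is $\geq 0$.

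\textbf{Constants and elements.} There are $k+1$ children, so the constant part gives $\alpha(k+1)-\alpha = \alpha k$. We have $n_r = n-k$. Every $u \in P\setminus C$ is incomparable to some $c_i$: otherwise $u$ could be inserted between $c_{s_u}$ and $c_{l_u}$, giving a longer chain. Hence $n_r \le \sum n_i$. In fact $\sum n_i = q'$, since each pair $\{c_i,v\}\in Q'$ corresponds to $v \in P_i$. Therefore
\[\beta\Bigl(n_r + \sum n_i\Bigr) - (\beta+1) n \ \ge\ \beta \sum n_i - (\beta+1)k - n_r \ \ge\ (\beta-1)\sum n_i - (\beta+1)k.\]

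\textbf{Pairs.} We have $Q_r = Q''\dcup Q'''$. Every pair in $Q''$ lies in some $P_i$, so $q'' \le \sum q''_i$. Lemma~\ref{lem:pair} gives $q''' \le 96(\sum t''_i + \sum n_i)$. Together with $q' = \sum n_i$ this yields
\[\gamma\Bigl(q_r + \sum q_i\Bigr) - (\gamma+1)q \ \ge\ (\gamma-1)\sum q''_i - (\gamma+97)\sum n_i - 96\sum t''_i .\]

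\textbf{3-antichains.} $T_r \supseteq T''\cup T'''$. Moreover $t' = \sum q''_i$, because a 3-antichain $\{c_i,u,v\}$ corresponds exactly to a pair $\{u,v\}\in Q''_i$. This gives $\delta(t_r + \sum t_i - t) \ge \delta \sum t''_i - \delta \sum q''_i$. Here I must make sure that the quantity $|Z|$ in the proof of Lemma~\ref{lem:pair_ideal} is really bounded by $\sum t''_i$ in the present setting. It is: a 4-antichain containing $c_i$ is $c_i$ together with a 3-antichain of $P_i$, and every 3-antichain of $P_i$ lies in $T''$.

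\textbf{Combining.} Summing the three bounds, the coefficients are $\alpha-\beta-1 = 0$ for $k$, $\beta-1-\gamma-97 = 0$ for $\sum n_i$, $\gamma-1-\delta = 0$ for $\sum q''_i$, and $\delta-96 = 0$ for $\sum t''_i$. So the whole expression is $\geq 0$, which is the claim for $n\ge 2$.

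\textbf{Small cases.} For $n=1$, $\Phi(P)=\alpha+\beta = 391 = \mu-1$, which gives $\mu - \Phi(P)\ge 1$. I will also note that a leaf with $P=\emptyset$ satisfies $\mu-\alpha \ge 1$. Alternatively, the recursion can be arranged so that singletons are treated directly as leaves.

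\textbf{Conclusion.} By Lemma~\ref{lemma:antichain_time} one iteration costs $\O{n+q}$, so the Pyramid condition holds with $T^\star=\O{1}$, and Theorem~\ref{thm:amortized} gives constant amortized delay.

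The main obstacle is the careful bookkeeping of which pairs and triples appear in which children. In particular, the step $q''\le\sum q''_i$ (here without the factor $\frac12$ used in the ideal case) and the transfer of Lemma~\ref{lem:pair} must be checked precisely.
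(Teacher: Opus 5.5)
Your proposal is correct and follows the paper's proof essentially line by line. It uses the same grouping into constants, elements, pairs and 3-antichains, the same intermediate bounds ($n_r \le \sum n_i$, $q' = \sum n_i$, $q'' \le \sum q''_i$, Lemma~\ref{lem:pair}, $t' = \sum q''_i$), and the same four vanishing coefficients. Your extra checks fill in details the paper leaves implicit: that the bound on $|Z|$ from Lemma~\ref{lem:pair_ideal} really carries over (here with equality $|Z| = \sum t''_i$), and that the empty-poset leaves of Algorithm~\ref{alg:antichains} satisfy $\mu - \alpha \ge 1$.
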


	\begin{proof}
		We show that $\sum \Phi(P_i) + \Phi(P_r) - \Phi(P) - n - q \geq 0$ if $n \geq 2$.
		We first analyze the contribution of terms corresponding to poset elements in this expression.
		We have $n = n_r + k$, as every element of $P$ is either contained $C$ or in $P_r = P \setminus C$. 
		Moreover, we have $n \le k + \sum n_i$, as every element of $P \setminus C$ belongs to at least one subposet $P_i$.
		Together, this yields
		$$\beta (n_r + \sum n_i) - (\beta + 1) n \ge (\beta - 1) \sum n_i - (\beta + 1) k.$$
		
		Next, we analyze the contribution of terms corresponding to incomparable pairs. We have $q' = \sum n_i$, as for every incomparable pair $\{c, v\}$ with $c \in C$ and $v \in P \setminus C$, the element $v$ belongs to $P_i$. 
		Clearly, we have $q'' = q''_r$. 
		Moreover, as every incomparable pair of $Q_r$ is an incomparable pair in $P \setminus C$ and vice versa, we obtain $q''' = q'''_r$. 
		Finally, we have $q'' \le \sum q''_i$, as every incomparable pair $u,v \in P \setminus C$ is incomparable to some $c_i$ belongs to $Q''_i$.
		Together with Lemma \ref{lem:pair}, this yields
		$$\gamma (q_r + \sum q_i) - (\gamma + 1) q \ge (\gamma-1)\sum q''_i - (\gamma + 97) \sum n_i - 96 \sum t''_i.$$
		
		Now we analyze the contribution of the terms corresponding to 3-antichains.
		We have $t' = \sum q''_i$, as by definition, $t'_i = q''_i$.
		Clearly, we have $t'' = t''_r$. 
		Moreover, we have $t''' = t'''_r$, as every 3-antichain of $Q_r$ is an antichain in $P \setminus C$ and vice versa.
		Combining these equalities, we obtain
		$$\delta (t_r + \sum t_i - t) \ge \delta (\sum t''_i - \sum q''_i).$$
		
		Combining the above inequalities yields
		\begin{align*}
			\Phi(P_r&) + \sum \Phi(P_i) - \Phi(P) - n - q \\
			\ge&\; \alpha k + (\beta - 1) \sum n_i - (\beta + 1) k + (\gamma-1)\sum q''_i - (\gamma + 97) \sum n_i \\
			&- 96 \sum t''_i + \delta (\sum t''_i - \sum q''_i) \\
			\ge&\; k (\alpha - \beta - 1) + \sum n_i (\beta - \gamma - 98) + \sum t''_i (\delta - 96) + \sum q''_i (\gamma - \delta - 1) \\
			=&\; 0,
		\end{align*}
	and thus 
	$\Phi(P_r) + \sum \Phi(P_i) - \Phi(P) \geq n+q$.
If $n =1$, we have 
		By Lemma~\ref{lemma:antichain_time}, a single iteration of Algorithm~\ref{alg:antichains} can be executed in time $\O{n+q}$. 
		Let $\mu \geq \alpha + \beta +1$.
		Then if $n=1$, we have $\mu - \Phi(P) \geq 1$.
		By this and the above, the potential $\Phi$ satisfies the Pyramid condition for $T^\star =\O{1}$. 
		By Theorem~\ref{thm:amortized}, Algorithm~\ref{alg:antichains} has constant amortized delay.
	\end{proof}
	
	\begin{remark}
Similarly to Example~\ref{ex:uno}, one can show that Uno's Push-out condition cannot be used to show that Algorithm~\ref{alg:antichains} achieves constant amortized delay.
	\end{remark}

	Similarly as for ideals, we now refine Algorithm~\ref{alg:antichains} to list the antichains in Gray code order. 
	For $k \in \mathbb{N}$, let $\flip^k(P)$ be the flip graph on the set of antichains of $P$, where two antichains are connected by an edge if they differ in at most $k$ elements.
	It is easy to see that $\flip^1(P)$ does not have a Hamiltonian path if $P$ is a chain on at least three elements. 
	Unlike in the case of ideals, it is not known whether $\flip^2(P)$ contains a Hamiltonian path.
	We now present a simple proof that $\flip^3(P)$ has a Hamiltonian path, which we subsequently turn into an algorithm.
	
	\begin{lemma}\label{lem:graycodeantichain}
		For every non-empty poset $P$, the graph $\flip^3(P)$ contains a Hamiltonian path from a 1-antichain to the empty antichain.
	\end{lemma}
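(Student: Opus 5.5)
We argue by induction on $|P|$, following the recursive structure of Algorithm~\ref{alg:antichains}. We use a longest chain $C=\{c_1,\dots,c_k\}$, the subposets $P_i$ of elements incomparable with $c_i$, and $P_r=P\setminus C$. The antichains of $P$ split into the blocks $\mathcal{B}_i=\{\{c_i\}\cup A \colon A \text{ antichain of } P_i\}$ for $i=1,\dots,k$, together with the block $\mathcal{B}_r$ of antichains of $P_r$. The induction hypothesis is applied to each non-empty $P_i$ and to $P_r$; an empty subposet contributes a single antichain.

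Within each block $\mathcal{B}_i$, the induction hypothesis gives a Hamiltonian path of $\flip^3(P_i)$ from some $\{x_i\}$ to $\emptyset$. Adding $c_i$ to every antichain preserves symmetric differences, so $\mathcal{B}_i$ is traversed from $\{c_i,x_i\}$ to $\{c_i\}$. Reversing it, we may equally start at $\{c_i\}$ and end at $\{c_i,x_i\}$. If $P_i=\emptyset$, the block is the single vertex $\{c_i\}$.

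The blocks are concatenated as follows. Start at $\{c_1\}$, which is the required 1-antichain. Traverse $\mathcal{B}_1$ from $\{c_1\}$ to $\{c_1,x_1\}$, then $\mathcal{B}_2$ in the order $\{c_2,x_2\}\to\{c_2\}$, then $\mathcal{B}_3$ in the order $\{c_3\}\to\{c_3,x_3\}$, and so on, alternating orientation. Each transition between blocks is one of two kinds. A transition $\{c_i\}\to\{c_{i+1}\}$ has symmetric difference $2$. A transition $\{c_i,x_i\}\to\{c_{i+1},x_{i+1}\}$ has symmetric difference $4$, which is too large, so this pairing must be avoided. The cleaner choice is therefore to orient every chain block the same way, from $\{c_i,x_i\}$ to $\{c_i\}$. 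The transition $\{c_i\}\to\{c_{i+1},x_{i+1}\}$ then has symmetric difference $3$, which is allowed. The path now begins at $\{c_1,x_1\}$, a 2-antichain, whereas the statement needs a 1-antichain. To fix this, traverse $\mathcal{B}_1$ in reverse, from $\{c_1\}$ to $\{c_1,x_1\}$, and move from $\{c_1,x_1\}$ to $\{c_2,x_2\}$. That step has symmetric difference $4$ and fails again.

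The workable construction, and the main obstacle, is arranging the ends so that $\mathcal{B}_r$ finishes at $\emptyset$ and every junction has symmetric difference at most $3$. Orient all chain blocks as $\{c_i,x_i\}\to\{c_i\}$ and put $\mathcal{B}_r$ at the start, traversed in reverse by induction: from $\emptyset$ to $\{y\}$ with $y\in P_r$. This ends at $\emptyset$ only if the whole path is reversed. Reversing gives the path
\[
\{c_k\}\to\{c_k,x_k\}\to\cdots\to\{c_1\}\to\{c_1,x_1\}\to\{y\}\to\cdots\to\emptyset,
\]
which starts at the 1-antichain $\{c_k\}$ and ends at $\emptyset$. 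Its junctions are $\{c_{i+1},x_{i+1}\}\to\{c_i\}$, with symmetric difference $3$, and $\{c_1,x_1\}\to\{y\}$, with symmetric difference at most $3$. If some $P_i$ is empty, the block $\mathcal{B}_i$ is just $\{c_i\}$, and the differences only shrink.

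Two edge cases remain. If $P_r=\emptyset$, then $P$ is a chain and the path ends with $\{c_1\}\to\emptyset$. If $P_r\neq\emptyset$, the induction hypothesis supplies a Hamiltonian path of $\mathcal{B}_r$ from a 1-antichain $\{y\}$ to $\emptyset$. It remains to check that every block is visited exactly once, which follows from the partition of the antichains of $P$ into the blocks $\mathcal{B}_1,\dots,\mathcal{B}_k,\mathcal{B}_r$. The base case $|P|=1$ is the path $\{c_1\}\to\emptyset$.
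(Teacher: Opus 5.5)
Your final construction is correct and is essentially the paper's proof. The paper also orients each chain block from $\{c_i\}$ to $\{c_i,x_i\}$ and ends with the $P_r$ block running from a 1-antichain to $\emptyset$. The only difference is that it visits the chain blocks in the order $c_1,\dots,c_k$ rather than $c_k,\dots,c_1$, which is immaterial since any junction $\{c_j,x_j\}\to\{c_i\}$ has symmetric difference $3$. In a write-up, drop the abandoned attempts in your middle paragraph.
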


	\begin{proof}
		We prove the statement by the induction on the size of $P$.
		If $|P|=1$, the statement clearly holds, so assume that $|P| \ge 2$.
		Let $C$ be a longest chain in $P$.
		For every $i = 1, \ldots, |C|$, we obtain a Hamiltonian path $(A_1^i, \dots, A_{t_i}^i)$ in $\flip^3(P_i)$ that starts with the empty antichain and ends with a 1-antichain $\{a_i\}$ for some $a_i \in P_i$.
		Then $(\{c_i\} \cup A_1^i, \dots, \{c_i\} \cup A_{t_i}^i) = (\{c_i\}, \dots, \{c_i,a_i\})$ is a listing of all antichains of $P$ containing $c_i$. 
		Analogously, we obtain a Hamiltonian path $(A_1^r, \dots, A_{t_r}^r)$ in $\flip^3(P_r)$ that starts with a 1-antichain $\{a_r\}$ and ends with the empty antichain.
		Concatenating these listings, we obtain a Hamiltonian path 
		\[(\{c_1\},\dots, \{c_1, a_1\}, \{c_2\}, \dots, \{c_{|C|}, a_{|C|}\}, \{a_r\}, \dots, \emptyset)\]
		from the 1-antichain $\{c_1\}$ to the empty antichain in $\flip^3(P)$.
		Note that if $P_i$ is empty, only the antichain $\{c_i\}$ is visited.
		Similarly, if $P \setminus C$ is empty, only the empty antichain is visited in the last step.
	\end{proof}

	As the induction in the proof of Lemma~\ref{lem:graycodeantichain} uses the recursive structure of Algorithm \ref{alg:antichains}, 
	we can refine Algorithm~\ref{alg:antichains} to visit the antichains in this ordering, by choosing a particular ordering of the recursive calls.
	The implementation as well as the argument that the time complexity is maintained is similar to the corresponding reasoning for ideals in Section~\ref{sec:graycodeideals}. 
	Summarizing, we obtain the following result:

	\begin{theorem}
		With the above-described recursive order, Algorithm~\ref{alg:antichains} enumerates all ideals of a poset $P$ with constant delay such that two consecutively visited ideals differ in at most three elements. 
		It requires $\O{n(n+q)}$ space and time for initialization.
		The first $k$ antichains are visited in time $\O{k + n(n+q)}$.
	\end{theorem}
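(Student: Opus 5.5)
The theorem is assembled from results already proved. I will go through its three claims in order: correctness and visiting order, constant worst-case delay, and the resource bounds.

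\emph{Correctness and order.} Algorithm~\ref{alg:antichains} is modified so that \texttt{Enumerate} takes an extra flag saying whether the listing of the current poset must start with a 1-antichain and end with $\emptyset$, or run in the reverse direction. The recursive calls are ordered as in the proof of Lemma~\ref{lem:graycodeantichain}. First come the calls on $P_1,\dots,P_k$ with $c_i$ added; each is run reversed, so it starts at $\emptyset$ (visiting $\{c_i\}$) and ends at $\{a_i\}$ (visiting $\{c_i,a_i\}$). Last comes $P_r$, run forwards. Induction on $|P|$, with Lemma~\ref{lem:graycodeantichain} as the invariant, shows the following.
\begin{itemize}
\item Every antichain is visited exactly once, by the bijection between antichains of $P$ and the disjoint union of the antichains of $P_1,\dots,P_k,P_r$.
\item Consecutive outputs differ in at most three elements. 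The only nontrivial transitions are $\{c_i,a_i\}\to\{c_{i+1}\}$ and $\{c_k,a_k\}\to\{a_r\}$. Transitions inside a child's listing are handled by induction, since adding the fixed element $c_i$ does not change symmetric differences.
\end{itemize}
To make the reversed listing available, I will prove the statement symmetrically, noting that reversing a Hamiltonian path is again a Hamiltonian path. Recursively this just means processing $P_r$ first and $P_k,\dots,P_1$ afterwards. The element $a_i$ need not be known in advance, since it is simply the last output of the child. Updating $A$ costs $O(1)$ per transition.

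\emph{Constant delay.} The reordering changes neither the set of children nor the per-iteration work, which stays $\O{n+q}$ by Lemma~\ref{lemma:antichain_time}. Iterating over the $P_i$ in reverse uses the same update rules $P'\gets (P'\setminus S_{i})\cup L_{i+1}$ read backwards. So Lemma~\ref{lem:phi-antichain} still gives the Pyramid condition with $T^\star=\O{1}$. Each change between consecutive outputs involves at most three elements, so it is storable in $\O{1}$ memory and applicable in $\O{1}$ time. Theorem~\ref{thm:loopless} then yields worst-case delay $\O{1}$. It also gives additional space $\O{\Delta}$ and initialization time $\O{\Delta}$, where $\Delta=\O{n(n+q)}$ by the recursion-depth lemma for Algorithm~\ref{alg:antichains}.

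\emph{Resource bounds.} The space bound combines the $\O{n^2}$ call-stack memory with the $\O{n(n+q)}$ queue. The bound for the first $k$ antichains follows from Theorem~\ref{thm:kth} with $T^\star=\O{1}$ and the same $\Delta$. Finally, the statement should read ``antichains'' in place of ``ideals''.

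The main obstacle is checking that reversing the recursive order inside a child still supports $\O{n+q}$ incremental updates and keeps the elements sorted. I would handle this exactly as for ideals in Section~\ref{sec:graycodeideals}: each element is inserted and removed a constant number of times, and sorted lists are merged.
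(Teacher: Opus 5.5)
Your proposal is correct and follows essentially the same route as the paper. You order the recursive calls as in the proof of Lemma~\ref{lem:graycodeantichain}, and your direction flag, with reversed traversal $P_r, P_k, \dots, P_1$ using the update $P_i = (P_{i+1}\setminus S_i)\cup L_{i+1}$, makes explicit what the paper leaves implicit. You then keep the Pyramid condition from Lemma~\ref{lem:phi-antichain} and invoke Theorems~\ref{thm:loopless} and~\ref{thm:kth} with $\Delta = \mathcal{O}(n(n+q))$. You are also right that ``ideals'' in the statement should read ``antichains''.
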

		
	\section*{Acknowledgements}
	Sofia Brenner acknowledges funding by a postdoc fellowship by the German Academic Exchange Service (DAAD).
	
	
	\bibliographystyle{amsalpha}
	\bibliography{../bibliography}

\end{document}